\numberwithin{figure}{section}
\newtheorem{theorem}{Theorem}[section]
\newtheorem{lemma}[theorem]{Lemma}
\newtheorem*{remark}{Remark}
\xpatchcmd{\proof}{\itshape}{\normalfont\proofnamefont}{}{}
\newcommand{\proofnamefont}{\bfseries}
\let\oldforall\forall
\let\forall\undefined
\DeclareMathOperator{\forall}{\oldforall}
\let\oldexists\exists
\let\exists\undefined
\DeclareMathOperator{\exists}{\oldexists}
\def\th@plain{%
	\thm@notefont{}% same as heading font
	\itshape % body font
}
\def\th@definition{%
	\thm@notefont{}% same as heading font
	\normalfont % body font
}
\title{Exactly or Approximately Wasserstein Distributionally Robust Estimation According to Wasserstein Radii Being Small or Large  }
\author{Xiao Ding$^1$,~Enbin Song$^1$,~Dunbiao Niu$^2$,~Zhujun Cao$^1$,~Qingjiang Shi$^{3,4}$ \\
\\
{\small $^1$The College of Mathematics, Sichuan University, Chengdu, China } \\
{\small (e-mail: xiaoding\_sc@163.com, e.b.song@163.com, zhujuncao\_sc@163.com)} \\
{\small $^2$The Department of Control Science and Engineering, College of Electronics and Information Engineering, }\\
{\small National Key Laboratory of Autonomous Intelligent Unmanned Systems,} \\ 
{\small Tongji University, Shanghai, China  (e-mail: dunbiaoniu\_sc@163.com) } \\
{\small $^3$The School of Software Engineering, Tongji University, Shanghai, China (e-mail: shiqj@tongji.edu.cn)} \\
{\small $^4$Shenzhen Research Institute of Big Data, Shenzhen, China  }
}
\date{}
\begin{document}

	\maketitle
	
{\bf{Abstract:}}
This paper primarily considers the robust estimation problem under Wasserstein distance constraints on the parameter and noise distributions in the linear measurement model with additive noise,
which can be formulated as an infinite-dimensional nonconvex minimax problem.
We prove that the existence of a saddle point for this problem is equivalent to that for a finite-dimensional minimax problem, and give a 
counterexample demonstrating that the saddle point may not exist. 
Motivated by this observation, we present a verifiable necessary and sufficient condition whose parameters can be derived from a convex problem and its dual.
Additionally, we also introduce a simplified sufficient condition, which intuitively indicates that when the Wasserstein radii are small enough, the saddle point always exists.
In the absence of the saddle point, we solve an finite-dimensional nonconvex minimax problem, obtained by restricting the estimator to be linear. Its optimal value establishes an upper bound on the robust estimation problem, 
while its optimal solution yields a robust linear estimator.
Numerical experiments are also provided to validate our theoretical results.

{\bf{Keywords:}} Robust estimation, Wasserstein distance, Saddle point

\section{Introduction}
Robust estimation is a classical methodology widely employed in statistics \cite{eldar2004robust,CALAFIORE2001573,el1997robust} 
and engineering \cite{Bertsimas2007Constrained,Kan2005power,vorobyov2003robust} 
to deal with parameter uncertainty arising from limited observable data, noise, outliers, and measurement errors.
Within this framework, decision makers are typically assumed to have only partial information about uncertain parameters.
Based on this partial information, an appropriate uncertainty set can be constructed:
for deterministic parameters, the set contains all possible values \cite{ Eldar2004regret}; 
for random parameters, it can be defined by the family of admissible distributions or by the structured uncertainties about some statistics \cite{ Eldar2006Robust}. 
By minimizing the worst-case risk over the uncertainty set, the estimator can be derived that exhibit relative insensitivity to deviations of the actual model from the assumed model \cite{ben1998robust,ben2000robust}.
However, it also implies that the performance of robust estimator depends heavily on the characterization of uncertainties.

An important research direction in robust estimation for random parameters, known as distributionally robust estimation, is designed to minimize the worst-case risk within a specified distributional uncertainty set.
A common assumption in this framework is that the unknown true distribution is proximate to a predefined nominal distribution in some sense, where the nominal distribution captures known characteristics and can often be chosen as the empirical distribution derived from observed samples. 
Numerous metrics are employed to quantify this proximity, 
including Kullback-Leibler (KL) divergence \cite{levy2004robust}, Wasserstein distance \cite{nguyen2023bridging}, $\phi$-divergence \cite{wang2022distributionally}, moments-based similarity \cite{wang2021robust}, and $\epsilon$-contamination \cite{wang2022distributionallyuo}. 

Under this assumption, the distributionally robust estimation problem can be formulated as an infinite-dimensional minimax problem. 
Existing work has explored related topics under various models.
In \cite{levy2004robust}, the authors derive the optimal least-squares estimator for the least-favorable distribution within a KL-divergence ball centered on a given nominal joint distribution of parameter and observation.
This result has been later applied to state-space models to obtain robust Kalman filtering, yielding significant practical implications \cite{levy2013robust,zorzi2017robust}.
Following this, \cite{niu2023marginal} extends this result to the multi-sensor setting by imposing the same KL divergence constraint on each marginal joint distribution of the parameter and the single-sensor observation.
Additionally, \cite{dytso2019MMSE} considers the linear observation model with additive noise and obtains the %least-favorable 
minimum mean square error estimator corresponding to the least favorable distribution in a KL-divergence ball on the parameter distribution.
Typically, these studies identify an optimal solution to a finite-dimensional auxiliary problem and establish its role as a saddle point solution to the infinite-dimensional minimax problem. 
On the other hand, for the Wasserstein ambiguity sets,\cite{shafieezadeh2018wasserstein} considers the distributionally robust estimation problem under a Wasserstein distance ball on the joint distribution, reformulating the minimax problem as an semidefinite programming (SDP) problem via its Nash equilibrium. 
Subsequently, \cite{wang2022distributionally,wang2021robust,wang2022distributionallyuo} considered similar distributionally robust estimation problems defined by more types of metrics in the state space model.

However, as the model complexity increases, such as the linear measurement model with additive noise by assuming separate uncertainties in both the parameter and the noise, which is a common scenario in practical dynamic systems, the distributionally robust estimation problem becomes nonconvex and the existence of saddle points is not always guaranteed.
In this paper, we mainly focus on a Wasserstein distributionally robust estimation (WDRE) problem in the linear measurement model with additive noise, where the parameter and the noise are mutually independent and their true distributions are
unknown, and each constrained in a Wasserstein-distance ball.
This problem has been considered in \cite{nguyen2023bridging} and the authors prove that it is equivalent to a finite-dimensional SDP problem when a saddle
point exists.
However, an example is proposed in this paper to illustrate that the saddle point solution to this problem does not necessarily exist, motivating the persuit for the conditions that allow for its existence.
On the other hand, if the saddle point is absent, the optimal value derived from the SDP problem can merely provide a lower bound for the optimal value of the WDRE problem. 
Consequently, we must either tackle the original infinite-dimensional nonconvex robust estimation problem directly or establish an upper bound by relaxing the minimax problem.

Our main contributions are summarized as follows:
\begin{itemize}
	
	\item We first prove that the saddle point solution of the WDRE problem exists if and only if that of a finite-dimensional minimax problem exists.
	This finite-dimensional problem is formulated by restricting the distribution to be Gaussian and the estimator to be linear. 
	Then we establish a connection among the optimal values of this finite-dimensional problem, the WDRE problem, and their corresponding problems obtained by interchanging the minimization and maximization.
	Furthermore, we provide an example to illustrate that the saddle point solution of the WDRE problem may not exist.
	
	\item We present a necessary and sufficient condition for the existence of the saddle point solution for the WDRE problem, 
	which can be precisely characterized by determining its parameters through the resolution of a convex problem and its dual.
	Moreover, to simplify the assessment of the existence of the saddle point solution, we also provide a straightforward sufficient condition, 
	which indicates that when the Wasserstein radii of the uncertainty sets are small enough, the saddle point always exists.
	
	\item In the absence of the saddle point, we focus on a robust estimation problem with the linear estimator, which can be formulated as a finite-dimensional nonconvex problem.
	%that arises from restricting the estimator to be a linear function. 
	The optimal value of this problem, which serves as an upper bound for the WDRE problem, can be demonstrated to be equivalent to the optimal value of a SDP problem.
	%obtained through a tight SDP relaxation.
	Furthermore, leveraging the primal and dual optimal solutions of this  SDP problem, we construct the optimal solution to the finite-dimensional nonconvex problem, which yields a robust linear estimator.
\end{itemize}

The rest of this paper is organized as follows.
Section 2 formally presents the WDRE problem.
Section 3 provides a theoretical analysis for the existence of the saddle point solution to the WDRE problem.
In cases where the saddle point is absent, Section 4 addresses an upper bound problem that restricts the estimator to a linear function, which provides a robust linear estimator.
Subsequently, Section 5 presents numerical experiments designed to validate the theoretical results.
Finally, Section 6 concludes this paper.

{\textbf{Notation}}:
Let $\mathbb{R}^n$ be the $n$ dimensional real vector space and 
$\mathbb{R}^{n \times m}$ be the $n \times m$ dimensional real matrix space. 
The notation $I_n$ stands for the identity matrix in $\mathbb{R}^{n \times n}$ and
$\bf{0}$ denotes the zero matrix of proper dimension.
For any $A \in \mathbb{R}^{n \times n}$, we use $A^-$ and $A^\dagger$ to denote the generalized inverse and the Moore-Penrose pseudo-inverse of the matrix $A$.
The notation $\mbox{Tr}(A)$ denotes the trace of the matrix $A$ and
the null space of $A$ is denoted by $\mbox{Null}(A)$.
For any $A, B \in \mathbb{R}^{n \times n}$, the inner product of $A$ and $B$ is denoted by $\left\langle A,B \right\rangle =\mbox{Tr}(AB)$ and $A \perp B$ means $\left\langle A,B \right\rangle =0$.
Let $\mathbb{S}^n$ be the space of symmetric matrices in $\mathbb{R}^{n \times n}$ and $\mathbb{S}^n_+$ be the cone of positive semidefinite matrices in $\mathbb{S}^n$.
For any $A \in \mathbb{S}^n$, we use $\lambda_{\min}(A)$ and $\lambda_{\max}(A)$ to denote the minimum and maximum eigenvalues of $A$, respectively.
For any $A \in \mathbb{S}^n_+$, the notation $A^{\frac{1}{2}}$ denotes the unique positive semidefinite square root of $A$.
For any $A,B \in \mathbb{S}^n $, the relation $A \succ B$ $(A \succeq B)$ means that $A-B$ is positive definite (semidefinite).
%Let $\mathcal{F}$ be the class of measurable functions.
For a measurable function of two variables $f(x,y)$, we use $\bigtriangledown_x f (x,y)$ to denote the partial derivative of $f(x,y)$ with respect to $x$.
The notation $\| \cdot \|$ denotes the Euclidean norm.  
%The convex hull of a set $\Omega$ is denoted by $\mbox{conv}\left\lbrace \Omega \right\rbrace$.
A normal distribution with mean vector $\mu$ and covariance matrix $\Sigma$ is denoted by $\mathcal{N}(\mu,\Sigma)$.

\section{Problem Formulation}
Consider the linear measurement model with additive noise 
\begin{equation} y=H x +w, \end{equation}
where $x \in \mathbb{R}^n$ is the unknown parameter, $y \in \mathbb{R}^m$ is the observation, $H \in \mathbb{R}^{m \times n}$ is the known observation matrix and the noise $w \in \mathbb{R}^m$ is independent of $x$.
%Let $\mathcal{F}$ be the class of measurable functions and 
Let a Lebesgue measurable function $f:\mathbb{R}^m \to \mathbb{R}^n$ be an estimator that estimates parameter $x$ from the noisy observation $y$ and  
\begin{equation}
\mathcal{F} \triangleq \left\{ f \left| f:\mathbb{R}^m \to \mathbb{R}^n \text{ is a Lebesgue measurable function}  \right\} \right. 
\label{fc} \end{equation}
be the set of all estimators. 
Moreover, let $\mathcal{P}_d$ denote the set of probability distributions of a random variable on $\mathbb{R}^d$ with finite second order moments.
Then for a given joint distribution of $x$ and $w$ denoted by $P \in \mathcal{P}_{n+m}$, the mean square error (MSE) obtained by the estimator $f$ is denoted by
\begin{equation}
\mbox{mse}(f,P) \triangleq \int_{\mathbb{R}^n \times \mathbb{R}^m} \left\| f(Hx+w)-x\right\| ^2 \mathrm{d} P(x,w).
\label{mse} \end{equation}
Furthermore, the minimum mean square error (MMSE) of the distribution $P$, as determined by the optimal estimator corresponding to $P$, is provided by
$$\mbox{mmse}(P) \triangleq 
\inf_{f \in \mathcal{F}} \int_{\mathbb{R}^n \times \mathbb{R}^m} \left\| f(Hx+w)-x\right\| ^2 \mathrm{d} P(x,w).$$

Similar to \cite{nguyen2023bridging}, assume that the marginal distributions $P_x$ of the parameter $x$ and $P_w$ of the noise $w$ are unknown. However, given the nominal distributions $\hat{P}_x$ and $\hat{P}_w$, the uncertainty about $P_x$ and $P_w$ can be quantified by their corresponding bounded Wasserstein distances from $\hat{P}_x$ and $\hat{P}_w$.
In this paper, we focus exclusively on the type-2 Wasserstain distance, which is defined for two distributions $P$ and $\hat{P}$ on $\mathbb{R}^d$ as follows:
$$W_2\left( P,\hat{P} \right) \triangleq \inf_{\pi \in \Pi \left( P, \hat{P} \right) }
\left( \int_{\mathbb{R}^d \times \mathbb{R}^d} \left\| \xi_1-\xi_2\right\| ^2 \pi(\mathrm{d}\xi_1,\mathrm{d}\xi_2) \right) ^\frac{1}{2},$$
where $\Pi \left( P, \hat{P} \right)$ denotes the set of all joint distributions of the random variables $\xi_1$ and $\xi_2$ that have marginal distributions $P$ and $\hat{P}$, respectively.
Specifically, for the given nominal Gaussian marginal distributions $\hat{P}_x$ and $\hat{P}_w$, 
due to the independence of $x$ and $w$,
the nominal joint distribution of $x$ and $w$ is $\hat{P}=\hat{P}_x \times \hat{P}_w$, where the product of two distributions means that for any $(x,w)\in \mathbb{R}^n \times \mathbb{R}^m$, $\hat{P}(x,w)=\hat{P}_x(x) \hat{P}_w(w)$.
Taking the Wasserstain radii $\rho_x \geq 0$ and $\rho_w \geq 0$, we assume that the true joint distribution $P$ belongs to the following set
\begin{equation}
%P \in 
\mathbb{B}(\hat{P})\triangleq \left\{ 
P_x \times P_w \in \mathcal{P}_n \times \mathcal{P}_m
\left|
\begin{aligned}
&W_2\left( P_x,\hat{P}_x \right)  \leq \rho_x, \
\hat{P}_x=\mathcal{N}\left(\hat{\mu}_x,\hat{\Sigma}_x \right) \\
&W_2\left( P_w,\hat{P}_w \right) \leq \rho_w,\ 
\hat{P}_w=\mathcal{N}\left( \hat{\mu}_w,\hat{\Sigma}_w \right) 
\end{aligned} \right\} \right.,
\label{set} \end{equation}
where $\mathcal{N}(\mu,\Sigma)$ denotes a normal distribution with mean vector $\mu$ and covariance matrix $\Sigma$.
We intend to find the optimal estimator of the least favorable distribution in set (\ref{set}), i.e., 
considering the following robust estimation problem
\begin{equation}
\inf_{f \in \mathcal{F}} \sup_{P \in \mathbb{B}(\hat{P})}
\mbox{mse} \left( f,P \right),  
\label{minmax} \tag{WDRE}  \end{equation}
where the objective function is given by (\ref{mse}), 
the constraint sets are defined by (\ref{fc}) and (\ref{set}), respectively, 
and the following assumption will be made in this paper.

{\bf{Assumption 1.} }

i) The Wasserstein radii $\rho_x>0$ and $\rho_w>0$. 

ii) The nominal covariance
matrices $\hat{\Sigma}_x$ and  $\hat{\Sigma}_w$ are positive semidefinite.

%Theoretical Analysis on
\section{The Existence of the Saddle Point Solution for (\ref{minmax})}
In this section, we devote to a theoretical analysis on the existence of the saddle point solution for the robust estimation problem (\ref{minmax}). 
To begin with, we demonstrate that the saddle point solution for (\ref{minmax}) exists if and only if a finite-dimensional minimax problem has a saddle point solution. 
Subsequently, we show that the saddle point solution does not always exist for (\ref{minmax}) by means of a counterexample. 
Following this, we present a necessary and sufficient condition for the existence of the saddle point solution. 
Furthermore, we propose a straightforward sufficient condition,
which does not require complex calculations and allows us to ascertain the existence of the saddle point solution when it is satisfied. 

\subsection{An Finite-dimensional Problem Related to the Existence of the Saddle Point Solution for (\ref{minmax})}

We first glance at problem (\ref{minmax}) as an optimization problem with infinite-dimensional variables being the distribution $P$ and the estimator $f$.
Notice that the objective function of (\ref{minmax}) is linear in $P \in \mathbb{B}(\hat{P})$ for a given $f$, and convex in $f \in \mathcal{F}$ for a given $P$. 
However, according to the definition of $\mathbb{B}(\hat{P})$ in (\ref{set}), the elements in the  constraint set of the joint distribution $\mathbb{B}(\hat{P})$ depends on the elements in the two convex Wasserstein balls, which is generally not convex. 
Consequently, it is hard to apply current theoretical results (such as Sion's minimax theorem) to directly deduce the existence of the saddle point solution for (\ref{minmax}).

In order to discuss the existence of the saddle point solution for (\ref{minmax}), 
we first formulate a finite-dimensional minimax problem. 
Specifically, we restrict the distribution to the set consisting only of Gaussian distributions
\begin{equation}
\mathbb{B}_\mathcal{N}\left( \hat{P}\right)  \triangleq
\left\lbrace  P_x \times P_w \in \mathcal{P}_n \times \mathcal{P}_m
\left| \begin{aligned}
&W_2\left( P_x,\hat{P}_x \right)  \leq \rho_x, \
\hat{P}_x=\mathcal{N}\left(\hat{\mu}_x,\hat{\Sigma}_x \right),
P_x=\mathcal{N}\left(\mu_x,\Sigma_x \right) \\
&W_2\left( P_w,\hat{P}_w \right) \leq \rho_w,\ 
\hat{P}_w=\mathcal{N}\left( \hat{\mu}_w,\hat{\Sigma}_w \right),
P_w=\mathcal{N}\left(\mu_w,\Sigma_w \right) 
\end{aligned}  \right\rbrace. \right. 
\label{Guass class} \end{equation}
Correspondingly, the estimator is limited to the set of linear functions 
\begin{equation}
\mathcal{F_L} \triangleq \left\lbrace f \in \mathcal{F} \left| 
\exists A \in \mathbb{R}^{n \times m}, b \in \mathbb{R}^n 
\mbox{ with } f(y)=Ay+b, \ \forall y \in \mathbb{R}^m \right\rbrace. \right.
\label{linear class} \end{equation}
Then we introduce an auxiliary problem
\begin{equation}
\inf_{f \in \mathcal{F_L}} 
\sup_{P \in \mathbb{B}_\mathcal{N}\left( \hat{P}\right) }
\mbox{mse} \left( f,P \right),  
\label{minmaxG} \tag{LG-WDRE} \end{equation}
where the objective function is given by (\ref{mse}), and the constraint sets are defined by (\ref{linear class}) and (\ref{Guass class}), respectively.
Essentially, (\ref{minmaxG}) is a finite-dimensional robust estimation problem, because the linear estimator $f$ can be parameterized by the matrix $A$ and the vector $b$,
and the Gaussian distributions can also be fully described by their mean vectors and covariance matrices.
Next, we shall explore the equivalence between the existence of the saddle point solution in (\ref{minmax}) and that in (\ref{minmaxG}).

First, we review a property of Wasserstain distance. 
%which will be used in the proof of the following theorem.
\begin{lemma}{(\citep[Theorem 4]{kuhn2019wasserstein})} \label{WDp}
	Assume that the nominal distribution is Guassian, i.e., $P_0=\mathcal{N}(\mu_0,\Sigma_0)$.
	For an arbitrary distribution $P$ with mean vector $\mu_P$ and covariance matrix $\Sigma_P$, we have
	$$W_2 \left( P , P_0 \right)  \geq 
	W_2 \left( \mathcal{N} \left( \mu_P,\Sigma_P\right) , P_0 \right).$$
\end{lemma}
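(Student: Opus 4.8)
\medskip
\noindent\textbf{Proof strategy.} The plan is to reduce the claim to the classical Gelbrich (Bures--Wasserstein) lower bound for $W_2$ and then to recognize its right-hand side as the $W_2$ distance between $P_0$ and the Gaussian carrying $P$'s first two moments. First I would fix an arbitrary coupling $\pi\in\Pi(P,P_0)$, write $(\xi_1,\xi_2)\sim\pi$ (so $\xi_1$ has mean $\mu_P$ and covariance $\Sigma_P$, and $\xi_2\sim\mathcal{N}(\mu_0,\Sigma_0)$), and split the transport cost into a bias and a fluctuation part:
$$\int_{\mathbb{R}^d\times\mathbb{R}^d}\|\xi_1-\xi_2\|^2\,\pi(\mathrm{d}\xi_1,\mathrm{d}\xi_2)=\|\mu_P-\mu_0\|^2+\mbox{Tr}(\Sigma_P)+\mbox{Tr}(\Sigma_0)-2\,\mbox{Tr}(C),$$
where $C\triangleq\mathbb{E}\big[(\xi_1-\mu_P)(\xi_2-\mu_0)^{\top}\big]$. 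Since $C$ is a sub-block of the (positive semidefinite) joint covariance of $(\xi_1,\xi_2)$, every attainable $C$ satisfies
$$\begin{bmatrix}\Sigma_P & C\\ C^{\top} & \Sigma_0\end{bmatrix}\succeq 0,$$
so, minimizing over couplings, $W_2^2(P,P_0)\geq\|\mu_P-\mu_0\|^2+\mbox{Tr}(\Sigma_P+\Sigma_0)-2\,m^\star$, where $m^\star$ is the maximum of $\mbox{Tr}(C)$ over all $C$ satisfying the block-PSD constraint.

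Next I would evaluate $m^\star$. When $\Sigma_0\succ0$ and $\Sigma_P\succ0$, a Schur-complement argument shows the block-PSD constraint is equivalent to $C=\Sigma_P^{\frac{1}{2}}L\,\Sigma_0^{\frac{1}{2}}$ for some $L$ of operator norm at most $1$; then $\mbox{Tr}(C)=\langle\Sigma_P^{\frac{1}{2}}\Sigma_0^{\frac{1}{2}},L\rangle$, and maximizing over the operator-norm unit ball gives the nuclear norm $m^\star=\|\Sigma_P^{\frac{1}{2}}\Sigma_0^{\frac{1}{2}}\|_{*}=\mbox{Tr}\big((\Sigma_0^{\frac{1}{2}}\Sigma_P\Sigma_0^{\frac{1}{2}})^{\frac{1}{2}}\big)$. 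The degenerate cases — relevant here, since Assumption 1 only requires $\hat\Sigma_x,\hat\Sigma_w\succeq0$ — are handled either by the perturbation $\Sigma_0+\epsilon I_d\to\Sigma_0$, $\Sigma_P+\epsilon I_d\to\Sigma_P$ as $\epsilon\to0^{+}$ together with continuity of both sides, or by replacing inverses with Moore--Penrose pseudo-inverses $\Sigma_0^{\dagger}$ and tracking the range conditions. This yields
$$W_2^2(P,P_0)\ \geq\ \|\mu_P-\mu_0\|^2+\mbox{Tr}\Big(\Sigma_P+\Sigma_0-2\big(\Sigma_0^{\frac{1}{2}}\Sigma_P\Sigma_0^{\frac{1}{2}}\big)^{\frac{1}{2}}\Big).$$

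Finally I would observe that the right-hand side is precisely $W_2^2\big(\mathcal{N}(\mu_P,\Sigma_P),P_0\big)$: this is the well-known closed form for the type-2 Wasserstein distance between two Gaussians, and it follows from the same decomposition, since the bound above is attained when $P=\mathcal{N}(\mu_P,\Sigma_P)$, the maximizing coupling being the joint Gaussian whose cross-covariance achieves $m^\star$ (equivalently, the affine Monge map between the two Gaussians, with $\Sigma_0^{\dagger}$ in place of $\Sigma_0^{-1}$ if $\Sigma_0$ is singular). Taking square roots gives $W_2(P,P_0)\geq W_2(\mathcal{N}(\mu_P,\Sigma_P),P_0)$. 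I expect the only real obstacle to be the matrix-analytic core — justifying the block-PSD description of admissible cross-covariances, computing $m^\star$, and cleanly handling the rank-deficient covariances allowed by Assumption 1 — while the rest is bookkeeping; note also that the statement is quoted verbatim as \citep[Theorem 4]{kuhn2019wasserstein}, so one may alternatively just cite it.
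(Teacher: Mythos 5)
The paper does not prove this lemma at all --- it is quoted verbatim as \citep[Theorem 4]{kuhn2019wasserstein} and used as a black box, so there is no in-paper argument to compare against. Your sketch is the standard derivation of the Gelbrich lower bound that underlies that cited theorem (coupling decomposition into bias plus fluctuation, the block-PSD characterization of admissible cross-covariances, maximization of $\mbox{Tr}(C)$ giving the nuclear norm $\mbox{Tr}\big((\Sigma_0^{1/2}\Sigma_P\Sigma_0^{1/2})^{1/2}\big)$, and attainment of the bound by the jointly Gaussian coupling), and it is correct, including the perturbation handling of singular covariances, which matters here since Assumption~1 only requires $\hat{\Sigma}_x,\hat{\Sigma}_w\succeq{\bf 0}$.
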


Lemma \ref{WDp} shows that for any distribution in $\mathbb{B}(\hat{P})$ 
%whose marginal nominal distributions are Guassian
, the Gaussian distribution defined by its mean vector and covariance matrix is still in  $\mathbb{B}(\hat{P})$. 
Then we have the following lemma.

\begin{lemma} \label{sort}
	Suppose that Assumption 1 holds. The optimal values of (\ref{minmax}), (\ref{minmaxG}) and their corresponding problems obtained by exchanging the minimization and maximization are ordered as follows:
	\begin{equation}
	\sup_{P \in \mathbb{B}_\mathcal{N}(\hat{P})} 
	\inf_{f \in \mathcal{F_L}} \text{mse}(f,P) =
	\sup_{P \in \mathbb{B}(\hat{P})} 
	\inf_{f \in \mathcal{F}} \mbox{mse}(f,P) \leq 
	\inf_{f \in \mathcal{F}}
	\sup_{P \in \mathbb{B}(\hat{P})} \mbox{mse}(f,P) \leq
	%\inf_{f \in \mathcal{F_L}} 
	%\sup_{P \in \mathbb{B}(\hat{P})} \mbox{mse}(f,P) =
	\inf_{f \in \mathcal{F_L}} 
	\sup_{P \in \mathbb{B}_\mathcal{N}(\hat{P})} \mbox{mse}(f,P).
	\label{lemma 2.2} \end{equation}
\end{lemma}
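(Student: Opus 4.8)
The plan is to name the four quantities appearing in (\ref{lemma 2.2}), from left to right, as $A$, $B$, $C$, $D$, so that the claim reads $A = B \le C \le D$, and to establish the three relations one at a time. Everything rests on two elementary structural facts about the model: (a) for a linear estimator $f(y)=Ay+b$, the integrand $\|A(Hx+w)+b-x\|^2$ in (\ref{mse}) is a quadratic polynomial in $(x,w)$, so $\text{mse}(f,P_x\times P_w)$ depends only on the mean vectors and covariance matrices of $P_x$ and $P_w$ (the cross-moment is the product of the means since $x\perp w$); and (b) if $P_x\times P_w$ is Gaussian, then $(x,y)=(x,Hx+w)$ is jointly Gaussian, hence $\mathbb{E}_P[x\mid y]$ is affine in $y$ and the MMSE estimator of $P$ lies in $\mathcal{F_L}$, i.e.\ $\text{mmse}(P)=\inf_{f\in\mathcal{F_L}}\text{mse}(f,P)$.

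For the middle inequality $B\le C$ I would just invoke the generic weak minimax inequality: for every $f\in\mathcal{F}$ and $P\in\mathbb{B}(\hat P)$ one has $\inf_{f'\in\mathcal{F}}\text{mse}(f',P)\le\text{mse}(f,P)\le\sup_{P'\in\mathbb{B}(\hat P)}\text{mse}(f,P')$, and taking $\sup_P$ on the left and $\inf_f$ on the right gives $B\le C$. For the rightmost inequality $C\le D$, I would fix $f\in\mathcal{F_L}$ and, given any $P=P_x\times P_w\in\mathbb{B}(\hat P)$, replace $P_x,P_w$ by the Gaussians $\tilde P_x,\tilde P_w$ having the same first two moments; Lemma~\ref{WDp} gives $W_2(\tilde P_x,\hat P_x)\le W_2(P_x,\hat P_x)\le\rho_x$ and likewise for $w$, so $\tilde P:=\tilde P_x\times\tilde P_w\in\mathbb{B}_\mathcal{N}(\hat P)$, while fact (a) gives $\text{mse}(f,P)=\text{mse}(f,\tilde P)$. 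Hence $\sup_{P\in\mathbb{B}(\hat P)}\text{mse}(f,P)=\sup_{P\in\mathbb{B}_\mathcal{N}(\hat P)}\text{mse}(f,P)$ for every $f\in\mathcal{F_L}$, so $D=\inf_{f\in\mathcal{F_L}}\sup_{P\in\mathbb{B}(\hat P)}\text{mse}(f,P)\ge\inf_{f\in\mathcal{F}}\sup_{P\in\mathbb{B}(\hat P)}\text{mse}(f,P)=C$, using $\mathcal{F_L}\subseteq\mathcal{F}$.

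The substantive relation is the leftmost equality $A=B$. Note $B=\sup_{P\in\mathbb{B}(\hat P)}\text{mmse}(P)$ and, by fact (b), $A=\sup_{P\in\mathbb{B}_\mathcal{N}(\hat P)}\inf_{f\in\mathcal{F_L}}\text{mse}(f,P)=\sup_{P\in\mathbb{B}_\mathcal{N}(\hat P)}\text{mmse}(P)$; since $\mathbb{B}_\mathcal{N}(\hat P)\subseteq\mathbb{B}(\hat P)$, the inequality $A\le B$ is immediate. For $A\ge B$ I would prove that a Gaussian is worst-case for the MMSE among distributions with prescribed first two moments: given $P\in\mathbb{B}(\hat P)$, take $\tilde P\in\mathbb{B}_\mathcal{N}(\hat P)$ as above and let $f^\star\in\mathcal{F_L}$ be the affine MMSE estimator of $\tilde P$; then fact (a) applied to $f^\star$ yields $\text{mmse}(\tilde P)=\text{mse}(f^\star,\tilde P)=\text{mse}(f^\star,P)\ge\inf_{f\in\mathcal{F}}\text{mse}(f,P)=\text{mmse}(P)$. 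Therefore $\sup_{P'\in\mathbb{B}_\mathcal{N}(\hat P)}\text{mmse}(P')\ge\text{mmse}(P)$ for every $P\in\mathbb{B}(\hat P)$, which is exactly $A\ge B$.

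The hard part will be this last paragraph: I must argue carefully that the affine conditional mean genuinely is an element of $\mathcal{F_L}$ (this is precisely where Gaussianity of the linear model enters), and that in the moment-matching step fact (a) is applied to the quadratic $\|f^\star(Hx+w)-x\|^2$ with $f^\star$ \emph{linear}, not to the nonlinear MMSE map. The remaining bookkeeping is routine: the directions in which $\inf$ and $\sup$ move under the inclusions $\mathcal{F_L}\subseteq\mathcal{F}$ and $\mathbb{B}_\mathcal{N}(\hat P)\subseteq\mathbb{B}(\hat P)$, and the finiteness of all four quantities, which follows because every $P\in\mathbb{B}(\hat P)$ has second moments bounded in terms of $\hat\mu_x,\hat\Sigma_x,\rho_x$ (and similarly for $w$).
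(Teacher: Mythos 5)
Your proposal is correct and follows essentially the same route as the paper's proof: the equality is obtained by moment-matching each $P\in\mathbb{B}(\hat P)$ to a Gaussian kept feasible by Lemma~\ref{WDp} and comparing MMSEs via the affine optimal estimator of the Gaussian, the middle inequality is weak duality, and the last inequality combines $\mathcal{F_L}\subseteq\mathcal{F}$ with the fact that a linear estimator's MSE depends only on the first two moments. The only difference is presentational (your $A,B,C,D$ bookkeeping versus the paper's three steps), not mathematical.
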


\begin{proof}
	We divide the proof into three parts, corresponding to the equality or inequalities in (\ref{lemma 2.2}), respectively.
	
	{\bf Step 1:} We begin by proving the equality in (\ref{lemma 2.2}).
	
	For any $P \in \mathbb{B}(\hat{P})$, let  $P_\mathcal{N}$ denote a Gaussian distribution $\mathcal{N}(\mu_{P},\Sigma_{P})$ which has the same mean vector $\mu_{P}$ and covariance matrix  $\Sigma_{P}$ as $P$. 
	Then $P_\mathcal{N}$  belongs to $\mathbb{B}(\hat{P})$ according to Lemma \ref{WDp}.
	Define 
	$$\mathbb{B}_P \hspace*{-1mm} \left( \hat{P}\right) \hspace*{-1mm}  \triangleq \hspace*{-1mm}
	\left\lbrace P_\mathcal{N}=\mathcal{N}\left( \mu_{P},\Sigma_{P}\right)  \left| \begin{aligned}
	&\exists P \in \mathbb{B}\left( \hat{P}\right) \\ &\mbox{ with mean vector } \mu_{P} \\ & \mbox{ and covariance matrix }\Sigma_{P} \end{aligned}
	\right\rbrace  %\subseteq \mathbb{B}_\mathcal{N}\left( \hat{P}\right) 
	\right. \hspace*{-1mm} . $$
	Then we have $\mathbb{B}_P (\hat{P}) \subseteq \mathbb{B}_\mathcal{N}(\hat{P})$.
	Combined with $\mathbb{B}_\mathcal{N}(\hat{P}) \subseteq \mathbb{B}_P( \hat{P}) $ obtained from the definition of $\mathbb{B}_\mathcal{N}( \hat{P}) $, we derive $\mathbb{B}_\mathcal{N}(\hat{P})= \mathbb{B}_P(\hat{P}) $.
	
	On the other hand, let the linear function $f_\mathcal{N}$ denote the optimal estimator for $P_\mathcal{N}$. 
	%which is also the optimal linear estimator for $P$ \citep[Lemma 1.1]{nmdef2012networked}. 
	Then we have
	\begin{equation}
	\mbox{mmse}(P_\mathcal{N})=
	\mbox{mse}(f_\mathcal{N},P_\mathcal{N})=
	\mbox{mse}(f_\mathcal{N},P) \geq
	\inf_{f \in \mathcal{F}} \mbox{mse} (f,P) =
	\mbox{mmse}(P),
	\label{ieqc} \end{equation}
	where the second equality holds because for a linear estimator $f(y)=Ay+b$, the MSE given by
	$$\mbox{mse}(f,P)=\int_{\mathbb{R}^n \times \mathbb{R}^m}
	\left\| x-(AHx+Aw+b)\right\| ^2 \mathrm{d}P\left( x,w\right) $$ 
	depends only on the first and second moments of the distribution $P$.  Thus, we obtain
	$$ \sup_{P \in \mathbb{B}(\hat{P})} 
	\inf_{f \in \mathcal{F}} \mbox{mse}(f,P)=
	\sup_{P \in \mathbb{B}(\hat{P})} \mbox{mmse}(P) \leq 
	\sup_{P_\mathcal{N} \in \mathbb{B}_P(\hat{P})} \mbox{mmse}(P_\mathcal{N}) =
	\sup_{P \in \mathbb{B}_\mathcal{N}(\hat{P})} \mbox{mmse}(P)= 
	\sup_{P \in \mathbb{B}_\mathcal{N}(\hat{P})} 
	\inf_{f \in \mathcal{F}} \text{mse}(f,P), $$
	where the inequality follows from (\ref{ieqc}). %and the second inequality is due to $\{P_\mathcal{N}=\mathcal{N}(\mu_{P},\Sigma_{P}) | P \in \mathbb{B}(\hat{P}) \} \subseteq \mathbb{B}_\mathcal{N}(\hat{P})$.
	On the other hand, since $\mathbb{B}_\mathcal{N}(\hat{P}) \subset \mathbb{B}(\hat{P})$, we derive
	$$\sup_{P \in \mathbb{B}_\mathcal{N}(\hat{P})} 
	\inf_{f \in \mathcal{F}} \text{mse}(f,P) \leq
	\sup_{P \in \mathbb{B}(\hat{P})} 
	\inf_{f \in \mathcal{F}} \mbox{mse}(f,P).$$
	Consequently, we have $$\sup_{P \in \mathbb{B}_\mathcal{N}(\hat{P})} 
	\inf_{f \in \mathcal{F}} \text{mse}(f,P) =
	\sup_{P \in \mathbb{B}(\hat{P})} 
	\inf_{f \in \mathcal{F}} \mbox{mse}(f,P)$$ 
	and then the equality in (\ref{lemma 2.2}) arises from the optimality of the linear estimator under the Gaussian distribution.
	
	{\bf Step 2:} The first inequality in (\ref{lemma 2.2}) is evident due to weak duality.
	
	{\bf Step 3:} Finally, we shall prove the last inequality in (\ref{lemma 2.2}).
	
	Since $\mathcal{F_L} \subset \mathcal{F}$, we have
	\begin{equation} \inf_{f \in \mathcal{F}}
	\sup_{P \in \mathbb{B}(\hat{P})} \mbox{mse}(f,P) \leq
	\inf_{f \in \mathcal{F_L}} 
	\sup_{P \in \mathbb{B}(\hat{P})} \mbox{mse}(f,P).
	\label{2.2.3} \end{equation}
	For any $P \in \mathbb{B}(\hat{P})$ with the mean vector $\mu_{P}$ and covariance matrix  $\Sigma_{P}$, the Gaussian distribution 
	$P_\mathcal{N} = \mathcal{N}(\mu_{P}, \Sigma_{P}) $ 
	belonging to $\mathbb{B}(\hat{P})$ according to Lemma \ref{WDp}.
	Then since MSE is only related to the first and second moments of the distribution $P$ for a linear estimator $f \in \mathcal{F_L}$, it holds that 
	$\mbox{mse}(P, f) = \mbox{mse}(P_\mathcal{N}, f)$, 
	which further implies that 
	\begin{equation}
	\inf_{f \in \mathcal{F_L}} 
	\sup_{P \in \mathbb{B}(\hat{P})} \mbox{mse}(f,P) =
	\inf_{f \in \mathcal{F_L}} 
	\sup_{P \in \mathbb{B}_\mathcal{N}(\hat{P})} \mbox{mse}(f,P).
	\label{rlr} \end{equation}
	Thus, the last inequality in (\ref{lemma 2.2}) holds by combining (\ref{2.2.3}) and (\ref{rlr}).
\end{proof}

\begin{remark}
	Formula (4.1) in \cite{nguyen2023bridging} presents a conclusion similar to the one above. However, in Lemma \ref{sort}, we focus on the relationship between the optimal values of (\ref{minmax}), (\ref{minmaxG}) and their corresponding problems obtained by exchanging the minimization and maximization. We further prove the equality of the optimal values of the two maximin problems, which is key to proving the following theorem.
\end{remark}

With the help of Lemma \ref{sort}, we can now establish the equivalence between the existence of the saddle point solution in (\ref{minmax}) and that in (\ref{minmaxG}).
\begin{theorem} \label{equp}
	Suppose that Assumption 1 holds. The saddle point solution of (\ref{minmax}) exists if and only if the saddle point solution of (\ref{minmaxG}) exists.
\end{theorem}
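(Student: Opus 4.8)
The plan is to translate ``admits a saddle point'' into the two conditions ``strong duality holds'' and ``the outer infimum and the outer supremum are both attained'', and then to exploit Lemma~\ref{sort}. Write $v_1,v_2,v_3,v_4$ for the four quantities compared in Lemma~\ref{sort}, so that $v_1=v_2\le v_3\le v_4$. My strategy is: for (\ref{minmaxG}) \emph{both} attainment requirements turn out to be automatic, so (\ref{minmaxG}) has a saddle point precisely when $v_1=v_4$, i.e.\ (by Lemma~\ref{sort}) when $v_1=v_2=v_3=v_4$; and the existence of a saddle point of (\ref{minmax}) is exactly what is needed to force that last chain of equalities, while conversely a saddle point of (\ref{minmaxG}) manufactures one for (\ref{minmax}). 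I will use repeatedly that the MSE of a linear estimator depends on $P$ only through its first two moments (as in the proof of Lemma~\ref{sort}) and that the MMSE estimator of a Gaussian model is affine, together with Lemma~\ref{WDp}.

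First I would record the two free attainments. The value $v_2=\sup_{P\in\mathbb B(\hat P)}\mathrm{mmse}(P)$ (and likewise $v_1$) is always attained by a Gaussian: by (\ref{ieqc}) one has $\mathrm{mmse}(P)\le\mathrm{mmse}(\mathcal N(\mu_P,\Sigma_P))$, the restriction of $\mathrm{mmse}$ to Gaussians is an infimum over linear estimators of a jointly continuous function, hence upper semicontinuous, and depends only on $(\Sigma_x,\Sigma_w)$, while the admissible covariance pairs for $\mathbb B_\mathcal N(\hat P)$ form a compact set. The value $v_4$ of (\ref{minmaxG}) is also always attained: for $f(y)=Ay+b$ the map $(A,b)\mapsto\sup_{P\in\mathbb B_\mathcal N(\hat P)}\mathrm{mse}(f,P)$ is finite, continuous and coercive on $\mathbb R^{n\times m}\times\mathbb R^n$, coercivity following because $\rho_w>0$ lets the adversary replace $\hat\Sigma_w$ by $\hat\Sigma_w+\varepsilon I_m$ for a fixed small $\varepsilon>0$ without leaving the Wasserstein ball, which already bounds the supremum below by $\|(I-AH)\hat\mu_x-A\hat\mu_w-b\|^2+\varepsilon\,\mathrm{Tr}(AA^{\!\top})$. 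Hence (\ref{minmaxG}) has a saddle point if and only if $v_1=v_4$, equivalently $v_1=v_2=v_3=v_4$.

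For the ``if'' direction I would start from a saddle point $(f_L^\ast,P_\mathcal N^\ast)$ of (\ref{minmaxG}); then all four values coincide, $f_L^\ast$ attains the outer infimum $v_4$, and $P_\mathcal N^\ast$ attains $\sup_{P\in\mathbb B_\mathcal N(\hat P)}\mathrm{mmse}(P)=v_1$. Since $f_L^\ast$ is linear, $\sup_{P\in\mathbb B(\hat P)}\mathrm{mse}(f_L^\ast,P)=\sup_{P\in\mathbb B_\mathcal N(\hat P)}\mathrm{mse}(f_L^\ast,P)=v_4=v_3$, so $f_L^\ast\in\mathcal F$ attains the outer infimum of (\ref{minmax}); since $P_\mathcal N^\ast$ is Gaussian, $\inf_{f\in\mathcal F}\mathrm{mse}(f,P_\mathcal N^\ast)=\mathrm{mmse}(P_\mathcal N^\ast)=v_1=v_2$, so $P_\mathcal N^\ast\in\mathbb B(\hat P)$ attains the outer supremum of the maximin problem of (\ref{minmax}); with strong duality $v_2=v_3$ this makes $(f_L^\ast,P_\mathcal N^\ast)$ a saddle point of (\ref{minmax}). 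For the ``only if'' direction, a saddle point of (\ref{minmax}) yields $v_2=v_3$ and an $f^\ast\in\mathcal F$ attaining $\inf_{f\in\mathcal F}\sup_{P\in\mathbb B(\hat P)}\mathrm{mse}(f,P)=v_3$. Let $P^\circ=\mathcal N(\mu_{P^\ast},\Sigma_{P^\ast})$ be the Gaussianisation of any maximiser $P^\ast$ of $v_2$; by (\ref{ieqc}) and $P^\circ\in\mathbb B(\hat P)$ one gets $\mathrm{mmse}(P^\circ)=v_2$, so $P^\circ$ too attains the outer supremum of the maximin problem, and since $v_2=v_3$ the pair $(f^\ast,P^\circ)$ is a saddle point of (\ref{minmax}). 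Then $f^\ast$ minimises $\mathrm{mse}(\cdot,P^\circ)$ over all of $\mathcal F$, i.e.\ it is the MMSE estimator under the jointly Gaussian law $P^\circ$ and hence affine (after modification on a $P^\circ$-null set), so $f^\ast\in\mathcal F_L$ and $v_4\le\sup_{P\in\mathbb B(\hat P)}\mathrm{mse}(f^\ast,P)=v_3$; with $v_3\le v_4$ from Lemma~\ref{sort} all four values coincide, and by the previous paragraph (\ref{minmaxG}) has a saddle point.

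The load-bearing step is the ``only if'' direction: one must know that the worst-case distribution may be taken Gaussian, which forces the estimator against it to be affine, together with the measure-theoretic check that passing from $f^\ast$ to its affine representative does not enlarge $\sup_{P\in\mathbb B(\hat P)}\mathrm{mse}(f^\ast,P)$. I expect this to be the only real technicality: it is immediate once $P^\circ$ is chosen with nondegenerate observation covariance $H\Sigma_x^\circ H^{\!\top}+\Sigma_w^\circ\succ0$ (the representative then agrees with $f^\ast$ Lebesgue-a.e., hence $P$-a.e.\ for every $P\in\mathbb B(\hat P)$ with an absolutely continuous marginal, and the remaining distributions are reached by approximation inside $\mathbb B_\mathcal N(\hat P)$ using continuity of $\mathrm{mse}(f^\ast,\cdot)$ in the moments), and the degenerate case needs a little extra care. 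The other point worth flagging is that the coercivity argument in the second paragraph — and hence the automatic attainment of $v_4$ — is precisely where Assumption~1(i), specifically $\rho_w>0$, is used.
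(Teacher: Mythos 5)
Your proposal is correct in substance and rests on the same three ingredients as the paper's proof — Lemma~\ref{sort}, Gaussianization of the least favorable distribution via Lemma~\ref{WDp} and (\ref{ieqc}), and the affinity of the MMSE estimator under a Gaussian law — but it organizes the argument differently. The paper constructs a saddle point of (\ref{minmaxG}) directly from one of (\ref{minmax}) by appealing to the Cartesian-product property of saddle-point sets; you instead reduce saddle-point existence for (\ref{minmaxG}) to the single scalar identity $v_1=v_4$ by proving that both outer attainments are automatic (compactness of the admissible covariance set for the outer supremum, coercivity supplied by $\rho_w>0$ for the outer infimum), and then show that each direction of the equivalence forces the full chain $v_1=v_2=v_3=v_4$. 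This buys a cleaner logical skeleton, makes explicit exactly where Assumption~1(i) enters, and supplies attainment arguments the paper leaves implicit. The technicality you flag — replacing the minimax-optimal $f^\ast$ by its affine $P^\circ$-a.e.\ representative without increasing $\sup_{P\in\mathbb{B}(\hat P)}\mbox{mse}(\cdot,P)$ — is genuine, and your Lebesgue-a.e.\ argument for the case $H\Sigma_x^\circ H^{\top}+\Sigma_w^\circ\succ{\bf 0}$ is the right repair; note that the paper's proof does not actually close this step either, since the Cartesian-product property only yields that every element of the minimax argmin paired with $P_{\mathcal N}^\ast$ is a saddle point, not that the \emph{linear} MMSE estimator of $P_{\mathcal N}^\ast$ belongs to that argmin. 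The one place your argument is genuinely incomplete is the degenerate case where $H\Sigma_x^\circ H^{\top}+\Sigma_w^\circ$ is singular (which Assumption~1(ii) does not exclude): there $f^\ast$ is pinned down only on a proper affine subspace, the affine representative may disagree with $f^\ast$ on a set of positive measure for other admissible distributions, and the comparison $\sup_{P}\mbox{mse}(g,P)\le v_3$ no longer follows from the a.e.\ identification; you acknowledge this but do not resolve it.
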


\begin{proof}
    If the saddle point solution of (\ref{minmaxG}) exists, it is easy to show that the saddle point solution of (\ref{minmax}) exists from Lemma \ref{sort}.
    
    Conversely, 
    if $\sup_{P \in \mathbb{B}(\hat{P})} 
    \inf_{f \in \mathcal{F}} \mbox{mse}(f,P)$ admits an optimal solution $(f^*,P^*)$, 
    it follows from (\ref{ieqc}) that 
    $\mbox{mmse}(P^*) \leq \mbox{mmse}(P_\mathcal{N}^*)$, where the
    Gaussian distribution $P_\mathcal{N}^*=\mathcal{N}(\mu_{P^*},\Sigma_{P^*})$ is given by the mean vector $\mu_{P^*}$ and covariance matrix  $\Sigma_{P^*}$ of $P^*$ and belongs to $\mathbb{B}(\hat{P})$ according to Lemma \ref{WDp}.
    Furthermore, since $P^*$ is the least favorable distribution in the sense of MMSE, we have $\mbox{mmse}(P^*) \geq \mbox{mmse}(P_\mathcal{N}^*)$.
    Then it follows that $\mbox{mmse}(P_\mathcal{N}^*)=
    \mbox{mmse}(P^*)$, which means that $P_\mathcal{N}^*$ is also the least favorable distribution in the sense of MMSE.
    
    Since (\ref{minmax}) admits a saddle point solution, according to the optimality of the linear estimator under the Gaussian distribution and
    the Cartesian product property of the saddle point 
    \citep[Theorem 6.2.9]{stoer2012convexity}, 
    we can deduce that there exists $f_\mathcal{N}^* \in \mathcal{F_L}$ such that $(f_\mathcal{N}^*,P_\mathcal{N}^*)$ is a saddle point solution of (\ref{minmax}). 
    Consequently, $(f_\mathcal{N}^*,P_\mathcal{N}^*)$ is also the optimal solution to $\sup_{P \in \mathbb{B}_\mathcal{N}(\hat{P})} \inf_{f \in \mathcal{F_L}} \mbox{mse}(f,P)$ according to the equality in (\ref{lemma 2.2}), and for given $f_\mathcal{N}^*$, $P_\mathcal{N}^*$ is the least favorable distribution on the constraint set $\mathbb{B}(\hat{P})$. Then since $P^*_\mathcal{N} \in \mathbb{B}_\mathcal{N}(\hat{P}) \subseteq \mathbb{B}(\hat{P})$, $P_\mathcal{N}^*$ is also the least favorable distribution on the smaller constraint set $\mathbb{B}_\mathcal{N}(\hat{P})$.
    Therefore, $(f_\mathcal{N}^*,P_\mathcal{N}^*)$ constitutes a saddle point solution of (\ref{minmaxG}).
\end{proof}

The above theorem shows that if a saddle point solution exists for (\ref{minmax}),  there must be a saddle point solution consisting of a linear estimator and a Gaussian distribution, which is also a saddle point solution for (\ref{minmaxG}).
%Similarly, since the KL divergence exhibits a property analogous to that detailed in Lemma 2.1 \citep[Lemma 3.1]{niu2023marginal}, it follows that Lemma 2.2 and Theorem 2.3 is also suitable to the anlysis for robust estimation problems where the uncertain sets are characterized by KL divergence.
Unfortunately, we have constructed a counterexample in which the saddle point solution of (\ref{minmaxG}) does not exist.

{\bf{Counterexample:}}
Consider the scalar case, i.e., $m=n=1$.
Let the nominal means and variances be 
$\hat{\mu}_x=\hat{\mu}_w=0$ and
$\hat{\Sigma}_x=\hat{\Sigma}_w=1$, respectively.
Take the Wasserstein radii $\rho_x=\rho_w=2$ and the observation matrix $H=1$.
Consider the problem
$$\sup_{P \in \mathbb{B}_\mathcal{N}\left( \hat{P}\right) } \inf_{f \in \mathcal{F_L}} \text{mse}\left( f,P\right) .$$
It can be parameterized as follows \citep[Theorem 3.1]{nguyen2023bridging}
\begin{equation} \begin{aligned}
\sup_{\mu_x,\mu_w,\Sigma_x,\Sigma_w} 
&\Sigma_x-\Sigma_x \left(\Sigma_x+\Sigma_w \right) ^{-1} \Sigma_x  \\
\mbox{s.t.} \ &\mu_x^2 + \Sigma_x+1-2 \sqrt{\Sigma_x} \leq 4, \ \Sigma_x \geq 0 ,\\
& \mu_w^2+ \Sigma_w+1-2 \sqrt{\Sigma_w} \leq 4, \ \Sigma_w \geq 0 .
\label{example} \end{aligned} \end{equation}
Then the objective function of (\ref{example}) can be expressed as
\begin{align*}
\Sigma_x-\Sigma_x \left( \Sigma_x+\Sigma_w \right) ^{-1} \Sigma_x 
&=\Sigma_x \left( \Sigma_x+\Sigma_w \right) ^{-1} \left( \Sigma_x+\Sigma_w \right) -\Sigma_x \left( \Sigma_x+\Sigma_w \right) ^{-1} \Sigma_x \\
&=\Sigma_x \left(  \Sigma_x+\Sigma_w \right) ^{-1} \Sigma_w \\
&=\left(\Sigma_w^{-1}+\Sigma_x^{-1} \right) ^{-1}  ,
\end{align*}
which is monotonically increasing with $\Sigma_x$ and $\Sigma_w$.
Therefore, (\ref{example}) has a unique optimal solution given by $\mu_x^*=\mu_w^*=0$, $\Sigma_x^*=\Sigma_w^*=9$, i.e., the least favorable distribution $P^*=\mathcal{N}(0,9) \times \mathcal{N}(0,9)$.
Then the corresponding optimal estimator is given by 
$$f^*(y)= \frac{H \Sigma_x^*}{H^2 \Sigma_x^*+\Sigma_w^*}y=\frac{1}{2}y,$$
and the mean square error is
$$\mbox{mse}(f^*,P^*)=\Sigma_x^*-\Sigma_x^* \left(\Sigma_x^*+\Sigma_w^* \right) ^{-1} \Sigma_x^*=\frac{9}{2}.$$
But in fact, 
for $\tilde{P} \hspace*{-1mm}=\hspace*{-1mm} \mathcal{N}\left( \sqrt{3},4\right)  \times \mathcal{N}\left( -\sqrt{3},4\right)$,
the type-2 Wasserstain Distance between $\tilde{P}$ and the nominal distribution is given by
$$W_2 \left(\mathcal{N}(\sqrt{3},4), \mathcal{N}(0,1)\right) = \sqrt{\|\sqrt{3} \hspace*{-1mm}-\hspace*{-1mm} 0\|^2 \hspace*{-1mm}+\hspace*{-1mm} \left[ 4+1 \hspace*{-1mm}-\hspace*{-1mm} 2\times 4^\frac{1}{2}\right] }
\hspace*{-1mm}=\hspace*{-1mm} 2,$$
and
$$W_2 \left(\mathcal{N}(-\sqrt{3},4), \mathcal{N}(0,1)\right) \hspace*{-1mm}=\hspace*{-1mm} \sqrt{\|\hspace*{-1mm}-\hspace*{-1mm}\sqrt{3}\hspace*{-1mm}-\hspace*{-1mm} 0\|^2 \hspace*{-1mm}+\hspace*{-1mm} \left[ 4 \hspace*{-1mm}+\hspace*{-1mm} 1 \hspace*{-1mm}-\hspace*{-1mm} 2 \hspace*{-1mm}\times\hspace*{-1mm} 4^\frac{1}{2}\right] } \hspace*{-1mm}=\hspace*{-1mm} 2.$$
It implies that $\tilde{P} \in \mathbb{B}_\mathcal{N}( \hat{P}) \subseteq \mathbb{B}(\hat{P})$.
And we have
\begin{align*}
\mbox{mse}( f^*,\tilde{P})
&=\int_{\mathbb{R} \times \mathbb{R}} \left\|  \frac{1}{2} (x+w) - x \right\| ^2 \mathrm{d} \tilde{P} \\
&=\frac{1}{4} \Sigma_x +\frac{1}{4} \Sigma_w
+\left( \frac{1}{2} \mu_x - \frac{1}{2}\mu_w \right) ^2 \\
&=5> \mbox{mse}\left( f^*,P^*\right) .\end{align*}
Hence, for the given $f^*$, the corresponding least favorable distribution is not $P^*$, which indicates that the unique solution to the maximin problem $\left( f^*,P^*\right) $ does not form a saddle point solution of (\ref{minmaxG}). 

\begin{remark}
The example above shows that, in general, the saddle point solution of (\ref{minmaxG}) is not guaranteed to exist without additional assumptions, which implies that the saddle point solution of (\ref{minmax}) may not exist either.
In addition, a multi-dimensional example is given by simulation 1 in Section 5.
\end{remark}

\subsection{A Necessary and Sufficient Condition for the Existence of the Saddle Point Solution for (\ref{minmax})}
When a saddle point solution exists, it is easy to deduce from Theorem \ref{equp} that solving the infinite-dimensional robust estimation problem (\ref{minmax}) is equivalent to solving the finite-dimensional minimax problem (\ref{minmaxG}).
Under such scenario, all inequalities in Lemma \ref{sort} become equalities.
Consequently, as detailed in \cite{nguyen2023bridging}, the optimal solution to (\ref{minmax}) can be effectively obtained by solving the tractable problem 
\begin{equation} \sup_{P \in \mathbb{B}_\mathcal{N}\left( \hat{P}\right) } 
\inf_{f \in \mathcal{F_L}} \mbox{mse}\left( f,P\right) . \label{maxmina} \end{equation}
Conversely, in the absence of the saddle point, the optimal value of (\ref{maxmina}) serves merely as a strict lower bound for the optimal value of (\ref{minmax}). 
It complicates the resolution of 
(\ref{minmax}) and motivates us to establish a necessary and sufficient condition for its existence in this case.

To identitify this condition, we commence our analysis with (\ref{maxmina}).
Since the distribution and the estimator are restricted to be a Gaussian distribution in (\ref{Guass class}) and a linear estimator in (\ref{linear class}) respectively, (\ref{maxmina}) can be reformulated as a finite-dimensional optimization problem, where MSE can be naturally parameterized as the following objective function by its definition, and the Wasserstein distance degenerates into the Gelbrich distance \cite{gelbrich1990formula}:
\begin{align}
\sup_{\mu_x,\mu_w,\Sigma_x,\Sigma_w} \inf_{A,b} \ \
&\mbox{Tr}\left[ \left( AH-I_n\right) \Sigma_x \left( AH-I_n\right) ^\top +A\Sigma_w A^\top \right]
+\left[ \left( AH-I_n \right) \mu_x+A\mu_w+b\right] ^\top 
\left[ \left( AH-I_n \right) \mu_x+A\mu_w+b\right] \nonumber  \\
\mbox{s.t.} \ 
&  \Sigma_x \succeq{\bf 0}, \quad \Sigma_w \succeq {\bf 0}, \nonumber \\
& \left\| \mu_x-\hat{\mu}_x \right\| ^2+
\mbox{Tr}\left[\Sigma_x + \hat{\Sigma}_x - 2 \left( \hat{\Sigma}_x^{\frac{1}{2}}\Sigma_x\hat{\Sigma}_x^{\frac{1}{2}} \right)^{\frac{1}{2}}\right] \leq \rho_x^2, \nonumber \\[0.5ex]
& \left\| \mu_w-\hat{\mu}_w \right\| ^2+
\mbox{Tr}\left[\Sigma_w + \hat{\Sigma}_w - 2 \left( \hat{\Sigma}_w^{\frac{1}{2}}\Sigma_w\hat{\Sigma}_w^{\frac{1}{2}} \right)^{\frac{1}{2}}\right] \leq \rho_w^2.
\label{maxminap} \end{align}
For fixed $\mu_x$, $\mu_w$, $\Sigma_x$, $\Sigma_w$ and $A$, the optimal solution to the inner minimization problem is given by 
$b^* = (I_n - A H)\mu_x - A \mu_w$,  
which minimizes the objective function by eliminating the quadratic term with respect to $b$.
%reducing the quadratic term to zero.  
By substituting $b^*$ into the above equation, problem (\ref{maxminap}) becomes
\begin{equation*}
\begin{aligned}
\sup \limits_{\mu_x,\mu_w,\Sigma_{x},\Sigma_w} \inf \limits_A \ \
&\mbox{Tr}\left[ \left( AH-I_n \right) \Sigma_x \left( AH-I_n \right) ^\top +A\Sigma_w A^\top \right]\\
\mbox{s.t.} \ 
&\Sigma_x \succeq{\bf 0}, \quad \Sigma_w \succeq {\bf 0}, \\%[0.5ex]
&\left\| \mu_x-\hat{\mu}_x\right\| ^2+\mbox{Tr}\left[\Sigma_x + \hat{\Sigma}_x - 2 \left( \hat{\Sigma}_x^{\frac{1}{2}}\Sigma_x\hat{\Sigma}_x^{\frac{1}{2}} \right)^{\frac{1}{2}}\right] \leq \rho_x^2, \\%[0.5ex]
&\left\| \mu_w-\hat{\mu}_w\right\| ^2+\mbox{Tr}\left[\Sigma_w + \hat{\Sigma}_w - 2 \left( \hat{\Sigma}_w^{\frac{1}{2}}\Sigma_w\hat{\Sigma}_w^{\frac{1}{2}} \right)^{\frac{1}{2}}\right] \leq \rho_w^2. \\%[0.5ex]
\end{aligned} \end{equation*}
It is straightforward to observe that the objective function is independent of $\mu_x$ and $\mu_w$.
Consequently, setting $\mu_x^*=\hat{\mu}_x$ and $\mu_w^*=\hat{\mu}_w$ can result in the largest feasible set of $\Sigma_x$ and $\Sigma_w$, which leads to the largest objective function value. 
%As a result, the feasible set of $\Sigma_x$ and $\Sigma_w$ is the largest when $\mu_x^*=\hat{\mu}_x$ and $\mu_w^*=\hat{\mu}_w$, which leads to the smallest optimal value.
It follows that the optimal solution must satisfy $\mu_x^*=\hat{\mu}_x$ and $\mu_w^*=\hat{\mu}_w$. 
Accordingly, the above problem can be further simplified to
\begin{equation}
\begin{aligned}
\sup \limits_{\Sigma_{x},\Sigma_w} \inf \limits_A \ \ 
&\mbox{Tr}\left[ (AH-I_n)\Sigma_x(AH-I_n)^\top +A\Sigma_w A^\top \right]\\
\mbox{s.t.} \
&  \Sigma_x \succeq{\bf 0}, \quad \Sigma_w \succeq {\bf 0}, \\%[0.5ex]
& \mbox{Tr}\left[\Sigma_x + \hat{\Sigma}_x - 2 \left( \hat{\Sigma}_x^{\frac{1}{2}}\Sigma_x\hat{\Sigma}_x^{\frac{1}{2}} \right)^{\frac{1}{2}}\right] \leq \rho_x^2, \\%[0.5ex]
& \mbox{Tr}\left[\Sigma_w + \hat{\Sigma}_w - 2 \left( \hat{\Sigma}_w^{\frac{1}{2}}\Sigma_w\hat{\Sigma}_w^{\frac{1}{2}} \right)^{\frac{1}{2}}\right] \leq \rho_w^2. \\%[0.5ex]
\end{aligned} \label{maxmin1} \end{equation}

For the sake of simplicity of notations, we denote the constraint set in (\ref{maxmin1}) by
$$\mathbb{B}_\Sigma=\left\lbrace \left( \Sigma_x,\Sigma_w \right) \left|
\begin{aligned}
&\Sigma_x \succeq{\bf 0}, \
\mbox{Tr}\left[\Sigma_x + \hat{\Sigma}_x - 2 \left( \hat{\Sigma}_x^{\frac{1}{2}}\Sigma_x\hat{\Sigma}_x^{\frac{1}{2}} \right)^{\frac{1}{2}}\right] \leq \rho_x^2 \\
&\Sigma_x \succeq{\bf 0}, \
\mbox{Tr}\left[\Sigma_w + \hat{\Sigma}_w - 2 \left( \hat{\Sigma}_w^{\frac{1}{2}}\Sigma_w\hat{\Sigma}_w^{\frac{1}{2}} \right)^{\frac{1}{2}}\right] \leq \rho_w^2
\end{aligned} \right. \right\rbrace .$$
Since the objective function of (\ref{maxmin1}) is convex in $A$ for each $(\Sigma_x,\Sigma_w)$, concave in $(\Sigma_x,\Sigma_w)$ for each $A$, and $\mathbb{B}_\Sigma$ is convex and compact \cite{kuhn2019wasserstein}, 
we can apply Sion's minimax theorem \cite{sion1958general} to conclude that
\begin{align}
&\sup_{(\Sigma_x,\Sigma_w) \in \mathbb{B}_\Sigma} \inf_A \quad
\left\langle \left( I_n-AH\right) ^\top \left( I_n-AH\right) ,\Sigma_{x} \right\rangle 
+\left\langle  A^\top A,\Sigma_w \right\rangle \label{maxminsi} \\
&=\inf_A \sup_{(\Sigma_x,\Sigma_w) \in \mathbb{B}_\Sigma}
\left\langle  \left( I_n-AH \right) ^\top \left( I_n-AH \right) ,\Sigma_{x} \right\rangle 
+\left\langle  A^\top A,\Sigma_w \right\rangle, \label{maxminis}
\end{align}
where (\ref{maxminsi}) is a reformulation of (\ref{maxmin1}) with simplified notations.
Furthermore, since $\mathbb{B}_\Sigma$ is compact, and there exists a pair of positive definite matrices $(\Sigma_x,\Sigma_w) \in \mathbb{B}_\Sigma$ such that the objective function is coercive in $A$, then the above problems admit the saddle point solution \citep[Theorem 10.2]{botelho2014functional}.

Then problem (\ref{maxminsi}) can be transformed into
\begin{align}
&\sup_{(\Sigma_x,\Sigma_w) \in \mathbb{B}_\Sigma} \inf_A \quad
\left\langle  \left( I_n-AH\right) ^\top \left( I_n-AH\right) ,\Sigma_{x} \right\rangle 
+\left\langle  A^\top A,\Sigma_w \right\rangle  \notag \\
&\begin{aligned}
=\sup_{\substack{\Sigma_x\succeq{\bf 0}\\ \Sigma_w\succeq{\bf 0}}}
\inf_{\substack {\gamma_x \geq 0 \\ \gamma_w \geq 0} } \ 
&\left[ \inf_A \quad
\left\langle  \left( I_n-AH\right) ^\top \left( I_n-AH\right) ,\Sigma_{x} \right\rangle +\left\langle  A^\top A,\Sigma_w \right\rangle \right]+ \notag \\
&\gamma_x \left\{ \rho_x^2-
\mbox{Tr}\left[\Sigma_x + \hat{\Sigma}_x - 2 \left( \hat{\Sigma}_x^{\frac{1}{2}}\Sigma_x\hat{\Sigma}_x^{\frac{1}{2}} \right)^{\frac{1}{2}}\right] \right\}+
\gamma_w \left\{ \rho_w^2-
\mbox{Tr}\left[\Sigma_w + \hat{\Sigma}_w - 2 \left( \hat{\Sigma}_w^{\frac{1}{2}}\Sigma_w\hat{\Sigma}_w^{\frac{1}{2}} \right)^{\frac{1}{2}}\right] \right\}
\end{aligned} \\
&\begin{aligned}
=\sup_{\substack{\Sigma_x\succeq{\bf 0}\\ \Sigma_w\succeq{\bf 0}}}
\inf_A \inf_{\substack {\gamma_x \geq 0 \\ \gamma_w \geq 0} } \quad
&\left\langle  \left(I_n-AH\right) ^\top \left(I_n-AH\right) ,\Sigma_{x} \right\rangle 
+\gamma_x \left\{ \rho_x^2-
\mbox{Tr}\left[\Sigma_x + \hat{\Sigma}_x - 2 \left( \hat{\Sigma}_x^{\frac{1}{2}}\Sigma_x\hat{\Sigma}_x^{\frac{1}{2}} \right)^{\frac{1}{2}}\right] \right\}  \\
&+\left\langle  A^\top A,\Sigma_w \right\rangle 
+\gamma_w \left\{ \rho_w^2-
\mbox{Tr}\left[\Sigma_w + \hat{\Sigma}_w - 2 \left( \hat{\Sigma}_w^{\frac{1}{2}}\Sigma_w\hat{\Sigma}_w^{\frac{1}{2}} \right)^{\frac{1}{2}}\right] \right\}, \label{maxminL}
\end{aligned} \end{align}
where the first equality arises from reformulating the constraints as penalty terms in the objective function.
Specifically, when $(\Sigma_x,\Sigma_w) \in \mathbb{B}_\Sigma$, 
the corresponding optimal multiplier $(\gamma_x,\gamma_w)$ satisfies
$$\gamma_x \left\lbrace \rho_x^2-\mbox{Tr}\left[\Sigma_x + \hat{\Sigma}_x - 2 \left( \hat{\Sigma}_x^{\frac{1}{2}} \Sigma_x \hat{\Sigma}_x^{\frac{1}{2}} \right)^{\frac{1}{2}}\right] \right\rbrace =0$$
and
$$\gamma_w \left\lbrace \rho_w^2-\mbox{Tr}\left[\Sigma_w + \hat{\Sigma}_w - 2 \left( \hat{\Sigma}_w^{\frac{1}{2}} \Sigma_w \hat{\Sigma}_w^{\frac{1}{2}} \right)^{\frac{1}{2}}\right] \right\rbrace =0,$$
thereby keeping the objective function value unchanged; otherwise, the  multiplier $\gamma_x$ or $\gamma_w$ tends to $+\infty$, driving the objective function to $-\infty$.

On the other hand, for any fixed $A$, consider the inner maximization problem in (\ref{maxminis}) 
\begin{equation}
\sup_{(\Sigma_x,\Sigma_w) \in \mathbb{B}_\Sigma}
\left\langle  \left( I_n-AH\right) ^\top \left( I_n-AH \right) ,\Sigma_{x} \right\rangle 
+\left\langle  A^\top A,\Sigma_w \right\rangle. \label{maxminis in}
\end{equation}
Problem (\ref{maxminis in}) maximizes a continuous concave function on a convex and compact set, and thus it has a finite optimal value.
Moreover, since the cartesian set of two cones of positive semidefinite matrices is convex, the objective function and the Gelbrich distance are also convex in $(\Sigma_x,\Sigma_w)$, then strong duality holds under Slater condition, i.e., (\ref{maxminis in}) is equivalent to its Lagrangian dual problem
\begin{equation} 
\begin{aligned}
\inf_{\substack {\gamma_x \geq 0 \\ \gamma_w \geq 0} } \sup_{\substack{\Sigma_x\succeq{\bf 0}\\ \Sigma_w\succeq{\bf 0}}} \quad
&\left\langle  \left( I_n-AH\right) ^\top \left( I_n-AH\right) ,\Sigma_{x} \right\rangle 
+\gamma_x \left\{ \rho_x^2-
\mbox{Tr}\left[\Sigma_x + \hat{\Sigma}_x - 2 \left( \hat{\Sigma}_x^{\frac{1}{2}}\Sigma_x\hat{\Sigma}_x^{\frac{1}{2}} \right)^{\frac{1}{2}}\right] \right\} \\
&+\left\langle  A^\top A,\Sigma_w \right\rangle 
+\gamma_w \left\{ \rho_w^2-
\mbox{Tr}\left[\Sigma_w + \hat{\Sigma}_w - 2 \left( \hat{\Sigma}_w^{\frac{1}{2}}\Sigma_w\hat{\Sigma}_w^{\frac{1}{2}} \right)^{\frac{1}{2}}\right] \right\}, 
\end{aligned} \label{maxminis in dual} \end{equation}
and the set of optimal dual solutions is nonempty \citep[Propositions 5.1.4 and 5.3.1]{bertsekas1997nonlinear}.
Thus, problem (\ref{maxminis}) is equivalent to
\begin{equation} 
\begin{aligned}
\inf_A \inf_{\substack {\gamma_x \geq 0 \\ \gamma_w \geq 0} } \sup_{\substack{\Sigma_x\succeq{\bf 0}\\ \Sigma_w\succeq{\bf 0}}} \quad
&\left\langle  \left( I_n-AH\right) ^\top \left( I_n-AH\right) ,\Sigma_{x} \right\rangle 
+\gamma_x \left\{ \rho_x^2-
\mbox{Tr}\left[\Sigma_x + \hat{\Sigma}_x - 2 \left( \hat{\Sigma}_x^{\frac{1}{2}}\Sigma_x\hat{\Sigma}_x^{\frac{1}{2}} \right)^{\frac{1}{2}}\right] \right\} \\
&+\left\langle  A^\top A,\Sigma_w \right\rangle 
+\gamma_w \left\{ \rho_w^2-
\mbox{Tr}\left[\Sigma_w + \hat{\Sigma}_w - 2 \left( \hat{\Sigma}_w^{\frac{1}{2}}\Sigma_w\hat{\Sigma}_w^{\frac{1}{2}} \right) ^{\frac{1}{2}} \right] \right\}. \label{maxminLe}
\end{aligned} \end{equation}
We denote the objective function of (\ref{maxminLe}), which is also the objective function of (\ref{maxminL}), as
\begin{align}
G(\Sigma_x,\Sigma_w,A,\gamma_x,\gamma_w) \triangleq
&\left\langle  \left( I_n-AH \right) ^\top \left( I_n-AH \right) ,\Sigma_{x} \right\rangle 
+\gamma_x \left\{ \rho_x^2-
\mbox{Tr}\left[\Sigma_x + \hat{\Sigma}_x - 2 \left( \hat{\Sigma}_x^{\frac{1}{2}}\Sigma_x\hat{\Sigma}_x^{\frac{1}{2}} \right)^{\frac{1}{2}}\right] \right\} \notag \\
&+\left\langle  A^\top A,\Sigma_w \right\rangle 
+\gamma_w \left\{ \rho_w^2-
\mbox{Tr}\left[\Sigma_w + \hat{\Sigma}_w - 2 \left( \hat{\Sigma}_w^{\frac{1}{2}}\Sigma_w\hat{\Sigma}_w^{\frac{1}{2}} \right)^{\frac{1}{2}} \right] \right\}. \label{G}
\end{align} 
Then since the optimal values of problems (\ref{maxminsi}) and (\ref{maxminis}) are equal, it follows that the optimal values of (\ref{maxminL}) and (\ref{maxminLe}) are also equal, i.e.,
$$\sup_{\substack{\Sigma_x\succeq{\bf 0}\\ \Sigma_w\succeq{\bf 0}}}
\inf_A \inf_{\substack {\gamma_x \geq 0 \\ \gamma_w \geq 0} } 
G(\Sigma_x,\Sigma_w,A,\gamma_x,\gamma_w)=
\inf_A \inf_{\substack {\gamma_x \geq 0 \\ \gamma_w \geq 0} } \sup_{\substack{\Sigma_x\succeq{\bf 0}\\ \Sigma_w\succeq{\bf 0}}}
G(\Sigma_x,\Sigma_w,A,\gamma_x,\gamma_w).$$

Furthermore, we shall give the relationship between saddle point solutions to problems (\ref{maxminsi}) and (\ref{maxminL}).

\begin{lemma}{(\citep[Theorem 6.9.8]{stoer2012convexity})} \label{spr}
    Under Assumption 1, the following statements hold:  
	
	(i)If problem (\ref{maxminL}) has a saddle point solution $(\Sigma_x^*,\Sigma_w^*,A^*,\gamma_x^*,\gamma_w^*)$, then $(\Sigma_x^*,\Sigma_w^*,A^*)$ is a saddle point solution to (\ref{maxminsi}). 
	
	(ii)If problem (\ref{maxminsi}) has a saddle point solution $(\Sigma_x^*,\Sigma_w^*,A^*)$ and Slater condition holds, then there are $\gamma_x^* \geq 0$ and $\gamma_w^* \geq 0$ such that $(\Sigma_x^*,\Sigma_w^*,A^*,\gamma_x^*,\gamma_w^*)$ is a saddle point solution to (\ref{maxminL}).
\end{lemma}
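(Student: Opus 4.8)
The plan is to prove both implications by directly manipulating the pair of inequalities that define a saddle point of the penalized objective $G$ in (\ref{G}), viewed as the payoff of a two-player game in which the maximizing player chooses $(\Sigma_x,\Sigma_w)$ with $\Sigma_x,\Sigma_w\succeq{\bf 0}$ and the minimizing player chooses $(A,\gamma_x,\gamma_w)$ with $\gamma_x,\gamma_w\geq0$. The structural fact that drives everything is that $G$ is additively separable over the minimizing player's coordinates: $G=\big[\langle(I_n-AH)^\top(I_n-AH),\Sigma_x\rangle+\langle A^\top A,\Sigma_w\rangle\big]+\gamma_x c_x+\gamma_w c_w$, where $c_x:=\rho_x^2-\mbox{Tr}[\Sigma_x+\hat{\Sigma}_x-2(\hat{\Sigma}_x^{1/2}\Sigma_x\hat{\Sigma}_x^{1/2})^{1/2}]$ and $c_w$ is defined analogously, so that $(\Sigma_x,\Sigma_w)\in\mathbb{B}_\Sigma$ is exactly the conjunction $c_x\geq0$, $c_w\geq0$. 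I write $c_x^*,c_w^*$ for the values of $c_x,c_w$ at $(\Sigma_x^*,\Sigma_w^*)$.

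For (i), starting from a saddle point $(\Sigma_x^*,\Sigma_w^*,A^*,\gamma_x^*,\gamma_w^*)$ of $G$, I would first read off feasibility and complementary slackness: freezing every argument but $\gamma_x$ in the lower saddle inequality gives $(\gamma_x-\gamma_x^*)c_x^*\geq0$ for all $\gamma_x\geq0$, which forces $c_x^*\geq0$ (let $\gamma_x\to\infty$) and $\gamma_x^*c_x^*=0$ (set $\gamma_x=0$), and symmetrically for $w$; hence $(\Sigma_x^*,\Sigma_w^*)\in\mathbb{B}_\Sigma$ and both penalty terms vanish at the saddle point. The two saddle inequalities for (\ref{maxminsi}) then follow: optimality of $A^*$ comes from the lower saddle inequality of $G$ with the multipliers frozen at $(\gamma_x^*,\gamma_w^*)$, since the penalties do not involve $A$ and cancel; and optimality of $(\Sigma_x^*,\Sigma_w^*)$ over $\mathbb{B}_\Sigma$ for the bracketed objective at $A^*$ follows by bounding it above by $G(\Sigma_x,\Sigma_w,A^*,\gamma_x^*,\gamma_w^*)$ (using $\gamma_x^*,\gamma_w^*\geq0$ and $c_x,c_w\geq0$ on $\mathbb{B}_\Sigma$) and then by $G(\Sigma_x^*,\Sigma_w^*,A^*,\gamma_x^*,\gamma_w^*)$ via the upper saddle inequality.

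For (ii), from a saddle point $(\Sigma_x^*,\Sigma_w^*,A^*)$ of (\ref{maxminsi}) I record that $(\Sigma_x^*,\Sigma_w^*)$ maximizes the concave (indeed linear) map $\langle(I_n-A^*H)^\top(I_n-A^*H),\Sigma_x\rangle+\langle (A^*)^\top A^*,\Sigma_w\rangle$ over the convex set $\mathbb{B}_\Sigma$, while $A^*$ minimizes the same expression in $A$. Since the Gelbrich constraints are convex and, under Assumption 1, a positive-definite perturbation of the nominal covariances lies strictly inside both Gelbrich balls, Slater's condition holds for this inner concave program, so strong duality together with the saddle-point characterization of constrained optima (\citep[Propositions 5.1.4 and 5.3.1]{bertsekas1997nonlinear}) supplies multipliers $\gamma_x^*,\gamma_w^*\geq0$ with $\gamma_x^*c_x^*=\gamma_w^*c_w^*=0$ such that $(\Sigma_x^*,\Sigma_w^*)$ maximizes $G(\cdot,\cdot,A^*,\gamma_x^*,\gamma_w^*)$ over $\Sigma_x,\Sigma_w\succeq{\bf 0}$; this is the upper saddle inequality for $G$. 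For the lower one I invoke separability: $G(\Sigma_x^*,\Sigma_w^*,\cdot,\cdot,\cdot)$ is minimized coordinatewise, the $A$-term at $A^*$ by the previous line and each term $\gamma_x c_x^*$, $\gamma_w c_w^*$ at $\gamma_x^*$, $\gamma_w^*$ by complementary slackness and $c_x^*,c_w^*\geq0$, which yields $G(\Sigma_x^*,\Sigma_w^*,A,\gamma_x,\gamma_w)\geq G(\Sigma_x^*,\Sigma_w^*,A^*,\gamma_x^*,\gamma_w^*)$ for all admissible $(A,\gamma_x,\gamma_w)$.

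I expect the main obstacle to be organizational rather than technical. The delicate point is that in (\ref{maxminL}) the minimizing player controls $(A,\gamma_x,\gamma_w)$ \emph{jointly}, and the argument only goes through because this joint minimization decouples --- which is precisely what the additive separability of $G$ in $A$, $\gamma_x$, $\gamma_w$ over a product domain provides. A secondary point requiring care is verifying Slater's condition for the conic inner problem in (ii), where the positivity of the Wasserstein radii from Assumption 1 is essential. As an alternative to the explicit manipulation above, part (ii) can also be obtained by applying \citep[Theorem 6.9.8]{stoer2012convexity} to the inner constrained concave program with $A^*$ held fixed, and part (i) by reading that correspondence in reverse.
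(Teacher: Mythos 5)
Your proof is correct. Note first that the paper does not actually prove Lemma \ref{spr}: it is stated with a citation to \citep[Theorem 6.9.8]{stoer2012convexity} and used as a black box. What you have written is a correct, self-contained derivation of exactly that cited result, specialized to the objective $G$ of (\ref{maxminL}). Your two key mechanisms are sound: in (i), freezing all coordinates but one multiplier in the lower saddle inequality correctly extracts primal feasibility ($c_x^*, c_w^*\geq 0$) and complementary slackness ($\gamma_x^*c_x^*=\gamma_w^*c_w^*=0$), after which both saddle inequalities for (\ref{maxminsi}) follow by the sandwich $F(\Sigma_x,\Sigma_w,A^*)\leq G(\Sigma_x,\Sigma_w,A^*,\gamma_x^*,\gamma_w^*)\leq G(\Sigma_x^*,\Sigma_w^*,A^*,\gamma_x^*,\gamma_w^*)=F(\Sigma_x^*,\Sigma_w^*,A^*)$ on $\mathbb{B}_\Sigma$ and by cancellation of the (constant-in-$A$) penalty terms; in (ii), the Lagrangian saddle-point theorem applied to the inner concave program at fixed $A^*$ supplies the multipliers, and the additive separability of $G$ over $(A,\gamma_x,\gamma_w)$ lets you assemble the joint lower saddle inequality coordinatewise. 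You are also right that the separability over a product domain is the structural hinge; without it, coordinatewise optimality would not imply joint optimality for the minimizing player. Two cosmetic remarks: Slater's condition in part (ii) is a hypothesis of the lemma, so you need not re-derive it from Assumption 1 (though your observation that $\Sigma_x=\hat{\Sigma}_x$, $\Sigma_w=\hat{\Sigma}_w$ is strictly feasible when $\rho_x,\rho_w>0$ is how the paper in fact justifies Slater elsewhere); and in extracting $c_x^*\geq 0$ you only need one $\gamma_x>\gamma_x^*$, not a limit. Compared with the paper's citation-only treatment, your argument buys transparency at no cost in generality.
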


Subsequently, the saddle point solution of problem (\ref{maxminL}) enables us to derive a necessary and sufficient condition for the existence of a saddle point solution for the robust estimation problem (\ref{minmax}).

\begin{lemma} \label{snsd}
	Suppose that Assumption 1 holds. 
	The saddle point solution of (\ref{minmax}) exists if and only if (\ref{maxminL}) has a saddle point solution $(\Sigma_x^*,\Sigma_w^*,A^*,\gamma_x^*,\gamma_w^*)$ such that 
	$$\begin{bmatrix}
	(I_n - A^* H)^\top (I_n - A^* H) - \gamma_x^*I_n & (I_n - A^* H)^\top A^* \\
	(A^*)^\top (I_n - A^* H) & (A^*)^\top A^* - \gamma_w^*I_m
	\end{bmatrix} \preceq {\bf 0}.$$
\end{lemma}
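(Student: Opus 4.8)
The plan is to prove both implications by exhibiting an explicit saddle pair $(f^*,P^*)$ for \eqref{minmax} and reading it off from a saddle point of \eqref{maxminL}, using Theorems~\ref{equp}, Lemmas~\ref{WDp}, \ref{sort}, \ref{spr} as black boxes together with the equalities of optimal values of \eqref{maxmin1}--\eqref{maxminL} established above (call this common value $V$; by Lemma~\ref{sort} it is also the maximin value of \eqref{minmax}). The two structural facts I would rely on throughout: (a) for a linear estimator $f(y)=Ay+b$ and a Gaussian $P=\mathcal N(\mu_x,\Sigma_x)\times\mathcal N(\mu_w,\Sigma_w)$ the error splits as a \emph{covariance term} $\mathrm{Tr}[(I_n-AH)\Sigma_x(I_n-AH)^\top+A\Sigma_w A^\top]$ plus a \emph{mean term} $\|(I_n-AH)\mu_x-A\mu_w-b\|^2$; and (b) by Lemma~\ref{WDp}, for fixed linear $f$ the worst distribution over $\mathbb B(\hat P)$ can be taken Gaussian, so $\sup_{P\in\mathbb B(\hat P)}\mathrm{mse}(f,P)$ equals the supremum of the quantity in (a) over all moments $(\delta_x,\delta_w,\Sigma_x,\Sigma_w)$ with $\Sigma_x,\Sigma_w\succeq{\bf 0}$, $\|\delta_x\|^2+g_x(\Sigma_x)\le\rho_x^2$, $\|\delta_w\|^2+g_w(\Sigma_w)\le\rho_w^2$, where $\delta_x=\mu_x-\hat\mu_x$, $\delta_w=\mu_w-\hat\mu_w$ and $g_x(\Sigma_x)=\mathrm{Tr}[\Sigma_x+\hat\Sigma_x-2(\hat\Sigma_x^{1/2}\Sigma_x\hat\Sigma_x^{1/2})^{1/2}]$ (similarly $g_w$), i.e.\ the squared Gelbrich distances appearing in $\mathbb B_\Sigma$.

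\emph{The ``if'' direction.} Suppose $(\Sigma_x^*,\Sigma_w^*,A^*,\gamma_x^*,\gamma_w^*)$ is a saddle point of \eqref{maxminL} with the displayed block matrix $\preceq{\bf 0}$. By Lemma~\ref{spr}(i), $(\Sigma_x^*,\Sigma_w^*,A^*)$ is a saddle point of \eqref{maxminsi}; hence $A^*$ minimizes the covariance term at $(\Sigma_x^*,\Sigma_w^*)$, i.e.\ $A^*$ is the Wiener gain for these covariances, and $\gamma_x^*,\gamma_w^*$ are the corresponding KKT multipliers, so by complementary slackness $G(\Sigma_x^*,\Sigma_w^*,A^*,\gamma_x^*,\gamma_w^*)$ equals the covariance term at $(\Sigma_x^*,\Sigma_w^*)$ and equals $V$. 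Set $f^*(y)=A^*y+b^*$ with $b^*=(I_n-A^*H)\hat\mu_x-A^*\hat\mu_w$ and $P^*=\mathcal N(\hat\mu_x,\Sigma_x^*)\times\mathcal N(\hat\mu_w,\Sigma_w^*)$. Optimality of $f^*$ against $P^*$ is immediate: $P^*$ is Gaussian, its MMSE estimator is affine, the canonical $b^*$ annihilates the mean term, and $A^*$ is exactly the Wiener gain, so $\mathrm{mse}(f^*,P^*)=\mathrm{mmse}(P^*)=V$. For optimality of $P^*$ against $f^*$, take any $P\in\mathbb B(\hat P)$ with moments $(\delta_x,\delta_w,\Sigma_x,\Sigma_w)$. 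The max-side of the \eqref{maxminL} saddle gives $G(\Sigma_x,\Sigma_w,A^*,\gamma_x^*,\gamma_w^*)\le V$, which (using $\rho_x^2-g_x(\Sigma_x)\ge\|\delta_x\|^2$, $\gamma_x^*\ge 0$, and likewise for $w$) rearranges to
\[(\text{covariance term at }(\Sigma_x,\Sigma_w))\ \le\ V-\gamma_x^*\|\delta_x\|^2-\gamma_w^*\|\delta_w\|^2;\]
meanwhile the block-matrix hypothesis is, after the harmless $\mathrm{diag}(I_n,-I_m)$ congruence that flips the sign of the off-diagonal blocks, exactly $\|(I_n-A^*H)\delta_x-A^*\delta_w\|^2\le\gamma_x^*\|\delta_x\|^2+\gamma_w^*\|\delta_w\|^2$ for all $\delta_x,\delta_w$. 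Adding these two and recalling the splitting in (a) with $b^*$ canonical yields $\mathrm{mse}(f^*,P)\le V=\mathrm{mse}(f^*,P^*)$, so $(f^*,P^*)$ is a saddle point of \eqref{minmax}.

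\emph{The ``only if'' direction.} If \eqref{minmax} has a saddle point, Theorem~\ref{equp} yields a saddle point $(f^*,P^*)$ of \eqref{minmaxG} with $f^*(y)=A^*y+b^*$ linear and $P^*$ Gaussian, and Lemma~\ref{sort} forces $\sup_{P\in\mathbb B(\hat P)}\mathrm{mse}(f^*,P)=V$. Restricting the worst-case $P$ to $\delta_x=\delta_w=0$ gives $\sup_{P\in\mathbb B(\hat P)}\mathrm{mse}(f^*,P)\ge\big(\sup_{\mathbb B_\Sigma}\text{covariance term for }A^*\big)+\|c\|^2$ with $c=(I_n-A^*H)\hat\mu_x-A^*\hat\mu_w-b^*$; since the first summand already equals at least $V$, both $c=0$ (so $b^*$ is canonical) and $\sup_{\mathbb B_\Sigma}(\text{covariance term for }A^*)=V$, which makes $(\Sigma_x^*,\Sigma_w^*,A^*)$ a saddle point of \eqref{maxminsi}, where $(\Sigma_x^*,\Sigma_w^*)$ are the covariances of $P^*$. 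Lemma~\ref{spr}(ii) produces multipliers making $(\Sigma_x^*,\Sigma_w^*,A^*,\gamma_x^*,\gamma_w^*)$ a saddle point of \eqref{maxminL}; I would in fact choose $\gamma_x^*,\gamma_w^*$ to be the \emph{largest} KKT multipliers, i.e.\ the left derivatives at $\rho_x^2,\rho_w^2$ of the concave optimal-value functions $v_x(r)=\sup\{(\text{covariance term})_x:\,g_x(\Sigma_x)\le r,\ \Sigma_x\succeq{\bf 0}\}$ and $v_w$. It remains to check the block matrix is $\preceq{\bf 0}$ with this choice. If not, pick $(\delta_x^0,\delta_w^0)$ with $\|(I_n-A^*H)\delta_x^0-A^*\delta_w^0\|^2>\gamma_x^*\|\delta_x^0\|^2+\gamma_w^*\|\delta_w^0\|^2$; for small $t>0$ build $P_t\in\mathbb B(\hat P)$ by shifting the means of $P^*$ to $\hat\mu_x+t\delta_x^0,\ \hat\mu_w+t\delta_w^0$ and replacing the covariances by ones attaining $v_x(\rho_x^2-t^2\|\delta_x^0\|^2)$ and $v_w(\rho_w^2-t^2\|\delta_w^0\|^2)$ (which restores feasibility), so that $\mathrm{mse}(f^*,P_t)=v_x(\rho_x^2-t^2\|\delta_x^0\|^2)+v_w(\rho_w^2-t^2\|\delta_w^0\|^2)+t^2\|(I_n-A^*H)\delta_x^0-A^*\delta_w^0\|^2$. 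The one-sided first-order expansion $v_x(\rho_x^2-s)=v_x(\rho_x^2)-\gamma_x^*s+o(s)$ turns this into $V+t^2\big(\|(I_n-A^*H)\delta_x^0-A^*\delta_w^0\|^2-\gamma_x^*\|\delta_x^0\|^2-\gamma_w^*\|\delta_w^0\|^2\big)+o(t^2)>V$ for small $t$, contradicting $\sup_{P\in\mathbb B(\hat P)}\mathrm{mse}(f^*,P)=V$; hence the block matrix is $\preceq{\bf 0}$.

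\emph{Main obstacle.} The step requiring the most care is this last perturbation argument in the ``only if'' direction: making the covariance ``compensation'' feasible when the Gelbrich constraints are active (precisely the regime where shrinking the covariance to free up mean-budget is genuinely needed), and matching its first-order cost to the multiplier --- this is exactly why one must use the extreme multipliers and the one-sided derivatives of $v_x,v_w$ rather than an arbitrary optimal dual solution. Secondary, more bookkeeping difficulties are the normalization in the ``only if'' direction (arguing that the linear/Gaussian saddle point of \eqref{minmaxG} supplied by Theorem~\ref{equp} can be taken with canonical offset $b^*$ and with $(\Sigma_x^*,\Sigma_w^*,A^*)$ an actual saddle point of \eqref{maxminsi}) and, on the ``if'' side, the routine complementary-slackness verification that $G$ at the saddle equals both the covariance term and the common value $V$.
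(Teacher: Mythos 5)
Your proposal is correct, and while the sufficiency direction is essentially the paper's argument in different packaging, your necessity direction takes a genuinely different route. For sufficiency, the paper wraps the same computation in the Lagrangian saddle-point formalism for (\ref{saddle judge}) (building the function $L$ in (\ref{F}) and invoking the saddle-point theorem), whereas you chain the inequalities directly: block condition $\Rightarrow$ mean term $\le \gamma_x^*\|\delta_x\|^2+\gamma_w^*\|\delta_w\|^2 \le \gamma_x^*[\rho_x^2-g_x(\Sigma_x)]+\gamma_w^*[\rho_w^2-g_w(\Sigma_w)]$, then the max-side of the (\ref{maxminL}) saddle; this is the same idea, more transparently executed. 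For necessity, the paper proceeds through a chain of duality and minimax interchanges for the homogeneous QCQP in $(\mu_x,\mu_w)$ — strong duality via Theorem 2.5 of \cite{ye2003new}, reduction to the multiplier set $\mathcal{A}$ in (\ref{calA}), and a coercivity-based sup/inf swap to extract a dual optimal pair lying in $\mathcal{A}$ — while you instead fix the \emph{extreme} multipliers $\gamma_x^*=v_x'^-(\rho_x^2)$, $\gamma_w^*=v_w'^-(\rho_w^2)$ and run a direct variational argument: shift the means by $t\delta^0$, compensate the covariances to first order, and read off the contradiction with $\sup_P\mathrm{mse}(f^*,P)=V$ from the $t^2$ coefficient. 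This is more elementary (no QCQP duality needed, only concavity and one-sided derivatives of the decoupled value functions $v_x,v_w$) and it makes the role of the multipliers intuitive as marginal prices of Wasserstein budget; your observation that one \emph{must} take the largest multipliers is exactly right, since the block condition is monotone in $(\gamma_x,\gamma_w)$ and a smaller dual optimal pair could fail it even when a saddle point exists. The only points that would need to be written out in full are (i) that the left-derivative multipliers are indeed valid for a saddle point of (\ref{maxminL}) (they lie in the superdifferentials, hence are dual optimal, and complementary slackness holds automatically because an inactive constraint forces the left derivative to vanish), and (ii) the normalization step showing $b^*$ is canonical and $(\Sigma_x^*,\Sigma_w^*,A^*)$ is a saddle point of (\ref{maxminsi}), which you sketch correctly via $V\ge V+\|c\|^2$. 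Neither is a gap, just bookkeeping.
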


\begin{proof}
For ease of notations, define $K^* \triangleq I_n - A^* H$. Notice that Theorem \ref{equp} shows that the saddle point solution of (\ref{minmax}) exists if and only if that of (\ref{minmaxG}) exists.
Then since (\ref{maxmina}) is obtained by exchanging the supremum and infimum of (\ref{minmaxG}) and it can be parameterized as (\ref{maxminap}), 
it suffices to prove that the saddle point solution of (\ref{maxminap}) exists if and only if  (\ref{maxminL}) has a saddle point solution $(\Sigma_x^*,\Sigma_w^*,A^*,\gamma_x^*,\gamma_w^*)$ such that 
$\begin{bmatrix}
(K^*)^\top (K^*) - \gamma_x^*I_n & (K^*)^\top A^* \\
(A^*)^\top (K^*) & (A^*)^\top A^* - \gamma_w^*I_m
\end{bmatrix} \preceq {\bf 0} $. 
%if and only if the saddle point solution of (\ref{maxminap}) exists.

``$\Longleftarrow$" {\bf{Sufficiency.}}

We first demonstrate that, 
if problem (\ref{maxminL}) has a saddle point solution $(\Sigma_x^*,\Sigma_w^*,A^*,\gamma_x^*,\gamma_w^*)$ such that 
$\begin{bmatrix}
(K^*)^\top (K^*) - \gamma_x^*I_n & (K^*)^\top A^* \\
(A^*)^\top (K^*) & (A^*)^\top A^* - \gamma_w^*I_m
\end{bmatrix} \preceq {\bf 0} $,
then $(A^*,b^*,\hat{\mu}_x,\hat{\mu}_w,\Sigma_x^*,\Sigma_w^*)$ 
constitutes a saddle point solution of (\ref{maxminap}),
where $\hat{\mu}_x$ and $\hat{\mu}_w$ are the nominal mean vectors of parameter and noise, respectively, and $b^* \triangleq (I_n - A^* H)\hat{\mu}_x - A^* \hat{\mu}_w$.

Since $(\Sigma_x^*,\Sigma_w^*,A^*,\gamma_x^*,\gamma_w^*)$ is a saddle point solution to (\ref{maxminL}), 
it follows from Lemma \ref{spr} that $(\Sigma_x^*, \Sigma_w^*,A^*)$ is an optimal solution to (\ref{maxminsi}), which implies that
$(A^*,b^*,\hat{\mu}_x,\hat{\mu}_w,\Sigma_x^*,\Sigma_w^*)$ 
is an optimal solution to (\ref{maxminap}).
To further establish that $(A^*,b^*,\hat{\mu}_x,\hat{\mu}_w,\Sigma_x^*,\Sigma_w^*)$ 
is a saddle point solution of (\ref{maxminap}), it suffices to show that, for given $A^*$ and $b^*$, the least favorable distribution is determined by the mean vectors $\hat{\mu}_x$, $\hat{\mu}_w$ and covariance matrices $\Sigma_x^*$, $\Sigma_w^*$.
Specifically, we intend to prove that 
$(\hat{\mu}_x,\hat{\mu}_w,\Sigma_x^*,\Sigma_w^*)$
is an optimal solution to the following problem
\begin{equation}
\begin{aligned}
\sup_{\mu_x,\mu_w,\Sigma_x,\Sigma_w}
&\left\langle  \left( K^*\right) ^\top K^*,\Sigma_{x} \right\rangle +
\left\langle  \left( A^*\right) ^\top A^*,\Sigma_w \right\rangle 
+\left\| K^*\left( \mu_x-\hat{\mu}_x \right) 
-A^*\left( \mu_w-\hat{\mu}_w\right) \right\| ^2\\
\mbox{s.t.} \ 
& \Sigma_x \succeq{\bf 0}, \quad \Sigma_w \succeq {\bf 0}, \\
& \|\mu_x-\hat{\mu}_x\|^2+
\mbox{Tr}\left[\Sigma_x + \hat{\Sigma}_x - 2 \left( \hat{\Sigma}_x^{\frac{1}{2}}\Sigma_x\hat{\Sigma}_x^{\frac{1}{2}} \right)^{\frac{1}{2}}\right] \leq \rho_x^2, \\[0.5ex]
& \|\mu_w-\hat{\mu}_w\|^2+
\mbox{Tr}\left[\Sigma_w + \hat{\Sigma}_w - 2 \left( \hat{\Sigma}_w^{\frac{1}{2}}\Sigma_w\hat{\Sigma}_w^{\frac{1}{2}} \right)^{\frac{1}{2}}\right] \leq \rho_w^2.
\end{aligned} \label{saddle judge}
\end{equation}
Then we denote the Lagrangian function of (\ref{saddle judge}) as
\begin{align}
L(\mu_x,\mu_w,\Sigma_x,\Sigma_w;\gamma_x,\gamma_w) \triangleq
&\left\langle  \left( K^*\right) ^\top K^*,\Sigma_{x} \right\rangle +
\left\langle  \left( A^*\right) ^\top A^*,\Sigma_w \right\rangle 
+\left\| K^*\left( \mu_x-\hat{\mu}_x\right) -A^*\left( \mu_w-\hat{\mu}_w\right) \right\| ^2 \notag \\
&+\gamma_x \left\{ \rho_x^2- \|\mu_x-\hat{\mu}_x\|^2-
\mbox{Tr}\left[\Sigma_x + \hat{\Sigma}_x - 2 \left( \hat{\Sigma}_x^{\frac{1}{2}}\Sigma_x\hat{\Sigma}_x^{\frac{1}{2}} \right)^{\frac{1}{2}}\right] \right\} \notag \\
&+\gamma_w \left\{ \rho_w^2- \|\mu_w-\hat{\mu}_w\|^2-
\mbox{Tr}\left[\Sigma_w + \hat{\Sigma}_w - 2 \left( \hat{\Sigma}_w^{\frac{1}{2}}\Sigma_w\hat{\Sigma}_w^{\frac{1}{2}} \right)^{\frac{1}{2}}\right] \right\} \notag \\
=& \left\langle  \left( K^*\right) ^\top K^*,\Sigma_{x} \right\rangle 
+\gamma_x \left\{ \rho_x^2-
\mbox{Tr}\left[\Sigma_x + \hat{\Sigma}_x - 2 \left( \hat{\Sigma}_x^{\frac{1}{2}}\Sigma_x\hat{\Sigma}_x^{\frac{1}{2}} \right)^{\frac{1}{2}}\right] \right\} \notag \\
&+\left\langle  \left( A^*\right) ^\top A^*,\Sigma_w \right\rangle 
+\gamma_w \left\{ \rho_w^2-
\mbox{Tr}\left[\Sigma_w + \hat{\Sigma}_w - 2 \left( \hat{\Sigma}_w^{\frac{1}{2}}\Sigma_w\hat{\Sigma}_w^{\frac{1}{2}} \right)^{\frac{1}{2}}\right] \right\} \notag \\
&+\begin{bmatrix}
\mu_x-\hat{\mu}_x \\ \mu_w-\hat{\mu}_w
\end{bmatrix}^\top
\begin{bmatrix}
(K^*)^\top K^*-\gamma_xI_n & (K^*)^\top A^* \\
(A^*)^\top K^* & (A^*)^\top A^* -\gamma_wI_m
\end{bmatrix}
\begin{bmatrix}
\mu_x-\hat{\mu}_x \\ \mu_w-\hat{\mu}_w
\end{bmatrix}, \label{F}
\end{align}
where the equality comes from rearranging items and combining like items. 
Next we discuss the relationship between $L$ and the objective function of (\ref{maxminL}), which is defined as $G$ in (\ref{G}).
Note that taking $\mu_x=\hat{\mu}_x$ and $\mu_w=\hat{\mu}_w$ eliminates the quadratic term with respect to $(\mu_x, \mu_w)$ in $L$, and setting $A=A^*$ in $G$ ensures that its inner probuct terms with respect to $\Sigma_x$ and $\Sigma_w$ are identical to that in $L$, i.e., for any $\Sigma_x \succeq{\bf 0}$, $\Sigma_w \succeq {\bf 0}$, $\gamma_x \geq 0$ and $\gamma_w \geq 0$,
\begin{equation}
L(\hat{\mu}_x,\hat{\mu}_w,\Sigma_x,\Sigma_w;\gamma_x,\gamma_w)=
G(\Sigma_x,\Sigma_w,A^*,\gamma_x,\gamma_w). 
\label{eqL} \end{equation}
Since $(\Sigma_x^*, \Sigma_w^*, A^*, \gamma_x^*, \gamma_w^*)$ is a saddle point solution of (\ref{maxminL}), 
%according to the definition of the saddle point, 
for any $\Sigma_x \succeq{\bf 0}$, $\Sigma_w \succeq {\bf 0}$, $A$, $\gamma_x \geq 0$ and $\gamma_w \geq 0$, we have
\begin{equation}
G(\Sigma_x,\Sigma_w,A^*,\gamma_x^*,\gamma_w^*) \leq 
G(\Sigma_x^*,\Sigma_w^*,A^*,\gamma_x^*,\gamma_w^*) \leq
G(\Sigma_x^*,\Sigma_w^*,A,\gamma_x,\gamma_w). 
\label{sd per} \end{equation}
Then for any $\mu_x$, $\mu_w$, $\Sigma_x \succeq{\bf 0}$, $\Sigma_w \succeq {\bf 0}$, $\gamma_x \geq 0$ and $\gamma_w \geq 0$, we obtain
\begin{align*}
&L(\mu_x,\mu_w,\Sigma_x,\Sigma_w;\gamma_x^*,\gamma_w^*) 
\leq L(\hat{\mu}_x,\hat{\mu}_w,\Sigma_x,\Sigma_w;\gamma_x^*,\gamma_w^*)
=G(\Sigma_x,\Sigma_w,A^*,\gamma_x^*,\gamma_w^*)\\
&\leq G(\Sigma_x^*,\Sigma_w^*,A^*,\gamma_x^*,\gamma_w^*)= L(\hat{\mu}_x,\hat{\mu}_w,\Sigma_x^*,\Sigma_w^*;\gamma_x^*,\gamma_w^*) \\
&\leq  G(\Sigma_x^*,\Sigma_w^*,A^*,\gamma_x,\gamma_w) = L(\hat{\mu}_x,\hat{\mu}_w,\Sigma_x^*,\Sigma_w^*;\gamma_x,\gamma_w),
\end{align*}
where the first inequality is due to
$\begin{bmatrix}
(K^*)^\top K^*-\gamma_x^*I_n & (K^*)^\top A^* \\
(A^*)^\top K^* & (A^*)^\top A^* -\gamma_w^*I_m
\end{bmatrix} \preceq {\bf 0}$;
the second inequality arises from the first inequality in (\ref{sd per});
the last inequality follows from the second inequality in (\ref{sd per}) with $A=A^*$ and all equalities are due to (\ref{eqL}).
Therefore, the Lagrangian function $L$ admits a saddle point $(\hat{\mu}_x,\hat{\mu}_w,\Sigma_x^*,\Sigma_w^*,\gamma_x^*,\gamma_w^*)$. 
Combined with saddle point theorem 
\citep[Proposition 5.1.6]{bertsekas1997nonlinear}, 
it further implies that $(\hat{\mu}_x,\hat{\mu}_w,\Sigma_x^*,\Sigma_w^*)$ 
is an optimal solution to (\ref{saddle judge}).

``$\Longrightarrow$" {\bf{Necessity.}}

Assuming that (\ref{maxminap}) has a saddle point solution $(A^*,b^*,\hat{\mu}_x,\hat{\mu}_w,\Sigma_x^*,\Sigma_w^*)$, it follows that: (i) (\ref{maxminsi}) has an optimal solution $(\Sigma_x^*,\Sigma_w^*,A^*)$,  and (ii) for the given $A^*$ and $b^*$, the least favorable distribution is determined by the mean vectors $\hat{\mu}_x$, $\hat{\mu}_w$ and covariance matrices $\Sigma_x^*$, $\Sigma_w^*$, that is, (\ref{saddle judge}) has an optimal solution $(\hat{\mu}_x,\hat{\mu}_w,\Sigma_x^*,\Sigma_w^*)$. 

We first consider the following equivalent problem of (\ref{saddle judge})
\begin{equation} \begin{aligned}
\sup_{\substack{\Sigma_x\succeq{\bf 0}\\ \Sigma_w\succeq{\bf 0}}}
%\sup_{\substack{(\Sigma_x,\Sigma_w) \in \mathbb{B}_\Sigma}} 
\sup_{\mu_x,\mu_w} 
\inf_{\substack {\gamma_x \geq 0 \\ \gamma_w \geq 0} } \
&\left\langle  \left( K^*\right) ^\top K^*,\Sigma_{x} \right\rangle +
\left\langle  \left( A^*\right) ^\top A^*,\Sigma_w \right\rangle 
+\|K^*(\mu_x-\hat{\mu}_x)-A^*(\mu_w-\hat{\mu}_w)\|^2\\
&+\gamma_x \left\{ \rho_x^2- \|\mu_x-\hat{\mu}_x\|^2-
\mbox{Tr}\left[\Sigma_x + \hat{\Sigma}_x - 2 \left( \hat{\Sigma}_x^{\frac{1}{2}}\Sigma_x\hat{\Sigma}_x^{\frac{1}{2}} \right)^{\frac{1}{2}}\right] \right\}  \\
&+\gamma_w \left\{ \rho_w^2- \|\mu_w-\hat{\mu}_w\|^2-
\mbox{Tr}\left[\Sigma_w + \hat{\Sigma}_w - 2 \left( \hat{\Sigma}_w^{\frac{1}{2}}\Sigma_w\hat{\Sigma}_w^{\frac{1}{2}} \right)^{\frac{1}{2}}\right] \right\}.  \\
\end{aligned} \label{saddle judge1} \end{equation}
For fixed $\Sigma_x$ and $\Sigma_w$, the maximization problem over $(\mu_x,\mu_w)$ in (\ref{saddle judge}) is a quadratically constrained quadratic program (QCQP) in which the two constraint functions and the objective function are all homogeneous quadratic. 
Under tha assumption that $\rho_x>0$ and $\rho_w>0$, Slater condition is readily verified.
Then according to Theorem 2.5 in \cite{ye2003new}, strong duality holds for the maximization problem over $(\mu_x,\mu_w)$ in (\ref{saddle judge}) and its dual.
Specifically, 
the maximization over $(\mu_x,\mu_w)$ and the minimization over $(\gamma_x,\gamma_w)$ in (\ref{saddle judge1}) can be interchanged. 
Then (\ref{saddle judge1}) is equivalent to %the following problem
\begin{equation} \begin{aligned}
\sup_{\substack{\Sigma_x\succeq{\bf 0}\\ \Sigma_w\succeq{\bf 0}}}
\inf_{\substack {\gamma_x \geq 0 \\ \gamma_w \geq 0} } \sup_{\mu_x,\mu_w} \ 
&\left\langle  \left( K^*\right) ^\top K^*,\Sigma_{x} \right\rangle 
+\gamma_x \left\{ \rho_x^2-
\mbox{Tr}\left[\Sigma_x + \hat{\Sigma}_x - 2 \left( \hat{\Sigma}_x^{\frac{1}{2}}\Sigma_x\hat{\Sigma}_x^{\frac{1}{2}} \right)^{\frac{1}{2}}\right] \right\} \\
&+\left\langle  \left( A^*\right) ^\top A^*,\Sigma_w \right\rangle 
+\gamma_w \left\{ \rho_w^2-
\mbox{Tr}\left[\Sigma_w + \hat{\Sigma}_w - 2 \left( \hat{\Sigma}_w^{\frac{1}{2}}\Sigma_w\hat{\Sigma}_w^{\frac{1}{2}} \right)^{\frac{1}{2}}\right] \right\}  \\
&+\begin{bmatrix}
\mu_x-\hat{\mu}_x \\ \mu_w-\hat{\mu}_w
\end{bmatrix}^\top
\begin{bmatrix}
(K^*)^\top K^*-\gamma_xI_n & (K^*)^\top A^* \\
(A^*)^\top K^* & (A^*)^\top A^* -\gamma_wI_m
\end{bmatrix}
\begin{bmatrix}
\mu_x-\hat{\mu}_x \\ \mu_w-\hat{\mu}_w
\end{bmatrix}.
\end{aligned} \label{saddle judge2} \end{equation}
For any pair of positive semidefinite matrices $(\Sigma_x,\Sigma_w)$, consider the inner minimax problem in (\ref{saddle judge2}).
It is easy to know that its optimal solution satisfies
\begin{equation}
(\gamma^*_x,\gamma^*_w) 
\in \mathcal{A} \triangleq \left\lbrace 
(\gamma_x,\gamma_w) \left|
\begin{bmatrix}
(K^*)^\top K^*-\gamma_xI_n & (K^*)^\top A^* \\
(A^*)^\top K^* & (A^*)^\top A^* -\gamma_wI_m
\end{bmatrix} \preceq \bf{0} \right. \right\rbrace 
\label{calA} \end{equation}
and the corresponding maxmizer with respect to $(\mu_x,\mu_w)$ makes the the quadratic term equal to zero, i.e.,
\begin{equation} \begin{bmatrix}
\mu_x^*-\hat{\mu}_x \\ \mu_w^*-\hat{\mu}_w
\end{bmatrix} \in \mbox{Null}
\left(  \begin{bmatrix}
(K^*)^\top K^*-\gamma^*_xI_n & (K^*)^\top A^* \\
(A^*)^\top K^* & (A^*)^\top A^* -\gamma^*_wI_m
\end{bmatrix} \right). \label{nulla} \end{equation}
Otherwise, if (\ref{calA}) does not hold, the matrix $\begin{bmatrix}
(K^*)^\top K^*-\gamma_xI_n & (K^*)^\top A^* \\
(A^*)^\top K^* & (A^*)^\top A^* -\gamma_wI_m
\end{bmatrix}$ has at least one positive eigenvalue.
In this case, let $\begin{bmatrix}
\mu_x^*-\hat{\mu}_x \\ \mu_w^*-\hat{\mu}_w
\end{bmatrix}$ take the direction of the corresponding eigenvector, and as the norm tends to $\infty$, the objective function value tends to $+\infty$.
On the other hand, based on (\ref{calA}), it is obvious that (\ref{nulla}) holds.
Thus, (\ref{saddle judge2}) is further equivalent to
\begin{equation} \begin{aligned}
\sup_{\substack{\Sigma_x\succeq{\bf 0}\\ \Sigma_w\succeq{\bf 0}}} \inf_{(\gamma_x,\gamma_w)\in \mathcal{A}}  \
&\left\langle  \left( K^*\right) ^\top K^*,\Sigma_{x} \right\rangle 
+\gamma_x \left\{ \rho_x^2-
\mbox{Tr}\left[\Sigma_x + \hat{\Sigma}_x - 2 \left( \hat{\Sigma}_x^{\frac{1}{2}}\Sigma_x\hat{\Sigma}_x^{\frac{1}{2}} \right)^{\frac{1}{2}}\right] \right\} \\
&+\left\langle  \left( A^*\right) ^\top A^*,\Sigma_w \right\rangle 
+\gamma_w \left\{ \rho_w^2-
\mbox{Tr}\left[\Sigma_w + \hat{\Sigma}_w - 2 \left( \hat{\Sigma}_w^{\frac{1}{2}}\Sigma_w\hat{\Sigma}_w^{\frac{1}{2}} \right)^{\frac{1}{2}}\right] \right\}.
\end{aligned} \label{saddle judge3} \end{equation}
Note that the objective function of (\ref{saddle judge3}) is linear in $(\gamma_x,\gamma_w)$ for each $(\Sigma_x,\Sigma_w)$, concave in $(\Sigma_x,\Sigma_w)$ for each $(\gamma_x,\gamma_w)$, and the constraint sets $\mathbb{S}_+^n \times \mathbb{S}_+^m$ and $\mathcal{A}$ are convex,  closed and nonempty.
Moreover, 
the objective function of (\ref{saddle judge3}) with $\Sigma_x=\hat{\Sigma}_x\succeq{\bf 0}$ and $\Sigma_w=\hat{\Sigma}_w\succeq{\bf 0}$ 
tends to $+\infty$ as $\gamma_x$ or $\gamma_w$ tends to $+\infty$.
On the other hand, there exists $(\bar{\gamma}_x,\bar{\gamma}_w) \in \mathcal{A}$ such that $\left( K^*\right) ^\top K^*-\bar{\gamma}_x I_n \prec \bf{0}$ and $\left( A^*\right) ^\top A^*-\bar{\gamma}_w I_m \prec \bf{0}$.
Without loss of generality, we only consider the spectral norm below, while other norms can be handled analogously via the equivalence of norms in finite-dimensional spaces.
Then the objective function of (\ref{saddle judge3}) with $(\bar{\gamma}_x,\bar{\gamma}_w)$ can be transformed to
\begin{align}
&\hspace*{-4mm} -\left\langle  \bar{\gamma}_x I_n-\left( K^*\right) ^\top K^*,\Sigma_{x} \right\rangle 
+2 \bar{\gamma}_x \mbox{Tr} \left[  \left( \hat{\Sigma}_x^{\frac{1}{2}}\Sigma_x\hat{\Sigma}_x^{\frac{1}{2}} \right)^{\frac{1}{2}}\right]
+\bar{\gamma}_x \left[ \rho_x^2- \mbox{Tr}\left( \hat{\Sigma}_x \right) \right] \notag \\
&-\left\langle \bar{\gamma}_w I_m- \left( A^*\right) ^\top A^*,\Sigma_w \right\rangle 
+2 \bar{\gamma}_w \mbox{Tr}\left[ \left( \hat{\Sigma}_w^{\frac{1}{2}}\Sigma_w\hat{\Sigma}_w^{\frac{1}{2}} \right)^{\frac{1}{2}}\right]
+\bar{\gamma}_w \left[ \rho_w^2- \mbox{Tr}\left( \hat{\Sigma}_w \right) \right] 
\notag \\
\leq &-\lambda_{\min}\left[ \bar{\gamma}_x I_n-\left( K^*\right) ^\top K^* \right]  \mbox{Tr}\left( \Sigma_x\right) +
2 \bar{\gamma}_x \sum_{i=1}^n \lambda_i^\frac{1}{2}
\left( \hat{\Sigma}_x \Sigma_x \right) +\bar{\gamma}_x \left[ \rho_x^2- \mbox{Tr}\left( \hat{\Sigma}_x \right) \right] \notag \\
&-\lambda_{\min} \left[ \bar{\gamma}_w I_m- \left( A^*\right) ^\top A^* \right] \mbox{Tr}\left( \Sigma_w\right) +
2 \bar{\gamma}_w \sum_{i=1}^m \lambda_i^\frac{1}{2}
\left( \hat{\Sigma}_w \Sigma_w \right) +\bar{\gamma}_w \left[ \rho_w^2- \mbox{Tr}\left( \hat{\Sigma}_w \right) \right] \notag \\
\leq &-\lambda_{\min}\left[ \bar{\gamma}_x I_n-\left( K^*\right) ^\top K^* \right]  \left\|  \Sigma_x \right\| +
2 \bar{\gamma}_x n \lambda_{\max}^\frac{1}{2}\left( \hat{\Sigma}_x \right) 
\left\|  \Sigma_x \right\|^\frac{1}{2}
+\bar{\gamma}_x \left[ \rho_x^2- \mbox{Tr}\left( \hat{\Sigma}_x \right) \right] \notag \\
&-\lambda_{\min}\left[ \bar{\gamma}_w I_m-\left( A^*\right) ^\top A^* \right]  \left\|  \Sigma_w \right\| +
2 \bar{\gamma}_w m \lambda_{\max}^\frac{1}{2}\left( \hat{\Sigma}_w \right) 
\left\|  \Sigma_w \right\|^\frac{1}{2}
+\bar{\gamma}_w \left[ \rho_w^2- \mbox{Tr}\left( \hat{\Sigma}_w \right) \right], \notag
\end{align}
where the first inequality is due to  $\mbox{Tr} \left[ \left( B^\frac{1}{2} D B^\frac{1}{2}\right)^\frac{1}{2} \right]=\sum_{i=1}^n \lambda_i^\frac{1}{2}\left( BD\right) $ for $B,D \in \mathbb{S}^n_+$ \citep[Fact 10.14.22]{bernstein2018scalar}, 
and the second inequality follows from $\lambda_i(BD) \leq \lambda_{\max}(BD) \leq
\lambda_{\max}(B) \lambda_{\max}(D)$ for $B,D \in \mathbb{S}^n_+$ \citep[Fact 10.22.28]{bernstein2018scalar}.
Hence, The objective function of (\ref{saddle judge3}) with $(\bar{\gamma}_x,\bar{\gamma}_w)$ tends to $-\infty$ as $\left\|  \Sigma_x \right\|$ or $\left\|  \Sigma_w \right\|$ tends to $+\infty$, owing to the positive definiteness of $\bar{\gamma}_x I_n-\left( K^*\right) ^\top K^*$ and $\bar{\gamma}_w I_m-\left( A^*\right) ^\top A^*$.

Therefore, according to Theorem 10.2 in \cite{botelho2014functional}, problem (\ref{saddle judge3}) is equivalent to
\begin{equation} \begin{aligned}
\inf_{(\gamma_x,\gamma_w) \in \mathcal{A}} \ 
\sup_{\substack{\Sigma_x\succeq{\bf 0}\\ \Sigma_w\succeq{\bf 0}}} \
&\left\langle  \left( K^*\right) ^\top K^*,\Sigma_{x} \right\rangle 
+\gamma_x \left\{ \rho_x^2-
\mbox{Tr}\left[\Sigma_x + \hat{\Sigma}_x - 2 \left( \hat{\Sigma}_x^{\frac{1}{2}}\Sigma_x\hat{\Sigma}_x^{\frac{1}{2}} \right)^{\frac{1}{2}}\right] \right\} \\
&+\left\langle  \left( A^*\right) ^\top A^*,\Sigma_w \right\rangle 
+\gamma_w \left\{ \rho_w^2-
\mbox{Tr}\left[\Sigma_w + \hat{\Sigma}_w - 2 \left( \hat{\Sigma}_w^{\frac{1}{2}}\Sigma_w\hat{\Sigma}_w^{\frac{1}{2}} \right)^{\frac{1}{2}}\right] \right\}.
\end{aligned} \label{mid} \end{equation}
Furthermore, notice that problem (\ref{mid}) is equivalent to 
\begin{equation} \begin{aligned}
\inf_{\substack {\gamma_x \geq 0 \\ \gamma_w \geq 0} }
\sup_{\substack{\Sigma_x\succeq{\bf 0}\\ \Sigma_w\succeq{\bf 0}}} 
\sup_{\mu_x,\mu_w}  \ 
&\left\langle  \left( K^*\right) ^\top K^*,\Sigma_{x} \right\rangle 
+\gamma_x \left\{ \rho_x^2-
\mbox{Tr}\left[\Sigma_x + \hat{\Sigma}_x - 2 \left( \hat{\Sigma}_x^{\frac{1}{2}}\Sigma_x\hat{\Sigma}_x^{\frac{1}{2}} \right)^{\frac{1}{2}}\right] \right\} \\
&+\left\langle  \left( A^*\right) ^\top A^*,\Sigma_w \right\rangle 
+\gamma_w \left\{ \rho_w^2-
\mbox{Tr}\left[\Sigma_w + \hat{\Sigma}_w - 2 \left( \hat{\Sigma}_w^{\frac{1}{2}}\Sigma_w\hat{\Sigma}_w^{\frac{1}{2}} \right)^{\frac{1}{2}}\right] \right\}  \\
&+\begin{bmatrix}
\mu_x-\hat{\mu}_x \\ \mu_w-\hat{\mu}_w
\end{bmatrix}^\top
\begin{bmatrix}
(K^*)^\top K^*-\gamma_xI_n & (K^*)^\top A^* \\
(A^*)^\top K^* & (A^*)^\top A^* -\gamma_wI_m
\end{bmatrix}
\begin{bmatrix}
\mu_x-\hat{\mu}_x \\ \mu_w-\hat{\mu}_w
\end{bmatrix},
\end{aligned} \label{saddle judge22} \end{equation}
since when $(\gamma_x,\gamma_w) \in \mathcal{A}$, as defined in (\ref{calA}), the supremum of the quadratic term with respect to $(\mu_x,\mu_w)$ in (\ref{saddle judge22}) is zero and thus the objective function values of (\ref{mid}) and (\ref{saddle judge22}) are identical; otherwise, the supremum is $+\infty$ as $\mu_x$ or $\mu_w$ tends to $\infty$.
Consequently, based on the above argument that (\ref{saddle judge1}) is equivalent to (\ref{saddle judge22}), we further have
$$%\sup_{\substack{\mu_x,\mu_w \\(\Sigma_x,\Sigma_w) \in \mathbb{B}_\Sigma}}
\sup_{\substack{\Sigma_x \succeq {\bf 0} \\ \Sigma_w\succeq {\bf 0}}}
\sup_{\mu_x,\mu_w}
\inf_{\substack {\gamma_x \geq 0 \\ \gamma_w \geq 0}} 
L(\mu_x,\mu_w,\Sigma_x,\Sigma_w;\gamma_x,\gamma_w)=
\inf_{\substack {\gamma_x \geq 0 \\ \gamma_w \geq 0} }
%\sup_{\substack{\mu_x,\mu_w \\(\Sigma_x,\Sigma_w) \in \mathbb{B}_\Sigma}}
\sup_{\substack{\Sigma_x \succeq {\bf 0} \\ \Sigma_w\succeq {\bf 0}}}
\sup_{\mu_x,\mu_w} 
L(\mu_x,\mu_w,\Sigma_x,\Sigma_w;\gamma_x,\gamma_w),
$$
where the left side of the above equality is (\ref{saddle judge1}) and the right side is (\ref{saddle judge22}) by the definition of $L$ in (\ref{F}). %because the maximum value of the quadratic term with respect to $(\mu_x,\mu_w)$ in $F$ is $+\infty$ when $(\gamma_x,\gamma_w) \notin \mathcal{A}$, and 0 when $(\gamma_x,\gamma_w) \in \mathcal{A}$.
Then for the optimal solution $(\hat{\mu}_x,\hat{\mu}_w,\Sigma_x^*,\Sigma_w^*)$ to (\ref{saddle judge}), there exists an optimal dual solution $(\gamma^*_x$, $\gamma^*_w) \in \mathcal{A}$ such that for any $\mu_x$, $\mu_w$, $\Sigma_x \succeq {\bf 0}$, $\Sigma_w \succeq {\bf 0}$, $\gamma_x \geq 0$ and $\gamma_w \geq 0$, 
the Lagrangian function of (\ref{saddle judge}), defined as $L$ in (\ref{F}), satisfies
\begin{equation}
L(\mu_x,\mu_w,\Sigma_x,\Sigma_w;\gamma^*_x,\gamma^*_w) \leq
L(\hat{\mu}_x,\hat{\mu}_w,\Sigma_x^*,\Sigma_w^*;
\gamma^*_x,\gamma^*_w) \leq  
L(\hat{\mu}_x,\hat{\mu}_w,\Sigma_x^*,\Sigma_w^*;\gamma_x,\gamma_w).
\label{sd perd} \end{equation}
Then for any $\Sigma_x \succeq {\bf 0}$, $\Sigma_w \succeq {\bf 0}$, $A$, $\gamma_x \geq 0$ and $\gamma_w \geq 0$, the objective function of (\ref{maxminL}),  defined as $G$ in (\ref{G}), satisfies
\begin{align*}
&G(\Sigma_x,\Sigma_w,A^*,\gamma^*_x,\gamma^*_w)=
L(\hat{\mu}_x,\hat{\mu}_w,\Sigma_x,\Sigma_w;\gamma^*_x,\gamma^*_w)\\
&\leq  L(\hat{\mu}_x,\hat{\mu}_w,\Sigma_x^*,\Sigma_w^*;
\gamma^*_x,\gamma^*_w)=
G(\Sigma_x^*,\Sigma_w^*,A^*,\gamma^*_x,\gamma^*_w)\\
&\leq L(\hat{\mu}_x,\hat{\mu}_w,\Sigma_x^*,\Sigma_w^*;\gamma_x,\gamma_w)
=G(\Sigma_x^*,\Sigma_w^*,A^*,\gamma_x,\gamma_w)
\leq G(\Sigma_x^*,\Sigma_w^*,A,\gamma_x,\gamma_w),
\end{align*}
where the first inequality follows from the first inequality in (\ref{sd perd}) for $\mu_x=\hat{\mu}_x$ and $\mu_w=\hat{\mu}_w$; the second inequality arises from the second inequality in (\ref{sd perd}); the last inequality is due to the fact that $(\Sigma_x^*,\Sigma_w^*,A^*)$ is an optimal solution of (\ref{maxminsi}) and all equalities follow from (\ref{eqL}).
Therefore, $(\Sigma_x^*,\Sigma_w^*,A^*,\gamma_x^*,\gamma_w^*)$
is a saddle point solution of
(\ref{maxminL}) such that
$\begin{bmatrix}
(K^*)^\top K^*-\gamma^*_xI_n & (K^*)^\top A^* \\
(A^*)^\top K^* & (A^*)^\top A^* -\gamma^*_wI_m
\end{bmatrix} \preceq {\bf{0}}$.
\end{proof}

Nevertheless, the condition presented in Lemma \ref{snsd} not easily checked in certain cases. 
Specifically, if the matrix $H\Sigma_x^* H^\top + \Sigma_w^*$ is not positive definite, the uniqueness of the optimal solution $A^*$ for (\ref{maxmin1}) is no longer guaranteed and it becomes challenging to identify which solution can form a saddle point solution for (\ref{maxminL}).  
However, Section 4 proposes that for this case, the existence of a saddle point can be determined numerically by solving two SDPs.
Therefore, we shall only focus on a simplified case under a mild assumption below.

{\bf{Assumption 2.} }

i) The Wasserstein radii $\rho_x>0$ and $\rho_w>0$. 

ii) The nominal covariance
matrices $\hat{\Sigma}_x$ and  $\hat{\Sigma}_w$ are positive definite.

Refocusing on problem (\ref{maxmin1}), it is well known that
for fixed $\Sigma_x$ and $\Sigma_w$, the optimal solution to the inner minimization problem is given by $A^* = \Sigma_x H^\top (H \Sigma_x H^\top + \Sigma_w)^{-}$.
By substituting $A^*$ into (\ref{maxmin1}), we can derive its equivalent problem as follows:
\begin{equation}
\begin{array}{cl}
\sup \limits_{\Sigma_{x},\Sigma_w}
& \mbox{Tr}\left[\Sigma_x - \Sigma_x H^\top \left( H \Sigma_x H^\top +
\Sigma_w \right)^{-} H \Sigma_x\right] \\
\mbox{s.t.} &  \Sigma_x \succeq{\bf 0}, \quad \Sigma_w \succeq {\bf 0}, \\%[0.5ex]
& \mbox{Tr}\left[\Sigma_x + \hat{\Sigma}_x - 2 \left( \hat{\Sigma}_x^{\frac{1}{2}}\Sigma_x\hat{\Sigma}_x^{\frac{1}{2}} \right)^{\frac{1}{2}}\right] \leq \rho_x^2, \\[0.5ex]
& \mbox{Tr}\left[\Sigma_w + \hat{\Sigma}_w - 2 \left( \hat{\Sigma}_w^{\frac{1}{2}}\Sigma_w\hat{\Sigma}_w^{\frac{1}{2}} \right)^{\frac{1}{2}}\right] \leq \rho_w^2. \\[0.5ex]
\end{array} \label{maxmin} \end{equation}

The objective function of (\ref{maxmin}), i.e.,
the inner minimization problem of (\ref{maxmin1}) 
$$\inf_A \quad
\left\langle \left(I_n-AH\right)^\top \left(I_n-AH\right),\Sigma_{x} \right\rangle 
+\left\langle  A^\top A,\Sigma_w \right\rangle$$
can be interpreted as minimizing a set of linear functions over the variable $(\Sigma_x,\Sigma_w)$,  which implies that it is concave in $(\Sigma_x,\Sigma_w)$ \cite{boyd2004convex}. 
Consequently, (\ref{maxmin}) is a convex optimization problem in which strong duality holds under Slater condition.
As detailed in Corollary 3.1 of \cite{nguyen2023bridging}, problem (\ref{maxmin}) can be transformed into the following SDP problem 
\begin{equation}
\begin{aligned}
\sup \limits_{\Sigma_x,\Sigma_w,V_x,V_w,U} 
&\mbox{Tr}(\Sigma_x)-\mbox{Tr}(U) \\
\mbox{s.t.       } \ \ \ \
&\Sigma_x \succeq {\bf 0}, \Sigma_w \succeq {\bf 0}, \ V_x  \succeq{\bf 0}, \ V_w  \succeq{\bf 0},\\
&\left[\begin{array} {cc}
\hat{\Sigma}_x^{\frac{1}{2}} \Sigma_x \hat{\Sigma}_x^{\frac{1}{2}} &
V_x \\ V_x &I_n
\end{array}
\right]\succeq {\bf 0}, \quad
\left[\begin{array} {cc}
\hat{\Sigma}_w^{\frac{1}{2}} \Sigma_w \hat{\Sigma}_w^{\frac{1}{2}} &
V_w \\ V_w &I_m
\end{array}
\right]\succeq {\bf 0},\\
&\mbox{Tr}\left( \Sigma_x+\hat{\Sigma}_x-2V_x \right) \succeq {\bf 0}, \quad
\mbox{Tr}\left( \Sigma_w+\hat{\Sigma}_w-2V_w \right) \succeq {\bf 0},\\
&\begin{bmatrix}
U & \Sigma_x H^\top \\ H \Sigma & H \Sigma_x H^\top+\Sigma_w
\end{bmatrix} \succeq {\bf 0}.
\end{aligned} \label{maxmin SDP}
\end{equation} 
In the following theorem, we will demonstrate that under a mild condition, the condition in Lemma \ref{snsd} can be simplified, with all parameters determined through the convex problem (\ref{maxmin}) and its dual.

\begin{theorem} \label{snsdu}
	Suppose that Assumption 2 holds
	and problem (\ref{maxmin}) has a primal and dual optimal solution pair
	$(\Sigma_x^*, \Sigma_w^*,\gamma_x^*, \gamma_w^*)$.
	Define $A^* \triangleq \Sigma_x^* H^\top \left( H \Sigma_x^* H^\top+\Sigma_w^*\right) ^{-1}$ and 
	$K^* \triangleq I_n - A^* H$. 
	Then the saddle point solution of (\ref{minmax}) exists if and only if 
	$\begin{bmatrix}
	(K^*)^\top K^* - \gamma_x^*I_n & (K^*)^\top A^* \\
	(A^*)^\top K^* & (A^*)^\top A^* - \gamma_w^*I_m
	\end{bmatrix} \preceq {\bf{0}}$.
\end{theorem}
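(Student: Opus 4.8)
The plan is to deduce the equivalence from Lemma~\ref{snsd} by exhibiting one explicit saddle point of (\ref{maxminL}). Write $\phi(\Sigma_x,\Sigma_w)=\mathrm{Tr}[\Sigma_x-\Sigma_xH^\top(H\Sigma_xH^\top+\Sigma_w)^{-1}H\Sigma_x]$ for the objective of (\ref{maxmin}), and $d_x(\Sigma_x)=\mathrm{Tr}[\Sigma_x+\hat\Sigma_x-2(\hat\Sigma_x^{1/2}\Sigma_x\hat\Sigma_x^{1/2})^{1/2}]$, $d_w(\Sigma_w)$ analogously, for the squared Gelbrich distances. \textbf{Step 1 (well-posedness and uniqueness under Assumption~2).} First I would show that $\hat\Sigma_x,\hat\Sigma_w\succ{\bf 0}$ forces every optimal $(\Sigma_x^*,\Sigma_w^*)$ of (\ref{maxmin}) to be positive definite, and in particular $H\Sigma_x^*H^\top+\Sigma_w^*\succ{\bf 0}$: the gradient of the squared Gelbrich distance to a positive-definite centre blows up as one approaches a singular matrix, so perturbing a zero eigenvalue of $\Sigma_x^*$ or $\Sigma_w^*$ strictly relaxes the corresponding ball constraint while, by Loewner-monotonicity of $\phi$, never decreasing the objective---hence a singular point cannot be optimal. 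This makes $A^*=\Sigma_x^*H^\top(H\Sigma_x^*H^\top+\Sigma_w^*)^{-1}$ the \emph{unique} minimiser of the inner problem in (\ref{maxmin1}) at $(\Sigma_x^*,\Sigma_w^*)$, so that, with $K^*=I_n-A^*H$, $\phi(\Sigma_x^*,\Sigma_w^*)=\langle(K^*)^\top K^*,\Sigma_x^*\rangle+\langle(A^*)^\top A^*,\Sigma_w^*\rangle$; and since $(\Sigma_x^*,\Sigma_w^*)$ is interior to the PSD cones and, by monotonicity of $\phi$, distinct from $(\hat\Sigma_x,\hat\Sigma_w)$, the KKT stationarity of (\ref{maxmin})---whose partial gradients at the optimum are exactly $(K^*)^\top K^*$ and $(A^*)^\top A^*$ by the envelope theorem---determines the dual optimal pair $(\gamma_x^*,\gamma_w^*)$ uniquely. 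I would also argue that the optimal $(\Sigma_x^*,\Sigma_w^*)$ is itself unique, using strict monotonicity of $\phi$ and the geometry of the Gelbrich balls.

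\textbf{Step 2 (the tuple is a saddle point of (\ref{maxminL})).} I would then verify the two saddle inequalities for the function $G$ of (\ref{G}) at $(\Sigma_x^*,\Sigma_w^*,A^*,\gamma_x^*,\gamma_w^*)$. For the ``min side'': minimising the quadratic part of $G$ over $A$ at $(\Sigma_x^*,\Sigma_w^*)$ gives $A^*$ with value $\phi(\Sigma_x^*,\Sigma_w^*)$ (Step~1), while the terms $\gamma_x(\rho_x^2-d_x(\Sigma_x^*))$ and $\gamma_w(\rho_w^2-d_w(\Sigma_w^*))$ are non-negative by feasibility (so minimised over $\gamma\ge0$ at $0$) and already vanish at $(\gamma_x^*,\gamma_w^*)$ by complementary slackness of (\ref{maxmin}); hence $G(\Sigma_x^*,\Sigma_w^*,A^*,\gamma_x^*,\gamma_w^*)=\phi(\Sigma_x^*,\Sigma_w^*)=\inf_{A,\gamma_x,\gamma_w}G(\Sigma_x^*,\Sigma_w^*,A,\gamma_x,\gamma_w)$. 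For the ``max side'': by the Cartesian-product structure of the saddle-point set of (\ref{maxminsi}) (the Stoer--Witzgall fact already used in Theorem~\ref{equp}) together with the uniqueness of the best response $A^*$ to $(\Sigma_x^*,\Sigma_w^*)$, $A^*$ is a minimax-optimal choice of $A$ in (\ref{maxminsi}), so $\sup_{(\Sigma_x,\Sigma_w)\in\mathbb B_\Sigma}[\langle(K^*)^\top K^*,\Sigma_x\rangle+\langle(A^*)^\top A^*,\Sigma_w\rangle]=\phi(\Sigma_x^*,\Sigma_w^*)$; since the KKT system of this maximisation coincides with that of (\ref{maxmin}), $(\gamma_x^*,\gamma_w^*)$ is also a dual optimal multiplier of it, and the strong duality (\ref{maxminis in})--(\ref{maxminis in dual}) applied with $A=A^*$ then gives $\sup_{\Sigma_x,\Sigma_w\succeq{\bf 0}}G(\Sigma_x,\Sigma_w,A^*,\gamma_x^*,\gamma_w^*)=\phi(\Sigma_x^*,\Sigma_w^*)$, the supremum being attained at $(\Sigma_x^*,\Sigma_w^*)$.

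\textbf{Step 3 (the two directions).} For ``$\Leftarrow$'', if the matrix inequality in the statement holds at $(A^*,\gamma_x^*,\gamma_w^*)$, then by Step~2 the tuple $(\Sigma_x^*,\Sigma_w^*,A^*,\gamma_x^*,\gamma_w^*)$ is a saddle point of (\ref{maxminL}) of exactly the form required in Lemma~\ref{snsd}, so (\ref{minmax}) admits a saddle point. For ``$\Rightarrow$'', if (\ref{minmax}) has a saddle point, Lemma~\ref{snsd} furnishes \emph{some} saddle point $(\bar\Sigma_x,\bar\Sigma_w,\bar A,\bar\gamma_x,\bar\gamma_w)$ of (\ref{maxminL}) satisfying the matrix inequality; by Lemma~\ref{spr}(i), $(\bar\Sigma_x,\bar\Sigma_w,\bar A)$ is a saddle point of (\ref{maxminsi}), whence $(\bar\Sigma_x,\bar\Sigma_w)$ is optimal for (\ref{maxmin}) and $\bar A$ is its best response. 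By the uniqueness established in Step~1, $(\bar\Sigma_x,\bar\Sigma_w)=(\Sigma_x^*,\Sigma_w^*)$, $\bar A=A^*$ (hence $I_n-\bar AH=K^*$), and $(\bar\gamma_x,\bar\gamma_w)=(\gamma_x^*,\gamma_w^*)$; thus the matrix inequality holds at our $(A^*,\gamma_x^*,\gamma_w^*)$, completing the equivalence.

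\textbf{Where the difficulty lies.} The substantive step is Step~1: establishing that Assumption~2 really does force $H\Sigma_x^*H^\top+\Sigma_w^*\succ{\bf 0}$---which both legitimises the genuine inverse in the definition of $A^*$ and makes the inner minimiser unique---and the uniqueness of the primal--dual data $(\Sigma_x^*,\Sigma_w^*,\gamma_x^*,\gamma_w^*)$. Once those structural facts are in place, Steps~2 and 3 amount to convex-duality bookkeeping layered on Lemmas~\ref{snsd} and~\ref{spr}.
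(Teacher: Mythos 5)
Your proposal follows essentially the same architecture as the paper's proof: establish positive definiteness of the optimal covariances under Assumption 2 so that $A^*$ is the well-defined and unique best response (the paper simply cites Theorem 3.1 of the reference for the stronger bound $\Sigma_x^*\succeq\lambda_{\min}(\hat\Sigma_x)I_n$, where your perturbation sketch would also work); verify that $(\Sigma_x^*,\Sigma_w^*,A^*,\gamma_x^*,\gamma_w^*)$ is a saddle point of (\ref{maxminL}) via complementary slackness on the minimization side and the Danskin/envelope identity for $\bigtriangledown\phi$ plus strong duality on the maximization side; and then reduce the condition of Lemma \ref{snsd} to a check of the matrix inequality at this one tuple. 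Your Steps 2 and 3 ($\Leftarrow$) are sound and match the paper's reasoning.

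The one genuine soft spot is the uniqueness claim in Step 1 on which your necessity direction rests. You assert that the primal optimum $(\Sigma_x^*,\Sigma_w^*)$ of (\ref{maxmin}) is unique ``using strict monotonicity of $\phi$ and the geometry of the Gelbrich balls,'' but $\phi$ is a pointwise infimum of functions linear in $(\Sigma_x,\Sigma_w)$, hence concave but not strictly concave, and Loewner monotonicity does not single out a unique maximizer over a convex body (in degenerate cases, e.g.\ $H={\bf 0}$, the objective is insensitive to whole directions of $\Sigma_w$ and the optimal set is a face, not a point). The paper avoids needing primal uniqueness altogether: given any other saddle point $(\bar\Sigma_x,\bar\Sigma_w,\bar A,\bar\gamma_x,\bar\gamma_w)$ of (\ref{maxminL}), the Cartesian-product property of saddle-point sets lets one replace $(\bar\Sigma_x,\bar\Sigma_w)$ by the given $(\Sigma_x^*,\Sigma_w^*)$; the strict convexity of $A\mapsto G(\Sigma_x^*,\Sigma_w^*,A,\gamma_x,\gamma_w)$, which follows from $H\Sigma_x^*H^\top+\Sigma_w^*\succ{\bf 0}$, then forces $\bar A=A^*$, and the uniqueness of the optimal multipliers and inner maximizer for fixed $A^*$ (Proposition A.2 of the cited reference, or your own envelope-theorem argument for the multipliers) forces $(\bar\gamma_x,\bar\gamma_w)=(\gamma_x^*,\gamma_w^*)$. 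Since the matrix inequality in Lemma \ref{snsd} involves only $(A,\gamma_x,\gamma_w)$, this is all that necessity requires; your Step 3 should be rerouted through this product-structure device rather than through uniqueness of $(\Sigma_x^*,\Sigma_w^*)$.
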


\begin{proof}
	Under the assumption that the nominal covariance matrices $\hat{\Sigma}_x$ and  $\hat{\Sigma}_w$ are positive definite, it follows from Theorem 3.1 in \cite{nguyen2023bridging} that (\ref{maxmin}) admits an optimal solution $(\Sigma_x^*,\Sigma_w^*)$ with 
	\begin{equation} \Sigma_x^* \succeq \lambda_{\min}\left( \hat{\Sigma}_x\right)  I_n \succ {\bf 0} \label{pdsx} \end{equation} 
	and 
	\begin{equation} \Sigma_w^* \succeq \lambda_{\min}\left( \hat{\Sigma}_w\right)  I_m \succ {\bf 0}, \label{pdsw} \end{equation} 
	which implies that $H \Sigma_x^* H^\top+\Sigma_w^* \succ {\bf 0}$. Then $A^*=\Sigma_x^* H^\top \left( H \Sigma_x^* H^\top+\Sigma_w^*\right) ^{-1}$ is well-defined.
	
	Subsequently, we shall demonstrate that $(\Sigma_x^*, \Sigma_w^*, A^*,\gamma_x^*, \gamma_w^*)$ is a saddle point solution of (\ref{maxminL}).
	We denote the Lagrangian function of (\ref{maxmin}) as
	\begin{align}
	g(\Sigma_x,\Sigma_w;\gamma_x,\gamma_w) \triangleq &
	\mbox{Tr}\left[\Sigma_x - \Sigma_x H^\top \left( H \Sigma_x H^\top +
	\Sigma_w \right)^{-1} H \Sigma_x\right] \notag \\
	&+\gamma_x \left\{ \rho_x^2-
	\mbox{Tr}\left[\Sigma_x + \hat{\Sigma}_x - 2 \left( \hat{\Sigma}_x^{\frac{1}{2}}\Sigma_x\hat{\Sigma}_x^{\frac{1}{2}} \right)^{\frac{1}{2}}\right] \right\} 
	+\gamma_w \left\{ \rho_w^2-
	\mbox{Tr}\left[\Sigma_w + \hat{\Sigma}_w - 2 \left( \hat{\Sigma}_w^{\frac{1}{2}}\Sigma_w\hat{\Sigma}_w^{\frac{1}{2}} \right)^{\frac{1}{2}}\right] \right\} \notag \\
	=& \inf_A \ \left\langle  \left(I_n-AH \right) ^\top \left(I_n-AH \right) ,\Sigma_{x} \right\rangle 
	+\left\langle  A^\top A,\Sigma_w \right\rangle  \notag \\
	&+\gamma_x \left\{ \rho_x^2-
	\mbox{Tr}\left[\Sigma_x + \hat{\Sigma}_x - 2 \left( \hat{\Sigma}_x^{\frac{1}{2}}\Sigma_x\hat{\Sigma}_x^{\frac{1}{2}} \right)^{\frac{1}{2}}\right] \right\} 
	+\gamma_w \left\{ \rho_w^2-
	\mbox{Tr}\left[\Sigma_w + \hat{\Sigma}_w - 2 \left( \hat{\Sigma}_w^{\frac{1}{2}}\Sigma_w\hat{\Sigma}_w^{\frac{1}{2}} \right)^{\frac{1}{2}}\right] \right\}  \notag \\
	=& \inf_A \ G(\Sigma_x,\Sigma_w, A,\gamma_x,\gamma_w), \label{regG}
	\end{align} 
	where $G$ is the objective function of (\ref{maxminL}) defined in (\ref{G}).
	According to the assumption that (\ref{maxmin}) has a primal and dual optimal solution pair
	$(\Sigma_x^*, \Sigma_w^*,\gamma_x^*, \gamma_w^*)$, we have \citep[Proposition 5.1.5]{bertsekas1997nonlinear}
	\begin{equation}
	g(\Sigma_x^*,\Sigma_w^*;\gamma_x^*,\gamma_w^*)=\max_{\Sigma_x \succeq {\bf 0},\Sigma_w \succeq {\bf 0}}  g(\Sigma_x,\Sigma_w;\gamma_x^*,\gamma_w^*)
	\label{gmmin} \end{equation}
	and the following complementary slackness conditions
	\begin{subequations}
	\begin{align}
	&\gamma_x^* \left\lbrace \rho_x^2-\mbox{Tr}\left[\Sigma_x^* + \hat{\Sigma}_x - 2 \left( \hat{\Sigma}_x^{\frac{1}{2}} \Sigma_x^* \hat{\Sigma}_x^{\frac{1}{2}} \right)^{\frac{1}{2}}\right] \right\rbrace =0,  \\
	&\gamma_w^* \left\lbrace \rho_w^2-\mbox{Tr}\left[\Sigma_w^* + \hat{\Sigma}_w - 2 \left( \hat{\Sigma}_w^{\frac{1}{2}} \Sigma_w^* \hat{\Sigma}_w^{\frac{1}{2}} \right)^{\frac{1}{2}}\right] \right\rbrace =0.
	\end{align}  \label{cs2}
	\end{subequations}
	Moreover, for the optimal solution $(\Sigma_x^*, \Sigma_w^*)$ to (\ref{maxmin}),
	$A^*=\Sigma_x^* H^\top \left( H \Sigma_x^* H^\top+\Sigma_w^*\right) ^{-1}$ is the unique minimizer of the inner minimization problem in (\ref{maxmin1}), which implies that $(\Sigma_x^*,\Sigma_w^*,A^*)$ is a saddle point solution of (\ref{maxmin1}).
	Combined with the equality of the optimal values of (\ref{maxmin1}) and (\ref{maxminL}) and the complementary slackness conditions (\ref{cs2}), it follows that $(\Sigma_x^*, \Sigma_w^*, A^*,\gamma_x^*, \gamma_w^*)$ is an optimal solution of (\ref{maxminL}).
	
	On the other hand, for given $(A^*,\gamma_x^*, \gamma_w^*)$, we obtain
	$$\bigtriangledown_{\Sigma_x} G(\Sigma_x^*,\Sigma_w^*,A^*,\gamma_x^*,\gamma_w^*)=
	\bigtriangledown_{\Sigma_x} g(\Sigma_x^*,\Sigma_w^*;\gamma_x^*,\gamma_w^*)=\bf{0},$$
	where the first equality is due to the relationship between $g$ and $G$ given by (\ref{regG}), Danskin's theorem \citep[Proposition B.25]{bertsekas1997nonlinear} and the uniqueness of the minimizer $A^*$, and the second equality follows from (\ref{gmmin}) and the positive definiteness of $\Sigma_x^*$ given by (\ref{pdsx}).
	Similarly, we have
	$$\bigtriangledown_{\Sigma_w} G(\Sigma_x^*,\Sigma_w^*,A^*,\gamma_x^*,\gamma_w^*)=
	\bigtriangledown_{\Sigma_w} g(\Sigma_x^*,\Sigma_w^*;\gamma_x^*,\gamma_w^*)=\bf{0}.$$
	Therefore, $(\Sigma_x^*,\Sigma_w^*)$ is the maximizer of $G(\Sigma_x,\Sigma_w,A^*,\gamma_x^*,\gamma_w^*)$, which implies that
	$(\Sigma_x^*, \Sigma_w^*, A^*,\gamma_x^*, \gamma_w^*)$ is a saddle point solution of (\ref{maxminL}).
	
	Finally, we prove by contradiction that (\ref{maxminL}) has a unique saddle point solution.
	If $(\bar{\Sigma}_x^*, \bar{\Sigma}_w^*, \bar{A}^*, \bar{\gamma}_x^*, \bar{\gamma}_w^*)$ is an another saddle point solution of (\ref{maxminL}),
	$(\Sigma_x^*, \Sigma_w^*, \bar{A}^*, \bar{\gamma}_x^*, \bar{\gamma}_w^*)$ is also a saddle point solution of (\ref{maxminL}) \citep[Theorem 6.2.9]{stoer2012convexity}.
	Then the uniqueness of the minimizer $A^*$ for $G(\Sigma_x^*,\Sigma_w^*, A,\gamma_x,\gamma_w)$ gives rise to $\bar{A}^*=A^*$.
	Furthermore, for given $A^*$, it follows from Proposition A.2 in \cite{nguyen2023bridging} that
	there exists a unique minimizer $(\gamma_x^*,\gamma_w^*)$ 
	and the unique optimal solution $(\Sigma_x^*, \Sigma_w^*)$ to the inner maximization problem in (\ref{maxminLe}).
	Then it contradicts the assumption that $(\bar{\Sigma}_x^*, \bar{\Sigma}_w^*,  \bar{\gamma}_x^*, \bar{\gamma}_w^*)$ is also an optimal solution to the inner minimax problem in (\ref{maxminLe}) for given $\bar{A}^*=A^*$.
	
	Consequently, the primal and dual optimal solution pair $(\Sigma_x^*, \Sigma_w^*,\gamma_x^*, \gamma_w^*)$ of (\ref{maxmin}) and the corresponding $A^*$ yield the unique saddle point solution of (\ref{maxminL}). Hence, it follows directly from Lemma \ref{snsd} that the saddle point solution of (\ref{minmax}) exists if and only if 
	$\begin{bmatrix}
	(K^*)^\top K^* - \gamma_x^*I_n & (K^*)^\top A^* \\
	(A^*)^\top K^* & (A^*)^\top A^* - \gamma_w^*I_m
	\end{bmatrix} \preceq {\bf{0}}$.
\end{proof}

\subsection{A Sufficient Condition for the Existence of the Saddle Point Solution for (\ref{minmax})}
Section 3.2 establishes a necessary and sufficient condition for the existence of the saddle point solution in (\ref{minmax}).  
However, verifying this condition requires determining both the primal and dual optimal solutions of (\ref{maxmin}), which may be computationally intractable for large-scale problems.
In this section, we will present a more direct and computationally simple sufficient condition for the existence of a saddle point solution to (\ref{minmax}), which intuitively indicates that when the Wasserstein radii $\rho_x$ and $\rho_w$ are small enough, the saddle point always exists.

\begin{theorem} \label{sd}
	Suppose that Assumption 2 holds.
	If the Wasserstein radii $\rho_x$ and $\rho_w$ satisfy the inequality 
	$\rho_x \rho_w \leq \lambda_{\min}^{\frac{1}{2}}(\hat{\Sigma}_x) \lambda_{\min}^{\frac{1}{2}}(\hat{\Sigma}_w)$, 
	%the saddle point of the original problem (\ref{minmax}) exists.
	problem (\ref{minmax}) has a saddle point solution.
\end{theorem}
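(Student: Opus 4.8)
The plan is to verify, under the stated bound on the radii, the semidefinite inequality appearing in Theorem~\ref{snsdu}; by that theorem the existence of a saddle point for (\ref{minmax}) then follows. Under Assumption~2, Theorem~3.1 of \cite{nguyen2023bridging} guarantees that (\ref{maxmin}) has an optimal solution $(\Sigma_x^*,\Sigma_w^*)$ with $\Sigma_x^*\succeq\lambda_{\min}(\hat\Sigma_x)I_n\succ{\bf 0}$ and $\Sigma_w^*\succeq\lambda_{\min}(\hat\Sigma_w)I_m\succ{\bf 0}$, and since $\rho_x,\rho_w>0$ ensures Slater's condition, a dual optimal pair $(\gamma_x^*,\gamma_w^*)$ exists as well; let $A^*$ and $K^*$ be as in Theorem~\ref{snsdu}. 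From the proof of Theorem~\ref{snsdu}, $(\Sigma_x^*,\Sigma_w^*,A^*,\gamma_x^*,\gamma_w^*)$ is a saddle point of (\ref{maxminL}), so in particular $\bigtriangledown_{\Sigma_x}G=\bigtriangledown_{\Sigma_w}G={\bf 0}$ there; recalling the form of $G$ in (\ref{G}) and the derivative of the Bures term, these stationarity conditions read
\[
(K^*)^\top K^*=\gamma_x^*\left(I_n-M_x\right),\qquad (A^*)^\top A^*=\gamma_w^*\left(I_m-M_w\right),
\]
with $M_x\triangleq\hat\Sigma_x^{\frac12}(\hat\Sigma_x^{\frac12}\Sigma_x^*\hat\Sigma_x^{\frac12})^{-\frac12}\hat\Sigma_x^{\frac12}\succ{\bf 0}$ and $M_w$ defined analogously. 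Since $K^*$ is nonsingular the first identity forces $\gamma_x^*>0$; and if $\gamma_w^*=0$ the second gives $A^*={\bf 0}$, whence the matrix of Theorem~\ref{snsdu} equals $\mathrm{diag}(-\gamma_x^*M_x,{\bf 0})\preceq{\bf 0}$ and we are done, so we may assume $\gamma_w^*>0$ as well.

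\textbf{Reduction.} It then remains to prove $\tfrac{1}{\gamma_x^*}K^*(K^*)^\top+\tfrac{1}{\gamma_w^*}A^*(A^*)^\top\preceq I_n$, which is equivalent to the inequality of Theorem~\ref{snsdu}: writing $C=[\,K^*\ \ A^*\,]$ and $D=\mathrm{diag}(\gamma_x^*I_n,\gamma_w^*I_m)\succ{\bf 0}$, that matrix equals $C^\top C-D$, and $C^\top C-D\preceq{\bf 0}\Longleftrightarrow CD^{-1}C^\top\preceq I_n$. Because $K^*(K^*)^\top$ and $(K^*)^\top K^*$ share the same spectrum, the first stationarity identity gives $\lambda_{\max}\!\big(K^*(K^*)^\top\big)=\gamma_x^*\big(1-\lambda_{\min}(M_x)\big)$, hence $\tfrac{1}{\gamma_x^*}K^*(K^*)^\top\preceq\big(1-\lambda_{\min}(M_x)\big)I_n$, and likewise $\tfrac{1}{\gamma_w^*}A^*(A^*)^\top\preceq\big(1-\lambda_{\min}(M_w)\big)I_n$. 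So it suffices to establish $\lambda_{\min}(M_x)+\lambda_{\min}(M_w)\ge1$.

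\textbf{The key bound.} The crux is $\lambda_{\min}(M_x)\ge(1+\delta_x)^{-1}$ with $\delta_x\triangleq\rho_x/\sqrt{\lambda_{\min}(\hat\Sigma_x)}$. From the first identity, $M_x=I_n-\tfrac{1}{\gamma_x^*}(K^*)^\top K^*\preceq I_n$, so the transport map $T_x\triangleq M_x^{-1}$ satisfies $T_x\succeq I_n$ and $\Sigma_x^*=T_x\hat\Sigma_x T_x$. By complementary slackness (and $\gamma_x^*>0$) the Gelbrich constraint in (\ref{maxmin}) is active at $\Sigma_x^*$, and using the identity $(\hat\Sigma_x^{\frac12}\Sigma_x^*\hat\Sigma_x^{\frac12})^{\frac12}=\hat\Sigma_x^{\frac12}T_x\hat\Sigma_x^{\frac12}$ it rewrites as $\rho_x^2=\mathrm{Tr}\!\big[\hat\Sigma_x(T_x-I_n)^2\big]\ge\lambda_{\min}(\hat\Sigma_x)\,\|T_x-I_n\|^2$; since $T_x\succeq I_n$ this yields $\lambda_{\max}(T_x)=1+\|T_x-I_n\|\le1+\delta_x$, i.e.\ $\lambda_{\min}(M_x)\ge(1+\delta_x)^{-1}$, and the same computation gives $\lambda_{\min}(M_w)\ge(1+\delta_w)^{-1}$ with $\delta_w\triangleq\rho_w/\sqrt{\lambda_{\min}(\hat\Sigma_w)}$. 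An elementary manipulation shows $\tfrac{1}{1+\delta_x}+\tfrac{1}{1+\delta_w}\ge1\Longleftrightarrow\delta_x\delta_w\le1$, and $\delta_x\delta_w\le1$ is precisely the hypothesis $\rho_x\rho_w\le\lambda_{\min}^{\frac12}(\hat\Sigma_x)\lambda_{\min}^{\frac12}(\hat\Sigma_w)$. Hence $\lambda_{\min}(M_x)+\lambda_{\min}(M_w)\ge1$, the required inequality holds, and Theorem~\ref{snsdu} delivers the saddle point.

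\textbf{Anticipated difficulties.} I expect the two most delicate points to be: first, the first-order conditions relating $(K^*)^\top K^*$ and $(A^*)^\top A^*$ to $M_x$ and $M_w$ — though these are already obtained in the proof of Theorem~\ref{snsdu}, so they can simply be invoked; and second, the rewriting of the active Gelbrich constraint as $\mathrm{Tr}[\hat\Sigma_x(T_x-I_n)^2]=\rho_x^2$, which hinges on correctly identifying the transport map $T_x=M_x^{-1}$ and the Bures-term identity $(\hat\Sigma_x^{\frac12}\Sigma_x^*\hat\Sigma_x^{\frac12})^{\frac12}=\hat\Sigma_x^{\frac12}T_x\hat\Sigma_x^{\frac12}$. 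Everything else — the $C^\top C-D$ reformulation, the spectral equality of $K^*(K^*)^\top$ and $(K^*)^\top K^*$, and the scalar inequality in the last step — is routine.
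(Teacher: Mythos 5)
Your proof is correct, and although it shares the paper's overall strategy --- verify the negative\-/semidefiniteness condition of Theorem~\ref{snsdu} using the active Gelbrich constraints and the stationarity relations at the saddle point of (\ref{maxminL}) --- both key technical steps are executed by a genuinely different route. For the per-block bound, the paper substitutes the closed form (\ref{opsx}) for $\Sigma_x^*$ into the active constraint, reduces it to the statement that a certain positive semidefinite matrix has unit trace, and then invokes the L\"owner--Heinz inequality to obtain (\ref{loose2}); your derivation via the transport map $T_x=M_x^{-1}$, the identity $\rho_x^2=\mathrm{Tr}\bigl[\hat{\Sigma}_x(T_x-I_n)^2\bigr]$, and the elementary estimate $\lambda_{\max}(T_x)\le 1+\delta_x$ yields $M_x\succeq(1+\delta_x)^{-1}I_n$, which is exactly equivalent to (\ref{loose2}) (divide that inequality by $\gamma_x^*$ and rearrange) but avoids operator monotonicity entirely. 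For the final assembly, the paper uses the congruence with $\mathrm{diag}(K^*,A^*)$ and a Schur complement of the structured middle factor, whereas you pass to $CD^{-1}C^\top\preceq I_n$ and add spectral bounds on the two summands; both relaxations land on the same scalar condition $\delta_x\delta_w\le 1$, and both remain tight in the scalar case, so nothing is lost. Your version is arguably more transparent about where the product condition comes from, namely $\tfrac{1}{1+\delta_x}+\tfrac{1}{1+\delta_w}\ge 1$. One point you should make explicit: the nonsingularity of $K^*$ (used to force $\gamma_x^*>0$) follows from $K^*\Sigma_x^*=\bigl((\Sigma_x^*)^{-1}+H^\top(\Sigma_w^*)^{-1}H\bigr)^{-1}\succ{\bf 0}$, available because Assumption~2 gives $\Sigma_x^*,\Sigma_w^*\succ{\bf 0}$; alternatively, the strict bounds $\gamma_x^*>\lambda_{\max}\bigl((K^*)^\top K^*\bigr)$ and $\gamma_w^*>\lambda_{\max}\bigl((A^*)^\top A^*\bigr)$ are already supplied by Proposition A.2 of the cited reference (as the paper uses), which would also let you dispense with the separate $\gamma_w^*=0$ case.
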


\begin{proof}
	Under the assumption that the nominal covariance matrices $\hat{\Sigma}_x$ and $\hat{\Sigma}_w$ are both positive definite,
	we obtain from the proof of Theorem \ref{snsdu} that (\ref{maxminL}) admits a unique saddle point solution, denoted by $(\Sigma_x^*, \Sigma_w^*, A^*,\gamma_x^*, \gamma_w^*)$.  
	This solution is also optimal for (\ref{maxminLe}).
	Then it follows from Proposition A.2 in \cite{nguyen2023bridging} that 
    for given $A^*$ and $K^*=I_n-A^*H$, 
    the unique minimizer $(\gamma_x^*,\gamma_w^*)$ satisfies
	$\gamma_x^* > \lambda_{\max}\left( \left( K^*\right) ^\top K^* \right) $ and $\gamma_w^* > \lambda_{\max}\left( \left(A^*\right) ^\top A^* \right) $, 
	and the unique optimal solution $(\Sigma_x^*, \Sigma_w^*)$ to the inner maximization problem in (\ref{maxminLe}) is given by
	\begin{equation}
	\Sigma_x^*=\left( \gamma_x^*\right) ^2 
	\left[ \gamma_x^*I_n-\left( K^*\right) ^\top K^*\right] ^{-1}
	\hat{\Sigma}_x  \left[ \gamma_x^* I_n-\left( K^*\right) ^\top K^*\right] ^{-1}
	%\succ {\bf 0} 
	\label{opsx} \end{equation}
	and
	\begin{equation}
	\Sigma_w^*=\left( \gamma_w^*\right) ^2 
	\left[ \gamma_w^*I_m-\left( A^*\right) ^\top A^*\right] ^{-1}
	\hat{\Sigma}_w  \left[ \gamma_w^* I_m-\left( A^*\right) ^\top A^*\right] ^{-1}
	%\succ {\bf 0} 
	.\label{opsw} \end{equation} 
	
	Due to the strict positivity of $\gamma_x^*$ and $\gamma_w^*$, the
	Wasserstein distance constraints with respect to $(\Sigma_x,\Sigma_w)$  are active.
	%optimal solution $(\Sigma_x^*,\Sigma_w^*)$ is on the boundary of the Wasserstein distance constraints. 
	Then we have
	\begin{equation}
	\rho_x^2-\mbox{Tr}\left[\Sigma_x^* + \hat{\Sigma}_x - 2 \left( \hat{\Sigma}_x^{\frac{1}{2}} \Sigma_x^* \hat{\Sigma}_x^{\frac{1}{2}} \right)^{\frac{1}{2}}\right]=0 \label{bound1} 
	\end{equation}
	and
	\begin{equation}
	\rho_w^2-\mbox{Tr}\left[\Sigma_w^* + \hat{\Sigma}_w - 2 \left( \hat{\Sigma}_w^{\frac{1}{2}} \Sigma_w^* \hat{\Sigma}_w^{\frac{1}{2}} \right)^{\frac{1}{2}}\right]=0. \label{bound2}
	\end{equation}
	Substituting (\ref{opsx}) into (\ref{bound1}), we obtain
	$$ \rho_x^2-\mbox{Tr}\left( \hat{\Sigma}_x\right) +2\gamma_x^* \mbox{Tr}
	\left\lbrace \left[ \gamma_x^*I_n-\left( K^*\right) ^\top K^*\right] ^{-1} \hat{\Sigma}_x\right\rbrace -(\gamma_x^*)^2
	\mbox{Tr} \left\lbrace \left[ \gamma_x^*I_n-\left( K^*\right) ^\top K^*\right] ^{-1} \hat{\Sigma}_x \left[ \gamma_x^*I_n-\left( K^*\right) ^\top K^*\right] ^{-1} \right\rbrace =0.$$
	Rearranging all the terms, we derive
	\begin{align*}
	&\mbox{Tr} \left\lbrace \left[ \gamma_x^*I_n-\left( K^*\right) ^\top \hspace*{-1mm} K^*\right] ^{-1} \hspace*{-1mm} \left[ 
	\left( \gamma_x^*\right) ^2 \hat{\Sigma}_x 
	-\gamma_x^* \hat{\Sigma}_x \left( \gamma_x^*I_n-\left( K^*\right) ^\top \hspace*{-1mm} K^*\right) 
	-\gamma_x^* \left( \gamma_x^*I_n-\left( K^*\right) ^\top \hspace*{-1mm} K^*\right)  \hat{\Sigma}_x
	\right] \hspace*{-1mm}  \left[ \gamma_x^*I_n-\left( K^*\right) ^\top \hspace*{-1mm} K^*\right] ^{-1} \right\rbrace 
	\\&=\rho_x^2-\mbox{Tr}\left( \hat{\Sigma}_x\right). \end{align*}
	By expanding the brackets in the middle terms of the product under the trace and combining like terms, we get
	$$\mbox{Tr} \left\lbrace  \left[ \gamma_x^*I_n-\left( K^*\right) ^\top \hspace*{-1mm} K^*\right] ^{-1} \left[ 
	-\left( \gamma_x^*\right) ^2 \hat{\Sigma}_x 
	+\gamma_x^* \hat{\Sigma}_x \left( K^* \right) ^\top \hspace*{-1mm} K^*
	+\gamma_x^* \left( K^*\right) ^\top \hspace*{-1mm} K^* \hat{\Sigma}_x
	\right]  \left[ \gamma_x^*I_n-\left( K^*\right) ^\top \hspace*{-1mm} K^*\right] ^{-1} \right\rbrace 
	=\rho_x^2-\mbox{Tr}\left( \hat{\Sigma}_x\right) . $$
	Furthermore, completing the square for the middle terms in the product under the trace gives
	\begin{align*}
	&\mbox{Tr} \left\lbrace  
	\left[ \gamma_x^*I_n \hspace*{-1mm} - \hspace*{-1mm} \left( K^*\right)^\top \hspace*{-1mm} K^*\right] ^{-1} 
	\hspace*{-1mm} \left\lbrace \hspace*{-1mm} - \hspace*{-1mm} \left[ \gamma_x^*I_n \hspace*{-1mm}-\hspace*{-1mm} \left( K^*\right) ^\top \hspace*{-1mm} K^*\right]  
	\hspace*{-1mm} \hat{\Sigma}_x \hspace*{-1mm} \left[ \gamma_x^*I_n \hspace*{-1mm}-\hspace*{-1mm} \left( K^*\right) ^\top \hspace*{-1mm} K^*\right] 
	\hspace*{-1mm}+\hspace*{-0.5mm}
	\left( K^*\right) ^\top \hspace*{-1mm} K^* \hat{\Sigma}_x (K^*)^\top \hspace*{-1mm} K^* \right\rbrace \hspace*{-1mm}  
	\left[ \gamma_x^*I_n \hspace*{-1mm}-\hspace*{-1mm} \left( K^*\right) ^\top \hspace*{-1mm} K^*\right] ^{-1} \right\rbrace \notag \\ &=
	\rho_x^2\hspace*{-1mm}-\hspace*{-1mm}\mbox{Tr}\left( \hat{\Sigma}_x\right) . \end{align*}
	Eliminating $-\mbox{Tr}(\hat{\Sigma}_x)$ which appears on both sides of the equation, we obtain 
	$$\mbox{Tr} \left\lbrace
	\left[ \gamma_x^*I_n-\left( K^*\right) ^\top K^*\right] ^{-1} 
	\left[ \left( K^*\right) ^\top K^* \hat{\Sigma}_x \left( K^*\right) ^\top K^* \right]  
	\left[ \gamma_x^*I_n-\left( K^*\right) ^\top K^*\right] ^{-1} \right\rbrace 
	=\rho_x^2.$$
	Upon dividing both sides by $\rho_x^2$, we have
	\begin{equation*} 
	\mbox{Tr} \left\lbrace \left[ \gamma_x^*I_n-\left( K^*\right) ^\top K^*\right] ^{-1} \left[ \left( K^*\right) ^\top K^* \frac{\hat{\Sigma}_x}{\rho_x^2} \left( K^*\right) ^\top K^*
	\right]  \left[ \gamma_x^* I_n-\left( K^*\right) ^\top K^*\right] ^{-1} \right\rbrace  =1.
	\end{equation*}
	Then the matrix under the trace operation is positive semidefinite and the sum of its eigenvalues equal to one. Consequently, all its eigenvalues are not greater than one, which leads to the conclusion that   
	\begin{equation} 
	\left[ \gamma_x^*I_n-\left( K^*\right) ^\top K^*\right] ^{-1} \left[ 
	\left( K^*\right) ^\top K^* \frac{\hat{\Sigma}_x}{\rho_x^2} \left( K^*\right) ^\top K^*
	\right]  \left[ \gamma_x^* I_n-\left( K^*\right) ^\top K^*\right] ^{-1} \preceq I_n.
	\label{loose1} \end{equation}
	Multiplying both sides by $\gamma_x^*I_n-(K^*)^\top K^*$ gives rise to
	$$\left[ \gamma_x^*I_n-\left( K^*\right) ^\top K^*\right] ^{2} \succeq
	\left( K^*\right) ^\top K^* \frac{\hat{\Sigma}_x}{\rho_x^2} \left( K^*\right) ^\top K^*.$$
    Due to $\gamma_x^*>\lambda_{\max}\left[ \left( K^*\right) ^\top K^* \right]$, the matrix $\gamma_x^*I_n-\left( K^*\right) ^\top K^*$ is positive definite. 
    According to the Löwner-Heinz inequality \citep[Theorem 1.1]{zhan2004matrix}, it follows that
	\begin{equation*} 
	\gamma_x^*I_n-\left( K^*\right) ^\top K^* \succeq
	\left[ \left( K^*\right) ^\top K^* \frac{\hat{\Sigma}_x}{\rho_x^2} \left( K^*\right) ^\top K^* \right] ^\frac{1}{2}.  \end{equation*}
	Combining with $\hat{\Sigma}_x \succeq \lambda_{\min}\left( \hat{\Sigma}_x\right) I_n$, we further have
	\begin{equation} 
	\gamma_x^*I_n-\left( K^*\right) ^\top K^* \succeq
	\frac{\lambda_{\min}^{\frac{1}{2}}\left( \hat{\Sigma}_x\right) }{\rho_x}
	\left( K^*\right) ^\top K^*. 
	\label{loose2}  \end{equation}

	Similarly, based on (\ref{opsw}) and (\ref{bound2}), we derive
	\begin{equation}
	\gamma_w^*I_m-\left( A^*\right) ^\top A^* \succeq
	\frac{\lambda_{\min}^{\frac{1}{2}}\left( \hat{\Sigma}_w\right) }{\rho_w}
	\left( A^*\right) ^\top A^*. 
	\label{loosew} \end{equation}
	
	Subsequently, by utilizing equations (\ref{loose2}) and (\ref{loosew}), we deduce that:
	$$ \begin{aligned}
	\begin{bmatrix}
	\left( K^*\right)^\top K^*-\gamma_x^*I_n & \left( K^*\right) ^\top A^* \\
	\left( A^*\right) ^\top K^* & \left( A^*\right) ^\top A^* -\gamma_w^*I_m
	\end{bmatrix} & \preceq
	\begin{bmatrix}
	-\frac{\lambda_{\min}^{\frac{1}{2}}\left( \hat{\Sigma}_x\right) }{\rho_x} \left( K^*\right) ^\top K^* 
	& \left( K^*\right)^\top A^* \\ \left(A^*\right)^\top \left( K^*\right) &
	- \frac{\lambda_{\min}^{\frac{1}{2}}\left( \hat{\Sigma}_w\right) }{\rho_w}
	\left( A^*\right) ^\top A^*
	\end{bmatrix}\\
	&=\begin{bmatrix}
	K^* & {\bf{0}} \\ {\bf{0}} & A^*
	\end{bmatrix} ^\top
	\begin{bmatrix}
	-\frac{\lambda_{\min}^{\frac{1}{2}}\left( \hat{\Sigma}_x\right) }{\rho_x}I_n
	&I_n \\I_n & 
	- \frac{\lambda_{\min}^{\frac{1}{2}}\left( \hat{\Sigma}_w\right) }{\rho_w}I_n
	\end{bmatrix}
	\begin{bmatrix}
	K^* & {\bf{0}} \\ {\bf{0}} & A^*
	\end{bmatrix}.
	\end{aligned} $$
	Here, the matrix $\begin{bmatrix}
	-\frac{\lambda_{\min}^{\frac{1}{2}}\left( \hat{\Sigma}_x\right) }{\rho_x}I_n
	&I_n \\I_n & 
	- \frac{\lambda_{\min}^{\frac{1}{2}}\left( \hat{\Sigma}_w\right) }{\rho_w}I_n \end{bmatrix}$ 
	is negative semidefinite if and only if its schur complement $-\frac{\lambda_{\min}^{\frac{1}{2}}\left( \hat{\Sigma}_x\right) }{\rho_x}I_n+\frac{\rho_w}{\lambda_{\min}^{\frac{1}{2}}\left( \hat{\Sigma}_w\right) }I_n$ is negative semidefinite \cite{zhang2006schur}.
	Consequently, if $$\rho_x \rho_w \leq \lambda_{\min}^{\frac{1}{2}}\left( \hat{\Sigma}_x\right) 
	\lambda_{\min}^{\frac{1}{2}}\left( \hat{\Sigma}_w\right) ,$$
	the matrix $\begin{bmatrix}
	\left(K^*\right) ^\top K^*-\gamma_x^*I_n & \left(K^*\right) ^\top A^* \\
	\left(A^*\right) ^\top K^* & \left(A^*\right)^\top A^* -\gamma_w^*I_m
	\end{bmatrix}$ is negative semidefinite, which implies that 
	the saddle point solution of (\ref{minmax}) exists by Lemma \ref{snsd}.
\end{proof}

\begin{remark}
	Based on the aforementioned sufficient condition, we can easily obtain the following conclusions:
	\begin{itemize}
		\item In the scalar case, i.e., when both the parameter $x$ and the observation $y$ are scalars, the relaxations (\ref{loose1}) and (\ref{loose2}) are tight. This indicates that Theorem \ref{sd} provides a necessary and sufficient condition for the existence of the saddle point solution for (\ref{minmax}).
		\item The sufficient condition given in Theorem \ref{sd} constrains the product of the two Wasserstein radii, which implies that when one is sufficiently small, the other can be larger.
		Furthermore, if one of the Wasserstein radii is zero, i.e.,  
		$\rho_x \hspace*{-1mm}=\hspace*{-0.5mm} 0$ or 
		$\rho_w \hspace*{-1mm}=\hspace*{-0.5mm} 0$, 
		a saddle point solution to (\ref{minmax}) always exists, regardless of the size of the other radius.
		%\item When the radii of the uncertainty set $\rho_x$ and $\rho_w$ are sufficiently small, a saddle point always exists.
	\end{itemize}
\end{remark}

\section{The Robust Linear Estimator}
In Section 3.1, Theorem \ref{equp} shows that the absence of the saddle point solution in the finite-dimensional optimization problem (\ref{minmaxG}) implies the absence of the saddle point solution in the infinite-dimensional optimization problem (\ref{minmax}). 
In this case, although the exact optimal solution of (\ref{minmax}) is unavailable by solving (\ref{minmaxG}), the optimal value of (\ref{minmaxG}) can still provide an upper bound on that of (\ref{minmax}), as detailed in Lemma \ref{sort}.
Furthermore, due to (\ref{rlr}), the optimal solution to (\ref{minmaxG}) is also the optimal solution to 
\begin{equation} \label{infLsup}
\inf_{f \in \mathcal{F_L}} 
\sup_{P \in \mathbb{B}(\hat{P})} \mbox{mse}(f,P),
\end{equation} 
which indicates that this solution yields an optimal robust estimator in the class of linear estimators.

Consequently, the focus of this section is on problem (\ref{minmaxG}).
Specifically, we demonstrate that (\ref{minmaxG}) is equivalent to a convex relaxation problem, and thus equivalent to an SDP problem. Furthermore, based on the primal and dual optimal solutions of the SDP problem, we construct an optimal solution to (\ref{minmaxG}), which is also the optimal solution to (\ref{infLsup}) and thus provides a robust linear estimator.

\subsection{A Tight Convex Relaxation of (\ref{minmaxG})}
In light of the above discussion, we now focus on (\ref{minmaxG}). Analogous to parameterizing (\ref{maxmina}) into (\ref{maxminap}) in Section 3.2, problem (\ref{minmaxG}) can be parameterized into a finite-dimensional form as follows:
\begin{align}
\inf_{A,b}
\sup_{\mu_x,\mu_w,\Sigma_x,\Sigma_w}
&\mbox{Tr}\left[ \left( AH-I_n\right) \Sigma_x\left( AH-I_n\right) ^\top+A\Sigma_w A^\top \right]+\left[ \left( AH-I_n\right) \mu_x+A\mu_w+b\right] ^\top \left[ \left( AH-I_n\right) \mu_x+A\mu_w+b \right] \notag  \\
\mbox{s.t. } \
& \Sigma_x \succeq{\bf 0}, \quad \Sigma_w \succeq {\bf 0}, \notag \\
&\left\| \mu_x-\hat{\mu}_x \right\| ^2+\mbox{Tr} \left[ \Sigma_x+\hat{\Sigma}_x-
2 \left( \hat{\Sigma}_x^{\frac{1}{2}} \Sigma_x \hat{\Sigma}_x^{\frac{1}{2}}\right) ^{\frac{1}{2}} \right] \leq \rho_x^2, \label{minmaxGG} \\
&\left\| \mu_w-\hat{\mu}_w\right\| ^2+\mbox{Tr} \left[ \Sigma_w+\hat{\Sigma}_w-
2 \left( \hat{\Sigma}_w^{\frac{1}{2}} \Sigma_w \hat{\Sigma}_w^{\frac{1}{2}}\right) ^{\frac{1}{2}} \right] 
\leq \rho_w^2. \notag 
\end{align}
For given $A$ and $b$, the objective function of (\ref{minmaxGG}) is convex in $(\mu_x,\mu_w)$. 
Consequently, the inner maximization problem over $(\mu_x,\mu_w)$ is non-convex, which brings a significant challenge in identifying the optimal solution.
For ease of notations, we denote
$\tilde{\mu}_x \triangleq \mu_x-\hat{\mu}_x$, 
$\tilde{\mu}_w \triangleq \mu_w-\hat{\mu}_w$
and $\tilde{b} \triangleq b+(AH-I_n)\hat{\mu}_x+A\hat{\mu}_w$, respectively.
Then problem (\ref{minmaxGG}) can be reformulated as follows
\begin{align} 
\inf_{A,\tilde{b}} \
\sup_{\tilde{\mu}_x,\tilde{\mu}_w,\Sigma_x,\Sigma_w}
&\mbox{Tr}\left[ \left( AH-I_n\right) \Sigma_x\left( AH-I_n\right) ^\top+A\Sigma_w A^\top \right]
+\left[\left( AH-I_n\right) \tilde{\mu}_x+A\tilde{\mu}_w+\tilde{b}\right] ^\top \left[\left( AH-I_n \right) \tilde{\mu}_x+A\tilde{\mu}_w+\tilde{b}\right] \notag  \\
\mbox{s.t. } \
& \Sigma_x \succeq{\bf 0}, \quad \Sigma_w \succeq {\bf 0}, \notag \\
&\left\| \tilde{\mu}_x\right\| ^2+\mbox{Tr} \left[ \Sigma_x+\hat{\Sigma}_x-
2 \left( \hat{\Sigma}_x^{\frac{1}{2}} \Sigma_x \hat{\Sigma}_x^{\frac{1}{2}}\right) ^{\frac{1}{2}} \right] \leq \rho_x^2, 
\label{3minimax} \\
&\left\| \tilde{\mu}_w\right\| ^2+\mbox{Tr} \left[ \Sigma_w+\hat{\Sigma}_w-
2 \left( \hat{\Sigma}_w^{\frac{1}{2}} \Sigma_w \hat{\Sigma}_w^{\frac{1}{2}}\right) ^{\frac{1}{2}} \right] 
\leq \rho_w^2. \notag  
\end{align}
Notice that for given $A$, $\tilde{b}$, $\Sigma_x$ and $\Sigma_w$, the inner maximization problem over $(\tilde{\mu}_x,\tilde{\mu}_w)$ in (\ref{3minimax}) is a nonhomogeneous QCQP problem with two homogeneous constraints.
If we directly apply the SDP relaxation method, we can not assert that the relaxed problem is equivalent to problem (\ref{3minimax}) \cite{ye2003new,ai2024tightness}.
To address this difficulty,  we first give a special structure of the optimal solution to (\ref{3minimax}) in the following theorem.

\begin{theorem} \label{minimax oss}
	If $(A^*,\tilde{b}^*,\tilde{\mu}^*_x,\tilde{\mu}^*_w,\Sigma^*_x,\Sigma^*_w)$ is an optimal solution to (\ref{3minimax}), then $\tilde{b}^*={\bf 0}$.
\end{theorem}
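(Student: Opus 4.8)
The plan is to reduce the whole statement to a one-line application of the parallelogram identity, exploiting that the feasible set of the inner maximization is symmetric under sign reversal of the mean perturbations. For a fixed pair $(A,\tilde b)$, write the inner supremum in (\ref{3minimax}) as
\[
V(A,\tilde b)\;:=\;\sup_{(\tilde\mu_x,\tilde\mu_w,\Sigma_x,\Sigma_w)\in\mathcal C}\ \Big(\mbox{Tr}\big[(AH-I_n)\Sigma_x(AH-I_n)^\top+A\Sigma_w A^\top\big]+\big\|(AH-I_n)\tilde\mu_x+A\tilde\mu_w+\tilde b\big\|^2\Big),
\]
where $\mathcal C$ denotes the feasible set cut out by the last three lines of (\ref{3minimax}). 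First I would observe that $\mathcal C$ is compact: the two Gelbrich sublevel sets are compact by \cite{kuhn2019wasserstein}, and the constraints force $\|\tilde\mu_x\|\le\rho_x$ and $\|\tilde\mu_w\|\le\rho_w$, so $\mathcal C$ is a closed subset of a compact set; hence $V(A,\tilde b)$ is a finite real number for every $(A,\tilde b)$.

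The key structural observation is that $\mathcal C$ is invariant under the reflection $(\tilde\mu_x,\tilde\mu_w,\Sigma_x,\Sigma_w)\mapsto(-\tilde\mu_x,-\tilde\mu_w,\Sigma_x,\Sigma_w)$, since only the squared norms $\|\tilde\mu_x\|^2$ and $\|\tilde\mu_w\|^2$ enter the constraints and the positive semidefiniteness of $\Sigma_x,\Sigma_w$ is untouched; moreover the trace term of the objective does not involve the means, so it is unchanged by this reflection. Abbreviating $u:=(AH-I_n)\tilde\mu_x+A\tilde\mu_w$, a feasible point and its reflection therefore yield objective values that differ only in the quadratic term, namely $\|u+\tilde b\|^2$ and $\|{-u}+\tilde b\|^2$; by the parallelogram identity their average is $\|u\|^2+\|\tilde b\|^2$, so their maximum is at least $\|u\|^2+\|\tilde b\|^2$. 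Applying this at each point of $\mathcal C$ (recalling $\mbox{Tr}[\cdots]+\|u\|^2$ is exactly the objective evaluated with $\tilde b={\bf 0}$) and taking the supremum, I would obtain
\[
V(A,\tilde b)\;\ge\;V(A,{\bf 0})+\|\tilde b\|^2\qquad\text{for every }A\text{ and every }\tilde b.
\]

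Finally, since $(A^*,{\bf 0})$ is admissible in the outer infimum of (\ref{3minimax}) and $(A^*,\tilde b^*)$ attains that infimum, we have $V(A^*,{\bf 0})\ge V(A^*,\tilde b^*)$; combining this with the displayed inequality at $(A^*,\tilde b^*)$ gives $V(A^*,{\bf 0})\ge V(A^*,{\bf 0})+\|\tilde b^*\|^2$, which forces $\|\tilde b^*\|^2\le0$, i.e.\ $\tilde b^*={\bf 0}$. I do not expect a genuine analytic obstacle; the only point that needs care is the bookkeeping around the coupled Wasserstein constraints — checking that flipping the sign of $(\tilde\mu_x,\tilde\mu_w)$ leaves both $\mathcal C$ and the trace part of the objective invariant — after which the parallelogram identity does all the work.
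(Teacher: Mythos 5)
Your proposal is correct and rests on the same key idea as the paper's proof: the feasible set of the inner maximization is invariant under the sign flip $(\tilde\mu_x,\tilde\mu_w)\mapsto(-\tilde\mu_x,-\tilde\mu_w)$, the trace term is unaffected, and comparing a point with its reflection shows a nonzero $\tilde b$ can only increase the inner supremum. The only (cosmetic) difference is that the paper picks whichever of the two reflections satisfies $u^\top\tilde b\ge 0$ while you average the pair via the parallelogram identity, which yields the slightly sharper quantitative bound $V(A,\tilde b)\ge V(A,{\bf 0})+\|\tilde b\|^2$ and avoids invoking attainment of the inner maximizer.
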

\begin{proof}
	For the sake of simplicity of notations, we denote the objective function of (\ref{3minimax}) as
	\begin{align*}
	\psi(A,\tilde{b},\tilde{\mu}_x,\tilde{\mu}_w,\Sigma_x,\Sigma_w)
	=&\mbox{Tr}\left[ \left( AH-I_n\right) \Sigma_x\left( AH-I_n\right) ^\top+A\Sigma_w A^\top \right]+  \\ 
	&\left[\left( AH-I_n\right) \tilde{\mu}_x+A\tilde{\mu}_w\right] ^\top \left[\left( AH-I_n\right) \tilde{\mu}_x+A\tilde{\mu}_w\right]+
	2\left[\left( AH-I_n\right) \tilde{\mu}_x+A\tilde{\mu}_w\right] ^\top \tilde{b}
	%+\tilde{b}^\top \left[(AH-I_n)\tilde{\mu}_x+A\tilde{\mu}_w\right]
	+\tilde{b}^\top \tilde{b}.
	\end{align*}
	In addition, for given $A$ and $\tilde{b}$, the optimal solution to the inner maximization problem in (\ref{3minimax}) is denoted as $(\tilde{\mu}_x^*(A,\tilde{b}),\tilde{\mu}_w^*(A,\tilde{b}),\Sigma_x^*(A,\tilde{b}),\Sigma_w^*(A,\tilde{b}))$. Then for any $A$ and $\tilde{b} \neq {\bf 0}$, we have
	\begin{align*}
	&\psi(A,{\bf 0},\tilde{\mu}_x^*(A,{\bf 0}),\tilde{\mu}_w^*(A,{\bf 0}),\Sigma_x^*(A,{\bf 0}),\Sigma_w^*(A,{\bf 0})) \\
	&< \max \left\lbrace  
	\psi(A,\tilde{b},\tilde{\mu}_x^*(A,{\bf 0}),\tilde{\mu}_w^*(A,{\bf 0}),\Sigma_x^*(A,{\bf 0}),\Sigma_w^*(A,{\bf 0})),
	\psi(A,\tilde{b},-\tilde{\mu}_x^*(A,{\bf 0}),-\tilde{\mu}_w^*(A,{\bf 0}),\Sigma_x^*(A,{\bf 0}),\Sigma_w^*(A,{\bf 0})) \right\rbrace \\
	&\leq \psi\left( A,\tilde{b},\tilde{\mu}_x^*(A,\tilde{b}),\tilde{\mu}_w^*(A,\tilde{b}),\Sigma_x^*(A,\tilde{b}),\Sigma_w^*(A,\tilde{b})\right) ,
	\end{align*}
	where the first inequality holds because:
	i) the first two terms of the objective function $\psi$ remain unchanged;
	ii) $(\tilde{\mu}_x^*(A,{\bf 0}),\tilde{\mu}_w^*(A,{\bf 0}),\Sigma_x^*(A,{\bf 0}),\Sigma_w^*(A,{\bf 0}))$ is a feasible solution and so is $(-\tilde{\mu}_x^*(A,{\bf 0}),-\tilde{\mu}_w^*(A,{\bf 0}),\Sigma_x^*(A,{\bf 0}),\Sigma_w^*(A,{\bf 0}))$;
	iii) at least one of these two feasible solutions satisfies
	$\left[(AH-I_n)\tilde{\mu}_x+A\tilde{\mu}_w\right] ^\top \tilde{b} \geq 0$;
	iv) $\tilde{b}^\top \tilde{b}>0$,
	and the second inequality follows from the optimality of $(\tilde{\mu}_x^*(A,\tilde{b}),\tilde{\mu}_w^*(A,\tilde{b}),\Sigma_x^*(A,\tilde{b}),\Sigma_w^*(A,\tilde{b}))$ for $A$ and $\tilde{b}$.
	Therefore, the optimal solution to (\ref{3minimax}) must satisfy $\tilde{b}^*=\bf 0$.
\end{proof}

With the help of Theorem \ref{minimax oss}, (\ref{3minimax}) can be equivalently transformed into the following problem:
\begin{align}
\inf_{A}
\sup_{\tilde{\mu}_x,\tilde{\mu}_w,\Sigma_x,\Sigma_w}
&\mbox{Tr}\left[ \left( AH-I_n\right) \Sigma_x\left( AH-I_n\right) ^\top+A\Sigma_w A^\top \right]
+\left[\left( AH-I_n\right) \tilde{\mu}_x+A\tilde{\mu}_w \right] ^\top \left[\left( AH-I_n\right) \tilde{\mu}_x+A\tilde{\mu}_w \right] \notag  \\
\mbox{s.t. } \
& \Sigma_x \succeq{\bf 0}, \quad \Sigma_w \succeq {\bf 0}, \notag \\
&\left\| \tilde{\mu}_x\right\| ^2+\mbox{Tr} \left[ \Sigma_x+\hat{\Sigma}_x-
2 \left( \hat{\Sigma}_x^{\frac{1}{2}} \Sigma_x \hat{\Sigma}_x^{\frac{1}{2}}\right) ^{\frac{1}{2}} \right] \leq \rho_x^2, \label{2minimax} \\
&\left\| \tilde{\mu}_w\right\| ^2+\mbox{Tr} \left[ \Sigma_w+\hat{\Sigma}_w-
2 \left( \hat{\Sigma}_w^{\frac{1}{2}} \Sigma_w \hat{\Sigma}_w^{\frac{1}{2}}\right) ^{\frac{1}{2}} \right] 
\leq \rho_w^2. \notag 
\end{align} 
Notice that for given $A$, $\Sigma_x$ and $\Sigma_w$, the inner maximization problem over $(\tilde{\mu}_x,\tilde{\mu}_w)$ in (\ref{2minimax}) is a homogeneous QCQP problem with two homogeneous constraints, which can be be equivalently reformulated as a convex problem through the SDP relaxation method as described in the following theorem.

\begin{theorem} \label{minmax eq}
	The optimal value of (\ref{3minimax}) is equal to the optimal value of the following SDP problem
	\begin{align}
	\begin{array}{cl}
	\sup \limits_{Q,S,\Sigma_x,\Sigma_w,V_x,V_w}
	&\mbox{Tr}(Q) \\
	\mbox{s.t.}&  S \succeq{\bf 0}, \ \Sigma_x \succeq{\bf 0}, \ \Sigma_w  \succeq{\bf 0}, \ V_x  \succeq{\bf 0},
	\ V_w  \succeq{\bf 0},\\
	&\begin{bmatrix} I_n & {\bf 0} \\H & I_m \end{bmatrix} 
	\left( 
	\begin{bmatrix} \Sigma_x & {\bf 0} \\ {\bf 0} & \Sigma_w \end{bmatrix}+S
	\right) 
	\begin{bmatrix} I_n & H^\top \\ {\bf 0} & I_m \end{bmatrix}-\begin{bmatrix}
	Q & {\bf 0} \\ {\bf 0} & {\bf 0}
	\end{bmatrix} \succeq {\bf 0}\\
	&\left[\begin{array} {cc}
	\hat{\Sigma}_x^{\frac{1}{2}} \Sigma_x \hat{\Sigma}_x^{\frac{1}{2}} &
	V_x \\ V_x &I_n
	\end{array}
	\right]\succeq {\bf 0}, \quad
	\left[\begin{array} {cc}
	\hat{\Sigma}_w^{\frac{1}{2}} \Sigma_w \hat{\Sigma}_w^{\frac{1}{2}} &
	V_w \\ V_w &I_m
	\end{array}
	\right]\succeq {\bf 0},\\
	& \mbox{Tr}\left(\begin{bmatrix}
	I_n & \boldsymbol{0} \\ \boldsymbol{0} & \boldsymbol{0} 
	\end{bmatrix} S \right) 
	+\mbox{Tr}\left( \Sigma_x + \hat{\Sigma}_x - 2 V_x \right) 
	\leq \rho_x^2 ,\\
	&\mbox{Tr} \left( \begin{bmatrix}
	\boldsymbol{0} & \boldsymbol{0} \\ \boldsymbol{0} &I_m
	\end{bmatrix} S \right)
	+\mbox{Tr}\left(\Sigma_w + \hat{\Sigma}_w - 2 V_w\right)  \leq \rho_w^2.
	\end{array} \label{minmax SDP}
	\end{align}
\end{theorem}
\begin{proof}
	Since
	\begin{align*}
	\left\lbrace \left.
	\begin{bmatrix}
	\tilde{\mu}_x \\ \tilde{\mu}_x
	\end{bmatrix} 
	\begin{bmatrix}
	\tilde{\mu}_x^\top & \tilde{\mu}_x^\top
	\end{bmatrix} \right| 
	\tilde{\mu}_x \in \mathbb{R}^n, 
	\tilde{\mu}_w \in \mathbb{R}^m
	\right\rbrace 
	=\left\lbrace S \left| S \in \mathbb{S}^{m+n}_+, \mbox{rank}(S)=1 
	\right. \right\rbrace 
	\subseteq \left\lbrace S \left| S \in \mathbb{S}^{m+n}_+
	\right. \right\rbrace, 
	\end{align*}
	then, by ignoring the constraints $\mbox{rank}(S)=1$ and thus relaxing the inner maximization problem of (\ref{2minimax}), we obtain the following relaxation of (\ref{2minimax}):
	%we consider the following relaxation problem of (\ref{2minimax})
	\begin{align}\label{least SDPs}
	\begin{array}{cl}
	\inf \limits_{A}
	\sup \limits_{S,\Sigma_x,\Sigma_w} 
	&\mbox{Tr}\left[\left(AH-I_n\right)\Sigma_x\left(AH-I_n\right)^\top
	+A \Sigma_w A^\top\right]+
	\mbox{Tr} \left(\begin{bmatrix} (AH-I_n)^\top \\  A^\top  \end{bmatrix}
	\begin{bmatrix} AH-I_n & A \end{bmatrix} S \right) \\
	\mbox{s.t.}& S \succeq{\bf 0}, \ \Sigma_x \succeq{\bf 0}, \ \Sigma_w  \succeq{\bf 0},\\
	& \mbox{Tr} \left(\begin{bmatrix}
	I_n & \boldsymbol{0} \\ \boldsymbol{0} & \boldsymbol{0} 
	\end{bmatrix} S \right) 
	+ \mbox{Tr} \left[ \Sigma_x+\hat{\Sigma}_x-
	2 \left( \hat{\Sigma}_x^{\frac{1}{2}} \Sigma_x \hat{\Sigma}_x^{\frac{1}{2}}\right) ^{\frac{1}{2}} \right]  \leq \rho_x^2,\\
	&\mbox{Tr} \left(\begin{bmatrix}
	\boldsymbol{0} & \boldsymbol{0} \\ \boldsymbol{0} &I_m
	\end{bmatrix} S \right)
	+ \mbox{Tr} \left[ \Sigma_w+\hat{\Sigma}_w-
	2 \left( \hat{\Sigma}_w^{\frac{1}{2}} \Sigma_w \hat{\Sigma}_w^{\frac{1}{2}}\right) ^{\frac{1}{2}}\right]  \leq \rho_w^2.
	\end{array}
	\end{align}
	Notice that for any $A$, $\Sigma_x$ and $\Sigma_w$, the inner maximization problem on $S$ in (\ref{least SDPs}) always admits a rank-one optimal solution \citep[Theorem 2.5]{ye2003new}. This implies that the relaxation is tight, i.e., problems (\ref{2minimax}) and (\ref{least SDPs}) are equivalent. %and is equivalent to the inner maximization problem on $\begin{bmatrix} \tilde{\mu}_x \\ \tilde{\mu}_w \end{bmatrix}$ in (\ref{2minimax}) \cite{ye2003new},.
	
	Note that the objective function of (\ref{least SDPs}) is convex in $A$ for each $(S,\Sigma_x,\Sigma_w)$, concave in $(S,\Sigma_x,\Sigma_w)$ for each $A$, and the constraint set under the supremum is convex and compact. Then according to Sion's minimax theorem \cite{sion1958general},
	(\ref{least SDPs}) is equivalent to the following problem obtained by exchanging minimization and maximization:
	\begin{align}\label{least SDPss}
	\begin{array}{cl}
	\sup \limits_{S,\Sigma_x,\Sigma_w} \inf \limits_{A}
	&\mbox{Tr}\left[\left(AH-I_n\right)\Sigma_x\left(AH-I_n\right)^\top
	+A\Sigma_w A^\top\right]+
	\mbox{Tr} \left(\begin{bmatrix} \left( AH-I_n\right) ^\top \\  A^\top  \end{bmatrix}
	\begin{bmatrix} AH-I_n & A \end{bmatrix} S \right) \\
	\mbox{s.t.}& S \succeq{\bf 0}, \ \Sigma_x \succeq{\bf 0}, \ \Sigma_w  \succeq{\bf 0},\\
	& \mbox{Tr} \left(\begin{bmatrix}
	I_n & \boldsymbol{0} \\ \boldsymbol{0} & \boldsymbol{0} 
	\end{bmatrix} S \right)
	+ \mbox{Tr} \left[ \Sigma_x+\hat{\Sigma}_x-
	2 \left( \hat{\Sigma}_x^{\frac{1}{2}} \Sigma_x \hat{\Sigma}_x^{\frac{1}{2}}\right) ^{\frac{1}{2}}\right]  \leq \rho_x^2,\\
	&\mbox{Tr} \left(\begin{bmatrix}
	\boldsymbol{0} & \boldsymbol{0} \\ \boldsymbol{0} &I_m
	\end{bmatrix} S \right)
	+ \mbox{Tr} \left[ \Sigma_w+\hat{\Sigma}_w-
	2 \left( \hat{\Sigma}_w^{\frac{1}{2}} \Sigma_w \hat{\Sigma}_w^{\frac{1}{2}}\right) ^{\frac{1}{2}} \right]  \leq \rho_w^2.
	\end{array}
	\end{align}
	The objective function of (\ref{least SDPss}) can be reformulated as
	\begin{align*}
	&\mbox{Tr} \left( \begin{bmatrix} AH-I_n & A \end{bmatrix} 
	\left( 
	\begin{bmatrix} \Sigma_x & {\bf 0} \\ {\bf 0} & \Sigma_w \end{bmatrix}+S
	\right) 
	\begin{bmatrix} (AH-I_n)^\top \\  A^\top  \end{bmatrix} \right) \\
	&=\mbox{Tr} \left( \begin{bmatrix} I_n & -A \end{bmatrix}
	\begin{bmatrix} I_n & {\bf 0} \\H & I_m \end{bmatrix} 
	\left( 
	\begin{bmatrix} \Sigma_x & {\bf 0} \\ {\bf 0} & \Sigma_w \end{bmatrix}+S
	\right) 
	\begin{bmatrix} I_n & H^\top \\ {\bf 0} & I_m \end{bmatrix} 
	\begin{bmatrix} I_n \\  -A^\top  \end{bmatrix} \right).
	\end{align*}
	Then since the constraints of (\ref{least SDPss}) are independent of $A$, we first consider the inner unconstrained minimization problem
	\begin{equation}
	\inf \limits_{A} \hspace*{1mm}
	\mbox{Tr} \left( \begin{bmatrix} I_n & -A \end{bmatrix}
	\begin{bmatrix} I_n & {\bf 0} \\H & I_m \end{bmatrix} 
	\left( 
	\begin{bmatrix} \Sigma_x & {\bf 0} \\ {\bf 0} & \Sigma_w \end{bmatrix}+S
	\right) 
	\begin{bmatrix} I_n & H^\top \\ {\bf 0} & I_m \end{bmatrix} 
	\begin{bmatrix} I_n \\  -A^\top  \end{bmatrix} \right).
	\label{least SDP min} \end{equation}
	Let
	\begin{align*}
	R \triangleq \begin{bmatrix} R_{11} & R_{12} \\ R_{12}^\top & R_{22} \end{bmatrix} \triangleq
	\begin{bmatrix} I_n & {\bf 0} \\H & I_m \end{bmatrix} 
	\left( 
	\begin{bmatrix} \Sigma_x & {\bf 0} \\ {\bf 0} & \Sigma_w \end{bmatrix}+S
	\right) 
	\begin{bmatrix} I_n & H^\top \\ {\bf 0} & I_m \end{bmatrix}. \end{align*}
	It is well-known that the optimal value of (\ref{least SDP min}) is  $\mbox{Tr}\left( R_{11}-R_{12} R_{22}^\dagger R_{12}^\top\right) $ \citep[Exercise 9.2.4.1]{Albert1972Regression}.
	Consequently, (\ref{least SDPss}) is equivalent to
	\begin{align}
	\begin{array}{cl}
	\sup \limits_{S,\Sigma_x,\Sigma_w}
	&\mbox{Tr}\left( R_{11}-R_{12} R_{22}^\dagger R_{12}^\top\right) \\
	\mbox{s.t.}& S \succeq{\bf 0}, \ \Sigma_x \succeq{\bf 0}, \ \Sigma_w  \succeq{\bf 0},\\
	&R=\begin{bmatrix} I_n & {\bf 0} \\H & I_m \end{bmatrix} 
	\left( 
	\begin{bmatrix} \Sigma_x & {\bf 0} \\ {\bf 0} & \Sigma_w \end{bmatrix}+S
	\right) 
	\begin{bmatrix} I_n & H^\top \\ {\bf 0} & I_m \end{bmatrix},\\
	& \mbox{Tr} \left(\begin{bmatrix}
	I_n & \boldsymbol{0} \\ \boldsymbol{0} & \boldsymbol{0} 
	\end{bmatrix} S \right) 
	+ \mbox{Tr} \left[ \Sigma_x+\hat{\Sigma}_x-
	2\left( \hat{\Sigma}_x^{\frac{1}{2}} \Sigma_x \hat{\Sigma}_x^{\frac{1}{2}}\right) ^{\frac{1}{2}}\right]  \leq \rho_x^2,\\
	&\mbox{Tr} \left(\begin{bmatrix}
	\boldsymbol{0} & \boldsymbol{0} \\ \boldsymbol{0} &I_m
	\end{bmatrix} S \right)
	+ \mbox{Tr} \left[ \Sigma_w+\hat{\Sigma}_w-
	2 \left( \hat{\Sigma}_w^{\frac{1}{2}} \Sigma_w \hat{\Sigma}_w^{\frac{1}{2}}\right) ^{\frac{1}{2}} \right]  \leq \rho_w^2.
	\end{array} \label{least SDPsss}
	\end{align}
	By introducing a slack variable $Q \in \mathbb{S}^n$, applying the Schur complement theorem, and eliminating $R$, problem (\ref{least SDPsss}) is equivalent to
	\begin{align}
	\begin{array}{cl}
	\sup \limits_{Q,S,\Sigma_x,\Sigma_w}
	&\mbox{Tr}(Q) \\
	\mbox{s.t.}&  S \succeq{\bf 0}, \ \Sigma_x \succeq{\bf 0}, \ \Sigma_w  \succeq{\bf 0}, \\
	&\begin{bmatrix} I_n & {\bf 0} \\H & I_m \end{bmatrix} 
	\left( 
	\begin{bmatrix} \Sigma_x & {\bf 0} \\ {\bf 0} & \Sigma_w \end{bmatrix}+S
	\right) 
	\begin{bmatrix} I_n & H^\top \\ {\bf 0} & I_m \end{bmatrix}-\begin{bmatrix}
	Q & {\bf 0} \\ {\bf 0} & {\bf 0}
	\end{bmatrix} \succeq {\bf 0},\\
	& \mbox{Tr} \left(\begin{bmatrix}
	I_n & \boldsymbol{0} \\ \boldsymbol{0} & \boldsymbol{0} 
	\end{bmatrix} S \right) 
	+ \mbox{Tr} \left[ \Sigma_x+\hat{\Sigma}_x-
	2 \left( \hat{\Sigma}_x^{\frac{1}{2}} \Sigma_x \hat{\Sigma}_x^{\frac{1}{2}}\right) ^{\frac{1}{2}}\right]  \leq \rho_x^2,\\
	&\mbox{Tr} \left(\begin{bmatrix}
	\boldsymbol{0} & \boldsymbol{0} \\ \boldsymbol{0} &I_m
	\end{bmatrix} S \right)
	+ \mbox{Tr} \left[ \Sigma_w+\hat{\Sigma}_w-
	2 \left( \hat{\Sigma}_w^{\frac{1}{2}} \Sigma_w \hat{\Sigma}_w^{\frac{1}{2}}\right) ^{\frac{1}{2}} \right]  \leq \rho_w^2.
	\end{array} \label{minmax SDPs}
	\end{align}
	Finally, by Proposition 2 in \cite{malago2018wasserstein}, introducing auxiliary variables $V_x \in \mathbb{S}_+^n$ and $V_w \in \mathbb{S}_+^m$, (\ref{minmax SDPs}) can be transformed into the SDP problem (\ref{minmax SDP}), which completes the proof.
\end{proof}

\begin{remark}
	The optimal value of (\ref{minmax SDP}) is equal to that of (\ref{minmaxG}), while the optimal value of (\ref{maxmin SDP}) is equal to that of problem ({\ref{maxmina}}) obtained by exchanging the supremum and infimum in (\ref{minmaxG}). Therefore, according to Theorem \ref{equp}, if the optimal values of the two SDP problems (\ref{minmax SDP}) and (\ref{maxmin SDP}) are equal, the robust estimation problem (\ref{minmax}) has a saddle point solution.
\end{remark}

\subsection{An Optimal Solution to (\ref{minmaxG}) and (\ref{infLsup})}
Section 4.1 shows that the optimal value of (\ref{minmaxG}) is equal to that of an SDP problem (\ref{minmax SDP}). 
However, in practical applications, we often need to obtain the optimal solution of (\ref{minmaxG}), specifically the robust linear estimator and the corresponding least favorable distribution. 
When the saddle point solution does not exist, the robust linear estimator is not the optimal estimator corresponding to its least favorable distribution, %obtained by the SDP problem (\ref{least SDPss}), 
and thus it can not be directly calculated.

Notice that problem (\ref{minmaxG}) is equivalent to (\ref{least SDPs}), while the SDP problem (\ref{minmax SDP}) is equivalent to (\ref{least SDPss}) obtained by exchanging minimization and maximization in problem (\ref{least SDPs}). 
Therefore, if the saddle point solution of (\ref{least SDPss}) can be constructed by the primal and dual optimal solutions of (\ref{minmax SDP}), then this saddle point solution is also an optimal solution to (\ref{least SDPs}), and thus the optimal solution of (\ref{minmaxG}) can be further obtained.

Therefore, we first consider the Lagrangian function of (\ref{minmax SDP}) denoted by
\begin{align*}
&\mathfrak{L}(Q,S,\Sigma_x,\Sigma_w,V_x,V_w,G_S,G_x,G_w,G_{vx},G_{vw},W,T_x,T_w,\alpha_x,\alpha_w)\\
&\begin{aligned}
=&-\mbox{Tr}(Q)-\mbox{Tr}(G_S^\top S)-\mbox{Tr}(G_x^\top \Sigma_{x})
-\mbox{Tr}(G_w^\top \Sigma_w)-\mbox{Tr}(G_{vx}^\top V_x)-\mbox{Tr}(G_{vw}^\top V_w)\\
&-\mbox{Tr}\left[ W^\top \left( \begin{bmatrix} I_n & {\bf 0} \\H & I_m \end{bmatrix} 
\left( 
\begin{bmatrix} \Sigma_x & {\bf 0} \\ {\bf 0} & \Sigma_w \end{bmatrix}+S
\right) 
\begin{bmatrix} I_n & H^\top \\ {\bf 0} & I_m \end{bmatrix}-\begin{bmatrix}
Q & {\bf 0} \\ {\bf 0} & {\bf 0}
\end{bmatrix} \right) \right] \\
&-\mbox{Tr}\left(T_x^\top \begin{bmatrix}
\hat{\Sigma}^\frac{1}{2}_x \Sigma_x \hat{\Sigma}^\frac{1}{2}_x & V_x \\
V_x & I_n \end{bmatrix} \right) 
-\mbox{Tr}\left(T_w^\top \begin{bmatrix}
\hat{\Sigma}^\frac{1}{2}_w \Sigma_w \hat{\Sigma}^\frac{1}{2}_w & V_w \\
V_w & I_m \end{bmatrix} \right)\\
&+\alpha_x \left[\mbox{Tr} \left(\begin{bmatrix}
I_n & \boldsymbol{0} \\ \boldsymbol{0} & \boldsymbol{0} 
\end{bmatrix} S \right)
+\mbox{Tr}\left( \Sigma_x + \hat{\Sigma}_x - 2 V_x \right)
-\rho_x^2 \right]\\
&+\alpha_w \left[ \mbox{Tr} \left(\begin{bmatrix}
\boldsymbol{0} & \boldsymbol{0} \\ \boldsymbol{0} &I_m
\end{bmatrix} S \right)
+\mbox{Tr}\left(\Sigma_w + \hat{\Sigma}_w - 2 V_w\right)
-\rho_w^2 \right], 
\end{aligned} 
\end{align*}
where the dual variables $G_x,G_{vx} \in \mathbb{S}_+^n$; $G_w,G_{vw} \in \mathbb{S}_+^m$; $G_S,W \in \mathbb{S}_+^{m+n}$; $T_x \in \mathbb{S}_+^{2n}$; $T_w \in \mathbb{S}_+^{2m}$ and 
$\alpha_x,\alpha_w \in \mathbb{R}_+$.
Since (\ref{minmax SDP}) is a convex optimization problem, its optimal solution must stasify KKT system, i.e.,
\begin{subequations} 
\begin{align}
&\bigtriangledown \mathfrak{L}_Q=-I_n+W_{11}={\bf 0},\label{KKT1} \\
&\bigtriangledown \mathfrak{L}_S=-G_S-\begin{bmatrix} I_n & H^\top \\ {\bf 0} & I_m \end{bmatrix} W
\begin{bmatrix} I_n & {\bf 0} \\ H & I_m \end{bmatrix}+\alpha_x \begin{bmatrix} I_n & \boldsymbol{0} \\ \boldsymbol{0} & \boldsymbol{0} 
\end{bmatrix}+\alpha_w \begin{bmatrix}
\boldsymbol{0} & \boldsymbol{0} \\ \boldsymbol{0} &I_m
\end{bmatrix}={\bf 0}, \label{KKT2}\\
&\bigtriangledown \mathfrak{L}_{\Sigma_x}=-G_x-\left( W_{11}+W_{12} H+H^\top W_{12}^\top+H^\top W_{22} H\right) -
\hat{\Sigma}_x^\frac{1}{2} T^{11}_x \hat{\Sigma}_x^\frac{1}{2}+
\alpha_x I_n={\bf 0}, \label{KKT3}\\
&\bigtriangledown \mathfrak{L}_{\Sigma_w}=
-G_w-W_{22}-\hat{\Sigma}_w^\frac{1}{2} T^{11}_w \hat{\Sigma}_w^\frac{1}{2}+\alpha_w I_m={\bf 0},\label{KKT4}\\
&\bigtriangledown \mathfrak{L}_{V_x}=
-G_{vx}-T^{12}_x-\left( T^{12}_x\right) ^\top-2\alpha_x I_n={\bf 0},\label{KKT5}\\
&\bigtriangledown \mathfrak{L}_{V_w}=
-G_{vw}-T^{12}_w-\left( T^{12}_w\right) ^\top-2\alpha_w I_m={\bf 0},\label{KKT6} \\
&{\bf 0} \preceq G_S \perp S \succeq {\bf 0}, \label{KKT7} \\
&{\bf 0} \preceq G_x \perp \Sigma_x \succeq {\bf 0}, \label{KKT8} \\
&{\bf 0} \preceq G_w \perp \Sigma_w \succeq {\bf 0}, \label{KKT9} \\
&{\bf 0} \preceq G_{vx} \perp V_x \succeq {\bf 0},\label{KKT10} \\
&{\bf 0} \preceq G_{vw} \perp V_w \succeq {\bf 0},\label{KKT11} \\
&{\bf 0} \preceq W \perp \left( 
\begin{bmatrix} I_n & {\bf 0} \\H & I_m \end{bmatrix} 
\left( \begin{bmatrix} \Sigma_x & {\bf 0} \\ {\bf 0} & \Sigma_w \end{bmatrix}+S \right) 
\begin{bmatrix} I_n & H^\top \\ {\bf 0} & I_m \end{bmatrix}-\begin{bmatrix}
Q & {\bf 0} \\ {\bf 0} & {\bf 0}
\end{bmatrix} \right)  \succeq {\bf 0}, \label{KKT12}\\
&{\bf 0} \preceq T_x \perp \begin{bmatrix}
\hat{\Sigma}^\frac{1}{2}_x \Sigma_x \hat{\Sigma}^\frac{1}{2}_x & V_x \\
V_x & I_n \end{bmatrix} \succeq {\bf 0}, \label{KKT13} \\
&{\bf 0} \preceq T_w \perp \begin{bmatrix}
\hat{\Sigma}^\frac{1}{2}_w \Sigma_w \hat{\Sigma}^\frac{1}{2}_w & V_w \\
V_w & I_m \end{bmatrix} \succeq {\bf 0}, \label{KKT14} \\
&{\bf 0} \preceq \alpha_x \perp \left[\rho_x^2-\mbox{Tr} \left(\begin{bmatrix}
I_n & \boldsymbol{0} \\ \boldsymbol{0} & \boldsymbol{0} 
\end{bmatrix} S \right) 
-\mbox{Tr}\left( \Sigma_x + \hat{\Sigma}_x - 2 V_x \right) \right] \succeq {\bf 0}, \label{KKT15} \\
&{\bf 0} \preceq \alpha_w \perp \left[\rho_w^2-\mbox{Tr} \left(\begin{bmatrix}
\boldsymbol{0} & \boldsymbol{0} \\ \boldsymbol{0} &I_m
\end{bmatrix} S \right)
-\mbox{Tr}\left(\Sigma_w + \hat{\Sigma}_w - 2 V_w\right) \right] \succeq {\bf 0}, \label{KKT16}
\end{align} \label{KKTs}
\end{subequations}
where some dual variables are appropriately partitioned as
$$W=\begin{bmatrix} W_{11} & W_{12} \\ W_{12}^\top & W_{22} \end{bmatrix}, T_x=\begin{bmatrix}
T^{11}_x & T^{12}_x \\ (T^{12}_x)^\top & T^{22}_x
\end{bmatrix}, T_w=\begin{bmatrix}
T^{11}_w & T^{12}_w \\ (T^{12}_w)^\top & T^{22}_w
\end{bmatrix},$$
with the first diagonal submatrices $W_{11}, T^{11}_x \in \mathbb{S}_+^n$, $T^{11}_w \in \mathbb{S}_+^m$. 
Then we can formulate the saddle point solution of (\ref{least SDPss}) in the following theorem.

\begin{theorem}
	Assuming that $(Q^*,S^*,\Sigma_x^*,\Sigma_w^*,V_x^*,V_w^*,G_S^*,G_x^*,G_w^*,G_{vx}^*,G_{vw}^*,W^*,T_x^*,T_w^*,\alpha_x^*,\alpha_w^*)$ is a solution of KKT system (\ref{KKTs}), then $A^*=-W_{12}^*$ and $(S^*,\Sigma_{x}^*,\Sigma_w^*)$ constitute a saddle point solution of (\ref{least SDPss}).
\end{theorem}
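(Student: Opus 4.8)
The plan is to verify directly the two saddle inequalities defining a saddle point solution of (\ref{least SDPss}). Write $\Phi(A,S,\Sigma_x,\Sigma_w)$ for the objective of (\ref{least SDPss}); I want to show that, with $A^*=-W_{12}^*$,
\[
\Phi(A^*,S,\Sigma_x,\Sigma_w)\ \le\ \Phi(A^*,S^*,\Sigma_x^*,\Sigma_w^*)\ \le\ \Phi(A,S^*,\Sigma_x^*,\Sigma_w^*)
\]
for every feasible $(S,\Sigma_x,\Sigma_w)$ of (\ref{least SDPss}) and every $A$. I will use two facts recalled in the proof of Theorem \ref{minmax eq}: with $R^*\triangleq\begin{bmatrix}I_n&{\bf 0}\\H&I_m\end{bmatrix}\left(\begin{bmatrix}\Sigma_x^*&{\bf 0}\\{\bf 0}&\Sigma_w^*\end{bmatrix}+S^*\right)\begin{bmatrix}I_n&H^\top\\{\bf 0}&I_m\end{bmatrix}=\begin{bmatrix}R_{11}^*&R_{12}^*\\(R_{12}^*)^\top&R_{22}^*\end{bmatrix}$, the map $A\mapsto\Phi(A,S^*,\Sigma_x^*,\Sigma_w^*)=\mbox{Tr}\big(\begin{bmatrix}I_n&-A\end{bmatrix}R^*\begin{bmatrix}I_n&-A\end{bmatrix}^\top\big)$ is convex quadratic, and $\Phi(A^*,\cdot)$ is affine in $(S,\Sigma_x,\Sigma_w)$ with $\Sigma_x$-, $\Sigma_w$- and $S$-gradients $(A^*H-I_n)^\top(A^*H-I_n)$, $(A^*)^\top A^*$ and $\begin{bmatrix}(A^*H-I_n)^\top\\(A^*)^\top\end{bmatrix}\begin{bmatrix}A^*H-I_n&A^*\end{bmatrix}$.

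For the right inequality: (\ref{KKT1}) gives $W_{11}^*=I_n$, and (\ref{KKT12}) says the positive semidefinite matrices $W^*$ and $R^*-\begin{bmatrix}Q^*&{\bf 0}\\{\bf 0}&{\bf 0}\end{bmatrix}$ have zero inner product, hence their product vanishes; reading off its $(1,2)$-block gives $R_{12}^*+W_{12}^*R_{22}^*={\bf 0}$, i.e. $A^*R_{22}^*=R_{12}^*$. Thus $A^*$ is a stationary point of the convex quadratic $\Phi(\cdot,S^*,\Sigma_x^*,\Sigma_w^*)$, hence a global minimizer.

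For the left inequality I would build a KKT certificate. Maximizing the affine map $\Phi(A^*,\cdot)$ over the convex feasible set of (\ref{least SDPss}) becomes, after introducing $V_x\succeq{\bf 0}$, $V_w\succeq{\bf 0}$ exactly as in the passage from (\ref{minmax SDPs}) to (\ref{minmax SDP}), a convex SDP whose feasible set projects onto that of (\ref{least SDPss}); I claim $(S^*,\Sigma_x^*,\Sigma_w^*,V_x^*,V_w^*)$ solves it. Set $E\triangleq W_{22}^*-(A^*)^\top A^*$; the Schur complement of $W^*\succeq{\bf 0}$ with $W_{11}^*=I_n$ gives $E\succeq{\bf 0}$, and the $(2,2)$-block of the vanishing product above, combined with $A^*R_{22}^*=R_{12}^*$, gives $ER_{22}^*={\bf 0}$, so $\mbox{Range}(E)\subseteq\mbox{Null}(R_{22}^*)$. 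Since $R_{22}^*=H\Sigma_x^*H^\top+\Sigma_w^*+\begin{bmatrix}H&I_m\end{bmatrix}S^*\begin{bmatrix}H^\top\\I_m\end{bmatrix}$ is a sum of three positive semidefinite matrices, $E$ annihilates each of $H\Sigma_x^*H^\top$, $\Sigma_w^*$ and $\begin{bmatrix}H&I_m\end{bmatrix}S^*\begin{bmatrix}H^\top\\I_m\end{bmatrix}$. Now keep $G_{vx}^*,G_{vw}^*,T_x^*,T_w^*,\alpha_x^*,\alpha_w^*$ and set $\widetilde G_x\triangleq G_x^*+H^\top EH$, $\widetilde G_w\triangleq G_w^*+E$, $\widetilde G_S\triangleq G_S^*+\begin{bmatrix}H^\top\\I_m\end{bmatrix}E\begin{bmatrix}H&I_m\end{bmatrix}$. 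Writing $\widehat W\triangleq\begin{bmatrix}I_n&-A^*\\-(A^*)^\top&(A^*)^\top A^*\end{bmatrix}$, so that $W^*=\widehat W+\begin{bmatrix}{\bf 0}&{\bf 0}\\{\bf 0}&E\end{bmatrix}$, the elementary identity $\begin{bmatrix}I_n&H^\top\\{\bf 0}&I_m\end{bmatrix}\widehat W\begin{bmatrix}I_n&{\bf 0}\\H&I_m\end{bmatrix}=\begin{bmatrix}(A^*H-I_n)^\top\\(A^*)^\top\end{bmatrix}\begin{bmatrix}A^*H-I_n&A^*\end{bmatrix}$ (together with its $(1,1)$- and $(2,2)$-blocks) turns the stationarity relations (\ref{KKT2})--(\ref{KKT6}) into exactly those of the reformulated SDP; $\widetilde G_x,\widetilde G_w,\widetilde G_S\succeq{\bf 0}$ follow from $E\succeq{\bf 0}$ and $G_x^*,G_w^*,G_S^*\succeq{\bf 0}$; and complementary slackness is preserved because the annihilation property gives $\mbox{Tr}(EH\Sigma_x^*H^\top)=\mbox{Tr}(E\Sigma_w^*)=\mbox{Tr}\big(E\begin{bmatrix}H&I_m\end{bmatrix}S^*\begin{bmatrix}H^\top\\I_m\end{bmatrix}\big)=0$, the remaining conditions coinciding with (\ref{KKT7}), (\ref{KKT10}), (\ref{KKT11}), (\ref{KKT13})--(\ref{KKT16}). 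By convexity this KKT system is sufficient for optimality, so $(S^*,\Sigma_x^*,\Sigma_w^*)$ maximizes $\Phi(A^*,\cdot)$; combined with the previous paragraph this gives that $A^*=-W_{12}^*$ and $(S^*,\Sigma_x^*,\Sigma_w^*)$ form a saddle point solution of (\ref{least SDPss}).

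The step I expect to be the main obstacle is the bookkeeping in the last paragraph: the multiplier $W^*$ of the Schur-complement constraint in (\ref{minmax SDP}) need not equal the rank-$n$ matrix $\widehat W$ that the reformulated maximization problem calls for; they differ by $\begin{bmatrix}{\bf 0}&{\bf 0}\\{\bf 0}&E\end{bmatrix}$, and one must show $E\succeq{\bf 0}$ and, crucially, that $E$ is supported on $\mbox{Null}(R_{22}^*)$ so that it annihilates $\Sigma_w^*$, $H\Sigma_x^*H^\top$ and the lifted $S^*$, which is what keeps the corrected cone multipliers $\widetilde G_x,\widetilde G_w,\widetilde G_S$ both dual-feasible and complementary. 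This is precisely where the possible rank deficiency of $H\Sigma_x^*H^\top+\Sigma_w^*$ — the phenomenon motivating this whole section — must be handled with care.
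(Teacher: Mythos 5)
Your proposal is correct and follows essentially the same route as the paper's proof: you establish optimality of $A^*=-W_{12}^*$ from $W_{11}^*=I_n$ and the complementarity condition (\ref{KKT12}), and then certify optimality of $(S^*,\Sigma_x^*,\Sigma_w^*)$ for the inner maximization by shifting the excess $E=W_{22}^*-(W_{12}^*)^\top W_{12}^*$ into corrected multipliers $\widetilde G_S,\widetilde G_x,\widetilde G_w$, exactly the paper's $\hat G_S^*,\hat G_x^*,\hat G_w^*$, with the same annihilation argument $ER_{22}^*={\bf 0}$ preserving complementary slackness. The only (immaterial) difference is that you verify $A^*$ minimizes the inner quadratic via stationarity rather than by computing that its value equals $\mbox{Tr}(Q^*)$.
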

\begin{proof}
	First, we prove that $A^*=-W_{12}^*$ and $(S^*,\Sigma_{x}^*,\Sigma_w^*)$ constitute an optimal solution to (\ref{least SDPss}), i.e., 
	\begin{equation}
	\mbox{Tr}(Q^*)=\mbox{Tr} \left( \begin{bmatrix} I_n & W_{12}^* \end{bmatrix}
	\begin{bmatrix} I_n & {\bf 0} \\H & I_m \end{bmatrix} 
	\left( 
	\begin{bmatrix} \Sigma_x^* & {\bf 0} \\ {\bf 0} & \Sigma_w^* \end{bmatrix}+S^*
	\right) 
	\begin{bmatrix} I_n & H^\top \\ {\bf 0} & I_m \end{bmatrix} 
	\begin{bmatrix} I_n \\  (W_{12}^*)^\top  \end{bmatrix} \right).
	\label{T33.1} \end{equation}
	According to (\ref{KKT1}) and (\ref{KKT12}), we have
    \begin{equation}
	\begin{bmatrix} I_n & W_{12}^* \\ (W_{12}^*)^\top & W_{22}^* \end{bmatrix} \left(  
	\begin{bmatrix} I_n & {\bf 0} \\H & I_m \end{bmatrix} 
	\left( \begin{bmatrix} \Sigma_x^* & {\bf 0} \\ {\bf 0} & \Sigma_w^* \end{bmatrix}+S^* \right) 
	\begin{bmatrix} I_n & H^\top \\ {\bf 0} & I_m \end{bmatrix}-\begin{bmatrix}
	Q^* & {\bf 0} \\ {\bf 0} & {\bf 0}
	\end{bmatrix} \right) ={\bf 0}, 
	\label{T33.11} \end{equation}
	which implies that
	$$\begin{bmatrix} I_n & W_{12}^* \end{bmatrix} \left(  
	\begin{bmatrix} I_n & {\bf 0} \\H & I_m \end{bmatrix} 
	\left( \begin{bmatrix} \Sigma_x^* & {\bf 0} \\ {\bf 0} & \Sigma_w^* \end{bmatrix}+S^* \right) 
	\begin{bmatrix} I_n & H^\top \\ {\bf 0} & I_m \end{bmatrix}-\begin{bmatrix}
	Q^* & {\bf 0} \\ {\bf 0} & {\bf 0}
	\end{bmatrix} \right) ={\bf 0}. $$
	Then we derive
	\begin{align}
	\begin{bmatrix} I_n \\ (W_{12}^*)^\top \end{bmatrix}
	\begin{bmatrix} I_n & W_{12}^* \end{bmatrix}  
	\begin{bmatrix} I_n & {\bf 0} \\H & I_m \end{bmatrix} 
	\left( \begin{bmatrix} \Sigma_x^* & {\bf 0} \\ {\bf 0} & \Sigma_w^* \end{bmatrix}+S^* \right) 
	\begin{bmatrix} I_n & H^\top \\ {\bf 0} & I_m \end{bmatrix}
	=\begin{bmatrix} I_n \\ (W_{12}^*)^\top \end{bmatrix}
	\begin{bmatrix} I_n & W_{12}^* \end{bmatrix}
	\begin{bmatrix} Q^* & {\bf 0} \\ {\bf 0} & {\bf 0} \end{bmatrix}
	=\begin{bmatrix}
	Q^* & {\bf 0} \\ (W_{12}^*)^\top Q^* & {\bf 0}
	\end{bmatrix}. \label{T33.2}
    \end{align}
    Taking trace operation on both sides, equation (\ref{T33.1}) is proved.
    
    Subsequently, it is sufficient to prove that for given $A^*=-W_{12}^*$, $(S^*,\Sigma_{x}^*,\Sigma_w^*)$  is the maximizer of the inner maximization problem in (\ref{least SDPs}), which is equivalent to proving that $(S^*,\Sigma_{x}^*,\Sigma_w^*,V_x^*,V_w^*)$ is an optimal solution to the following problem
    \begin{align}
    \begin{array}{cl}
    \sup \limits_{S,\Sigma_x,\Sigma_w,V_x,V_w} 
    &\mbox{Tr}\left[\left(W_{12}^*H+I_n\right)\Sigma_x\left(W_{12}^*H+I_n\right)^\top
    +W_{12}^* \Sigma_w (W_{12}^*)^\top\right]+
    \mbox{Tr} \left(\begin{bmatrix} (W_{12}^*H+I_n)^\top \\  (W_{12}^*)^\top  \end{bmatrix}
    \begin{bmatrix} W_{12}^*H+I_n & W_{12}^* \end{bmatrix} S \right) \\
    \mbox{s.t.}& S \succeq{\bf 0}, \ \Sigma_x \succeq{\bf 0}, \ \Sigma_w  \succeq{\bf 0}, \ V_x \succeq{\bf 0},\ V_w \succeq{\bf 0}\\
    &\left[\begin{array} {cc}
    \hat{\Sigma}_x^{\frac{1}{2}} \Sigma_x \hat{\Sigma}_x^{\frac{1}{2}} &
    V_x \\ V_x &I_n
    \end{array}
    \right]\succeq {\bf 0}, \quad
    \left[\begin{array} {cc}
    \hat{\Sigma}_w^{\frac{1}{2}} \Sigma_w \hat{\Sigma}_w^{\frac{1}{2}} &
    V_w \\ V_w &I_m
    \end{array}
    \right]\succeq {\bf 0},\\
    & \mbox{Tr} \left(\begin{bmatrix}
    I_n & \boldsymbol{0} \\ \boldsymbol{0} & \boldsymbol{0} 
    \end{bmatrix} S \right) 
    +\mbox{Tr}\left( \Sigma_x + \hat{\Sigma}_x - 2 V_x \right) 
    \leq \rho_x^2 ,\\
    &\mbox{Tr} \left(\begin{bmatrix}
    \boldsymbol{0} & \boldsymbol{0} \\ \boldsymbol{0} &I_m
    \end{bmatrix} S \right)
    +\mbox{Tr}\left(\Sigma_w + \hat{\Sigma}_w - 2 V_w\right)  \leq \rho_w^2.
    \end{array} \label{T33.3}
    \end{align}
    Let the Lagrangian function of (\ref{T33.3}) be
    \begin{align*}
    &\bar{\mathfrak{L}}(S,\Sigma_x,\Sigma_w,V_x,V_w,\hat{G}_S,\hat{G}_x,\hat{G}_w,\hat{G}_{vx},\hat{G}_{vw},\hat{T}_x,\hat{T}_w,\hat{\alpha}_x,\hat{\alpha}_w)\\
    &\begin{aligned}
    =&-\mbox{Tr}\left[\left(W_{12}^*H+I_n\right)\Sigma_x\left(W_{12}^*H
    +I_n\right)^\top+W_{12}^* \Sigma_w \left( W_{12}^*\right) ^\top\right]-
    \mbox{Tr} \left(\begin{bmatrix} (W_{12}^*H+I_n)^\top \\  (W_{12}^*)^\top  \end{bmatrix}
    \begin{bmatrix} W_{12}^*H+I_n & W_{12}^* \end{bmatrix} S \right)\\
    &-\mbox{Tr}\left( \hat{G}_S^\top S\right) 
    -\mbox{Tr}\left( \hat{G}_x^\top \Sigma_{x}\right) 
    -\mbox{Tr}\left( \hat{G}_w^\top \Sigma_w\right) 
    -\mbox{Tr}\left( \hat{G}_{vx}^\top V_x\right) 
    -\mbox{Tr}\left( \hat{G}_{vw}^\top V_w\right) \\
    &-\mbox{Tr}\left(\hat{T}_x^\top \begin{bmatrix}
    \hat{\Sigma}^\frac{1}{2}_x \Sigma_x \hat{\Sigma}^\frac{1}{2}_x & V_x \\
    V_x & I_n \end{bmatrix} \right) 
    -\mbox{Tr}\left(\hat{T}_w^\top \begin{bmatrix}
    \hat{\Sigma}^\frac{1}{2}_w \Sigma_w \hat{\Sigma}^\frac{1}{2}_w & V_w \\
    V_w & I_m \end{bmatrix} \right)\\
    &+\hat{\alpha}_x \left[\mbox{Tr} \left(\begin{bmatrix}
    I_n & \boldsymbol{0} \\ \boldsymbol{0} & \boldsymbol{0} 
    \end{bmatrix} S \right)
    +\mbox{Tr}\left( \Sigma_x + \hat{\Sigma}_x - 2 V_x \right)
    -\rho_x^2 \right]\\
    &+\hat{\alpha}_w \left[\mbox{Tr} \left(\begin{bmatrix}
    \boldsymbol{0} & \boldsymbol{0} \\ \boldsymbol{0} &I_m
    \end{bmatrix} S \right)
    +\mbox{Tr}\left(\Sigma_w + \hat{\Sigma}_w - 2 V_w\right)
    -\rho_w^2 \right], 
    \end{aligned} 
    \end{align*}
    where the dual variables $\hat{G}_S \in \mathbb{S}_+^{m+n}$; $\hat{G}_x,\hat{G}_{vx} \in \mathbb{S}_+^n$; $\hat{G}_w,\hat{G}_{vw} \in \mathbb{S}_+^m$; $\hat{T}_x \in \mathbb{S}_+^{2n}$; $\hat{T}_w \in \mathbb{S}_+^{2m}$ and 
    $\hat{\alpha}_x,\hat{\alpha}_w \in \mathbb{R}_+$.
    Since (\ref{T33.3}) is a convex optimization problem, its optimal solution stasifies the KKT conditions, i.e.,
    \begin{subequations}
    \begin{align}
    &\bigtriangledown \bar{\mathfrak{L}}_{S}=-\hat{G}_S-
    \begin{bmatrix} I_n & H^\top \\ {\bf 0} & I_m \end{bmatrix}
    \begin{bmatrix} I_n \\ (W_{12}^*)^\top \end{bmatrix}
    \begin{bmatrix} I_n & W_{12}^* \end{bmatrix}
    \begin{bmatrix} I_n & {\bf 0} \\ H & I_m \end{bmatrix}+\alpha_x \begin{bmatrix} I_n & \boldsymbol{0} \\ \boldsymbol{0} & \boldsymbol{0} 
    \end{bmatrix}+\alpha_w \begin{bmatrix}
    \boldsymbol{0} & \boldsymbol{0} \\ \boldsymbol{0} &I_m
    \end{bmatrix}={\bf 0}, \label{KKTs1}\\
    &\bigtriangledown \bar{\mathfrak{L}}_{\Sigma_x}=-\hat{G}_x-\left[ I_n+W_{12} H+H^\top \left( W_{12}^*\right) ^\top+H^\top \left( W_{12}^*\right) ^\top W_{12}^* H\right] -
    \hat{\Sigma}_x^\frac{1}{2} \hat{T}^{11}_x \hat{\Sigma}_x^\frac{1}{2}+
    \hat{\alpha}_x I_n={\bf 0}, \label{KKTs2}\\
    &\bigtriangledown \bar{\mathfrak{L}}_{\Sigma_w}=
    -\hat{G}_w-\left( W_{12}^*\right) ^\top W_{12}^*-\hat{\Sigma}_w^\frac{1}{2} \hat{T}^{11}_w \hat{\Sigma}_w^\frac{1}{2}+\hat{\alpha}_w I_m={\bf 0},\label{KKTs3}\\
    &\bigtriangledown \bar{\mathfrak{L}}_{V_x}=
    -\hat{G}_{vx}-\hat{T}^{12}_x-\left( \hat{T}^{12}_x\right) ^\top-2\hat{\alpha}_x I_n={\bf 0},\label{KKTs4}\\
    &\bigtriangledown \bar{\mathfrak{L}}_{V_w}=
    -\hat{G}_{vw}-\hat{T}^{12}_w-\left( \hat{T}^{12}_w\right) ^\top-2\hat{\alpha}_w I_m={\bf 0},\label{KKTs5} \\
    &{\bf 0} \preceq \hat{G}_S \perp S \succeq {\bf 0}, \label{KKTs6} \\
    &{\bf 0} \preceq \hat{G}_x \perp \Sigma_x \succeq {\bf 0}, \label{KKTs7} \\
    &{\bf 0} \preceq \hat{G}_w \perp \Sigma_w \succeq {\bf 0}, \label{KKTs8} \\
    &{\bf 0} \preceq \hat{G}_{vx} \perp V_x \succeq {\bf 0},\label{KKTs9} \\
    &{\bf 0} \preceq \hat{G}_{vw} \perp V_w \succeq {\bf 0},\label{KKTs10} \\
    &{\bf 0} \preceq \hat{T}_x \perp \begin{bmatrix}
    \hat{\Sigma}^\frac{1}{2}_x \Sigma_x \hat{\Sigma}^\frac{1}{2}_x & V_x \\
    V_x & I_n \end{bmatrix} \succeq {\bf 0}, \label{KKTs11} \\
    &{\bf 0} \preceq \hat{T}_w \perp \begin{bmatrix}
    \hat{\Sigma}^\frac{1}{2}_w \Sigma_w \hat{\Sigma}^\frac{1}{2}_w & V_w \\
    V_w & I_m \end{bmatrix} \succeq {\bf 0}, \label{KKTs12} \\
    &{\bf 0} \preceq \hat{\alpha}_x \perp \left[\rho_x^2-\mbox{Tr} \left(\begin{bmatrix}
    I_n & \boldsymbol{0} \\ \boldsymbol{0} & \boldsymbol{0} 
    \end{bmatrix} S \right)
    -\mbox{Tr}\left( \Sigma_x + \hat{\Sigma}_x - 2 V_x \right) \right] \succeq {\bf 0}, \label{KKTs13} \\
    &{\bf 0} \preceq \hat{\alpha}_w \perp \left[\rho_w^2-\mbox{Tr} \left(\begin{bmatrix}
    \boldsymbol{0} & \boldsymbol{0} \\ \boldsymbol{0} &I_m
    \end{bmatrix} S \right) 
    -\mbox{Tr}\left(\Sigma_w + \hat{\Sigma}_w - 2 V_w\right) \right] \succeq {\bf 0}, \label{KKTs14}
    \end{align} \label{KKTss}
    \end{subequations}
where some dual variables are appropriately partitioned as
$$\hat{T}_x=\begin{bmatrix}
\hat{T}^{11}_x & \hat{T}^{12}_x \\ \left( \hat{T}^{12}_x\right) ^\top & \hat{T}^{22}_x
\end{bmatrix}, \hat{T}_w=\begin{bmatrix}
\hat{T}^{11}_w & \hat{T}^{12}_w \\ \left( \hat{T}^{12}_w\right) ^\top & \hat{T}^{22}_w
\end{bmatrix},$$
with the first diagonal submatrices $\hat{T}^{11}_x \in \mathbb{S}_+^n$ and  $\hat{T}^{11}_w \in \mathbb{S}_+^m$. 

Then we intend to prove that $(S^*,\Sigma_x^*,\Sigma_w^*,V^*_x,V^*_w,\hat{G}^*_S,\hat{G}^*_x,\hat{G}^*_w,\hat{G}^*_{vx},\hat{G}^*_{vw},\hat{T}^*_x,\hat{T}^*_w,\hat{\alpha}^*_x,\hat{\alpha}^*_w)$ 
stasifies the KKT system (\ref{KKTss}), where
\begin{subequations}
\begin{align*}
&\hat{G}^*_S=G_S^*+\begin{bmatrix} I_n & H^\top \\ {\bf 0} & I_m \end{bmatrix}
\begin{bmatrix} {\bf 0} & {\bf 0} \\ {\bf 0} & 
W_{22}^*-(W_{12}^*)^\top W_{12}^* \end{bmatrix}
\begin{bmatrix} I_n & {\bf 0} \\ H & I_m \end{bmatrix},\\
&\hat{G}^*_x=G_x^*+H^\top \left[ W_{22}^*-\left( W_{12}^*\right) ^\top W_{12}^* \right] H,\\
&\hat{G}^*_w=G_w^*+W_{22}^*-(W_{12}^*)^\top W_{12}^*,\\
&\hat{G}^*_{vx}=G^*_{vx}, \ \hat{G}^*_{vw}=G^*_{vw}, \
\hat{T}^*_x=T_x^*, \ \hat{T}^*_w=T_w^*, \
\hat{\alpha}^*_x=\alpha_x^*,\ \hat{\alpha}^*_w=\alpha_w^*.
\end{align*} 
\end{subequations}
It is obvious that (\ref{KKTs1}-\ref{KKTs5}) and (\ref{KKTs9}-\ref{KKTs14}) hold.
Since $W^* \succeq {\bf 0}$, it holds that $W_{22}^*-(W_{12}^*)^\top W_{12}^* \succeq {\bf 0}$ by schur complement theorem. 
Then due to the positive semidefiniteness of $G_S^*$, $G_X^*$ and $G_w^*$, we obtain that $\hat{G}^*_S$, $\hat{G}^*_x$ and $\hat{G}^*_w$ are also positive semidefinite. 
Therefore, it only remains for us to demonstrate that $\hat{G}^*_S S^*={\bf 0}$, $\hat{G}^*_x \Sigma_x^*={\bf 0}$ and $\hat{G}^*_w \Sigma_w^*={\bf 0}$.
According to (\ref{T33.11}) and (\ref{T33.2}), we have
\begin{equation*}
\begin{bmatrix} {\bf 0} & {\bf 0} \\ {\bf 0} & W_{22}^*-(W_{12}^*)^\top W_{12}^* \end{bmatrix} 
\begin{bmatrix} I_n & {\bf 0} \\H & I_m \end{bmatrix} 
\left( \begin{bmatrix} \Sigma_x^* & {\bf 0} \\ {\bf 0} & \Sigma_w^* \end{bmatrix}+S^* \right) 
\begin{bmatrix} I_n & H^\top \\ {\bf 0} & I_m \end{bmatrix} ={\bf 0},
\end{equation*}
which implies that
\begin{equation*}
\begin{bmatrix} I_n & H^\top \\ {\bf 0} & I_m \end{bmatrix} 
\begin{bmatrix} {\bf 0} & {\bf 0} \\ {\bf 0} & W_{22}^*-(W_{12}^*)^\top W_{12}^* \end{bmatrix} 
\begin{bmatrix} I_n & {\bf 0} \\H & I_m \end{bmatrix} 
\left( \begin{bmatrix} \Sigma_x^* & {\bf 0} \\ {\bf 0} & \Sigma_w^* \end{bmatrix}+S^* \right) ={\bf 0}.
\end{equation*}
Since the matrices 
$\begin{bmatrix} I_n & H^\top \\ {\bf 0} & I_m \end{bmatrix} 
\begin{bmatrix} {\bf 0} & {\bf 0} \\ {\bf 0} & W_{22}^*-(W_{12}^*)^\top W_{12}^* \end{bmatrix} 
\begin{bmatrix} I_n & {\bf 0} \\H & I_m \end{bmatrix} $,
$\begin{bmatrix} \Sigma_x^* & {\bf 0} \\ {\bf 0} & \Sigma_w^* \end{bmatrix}$ and $S^*$ are positive semidefinite, we have
\begin{equation}
\begin{bmatrix} I_n & H^\top \\ {\bf 0} & I_m \end{bmatrix} 
\begin{bmatrix} {\bf 0} & {\bf 0} \\ {\bf 0} & W_{22}^*-(W_{12}^*)^\top W_{12}^* \end{bmatrix} 
\begin{bmatrix} I_n & {\bf 0} \\H & I_m \end{bmatrix}
\begin{bmatrix} \Sigma_x^* & {\bf 0} \\ {\bf 0} & \Sigma_w^* \end{bmatrix}={\bf 0}, \label{T33.5} 
\end{equation}
and
\begin{equation}
\begin{bmatrix} I_n & H^\top \\ {\bf 0} & I_m \end{bmatrix} 
\begin{bmatrix} {\bf 0} & {\bf 0} \\ {\bf 0} & W_{22}^*-(W_{12}^*)^\top W_{12}^* \end{bmatrix} 
\begin{bmatrix} I_n & {\bf 0} \\H & I_m \end{bmatrix}
S^*={\bf 0}, \label{T33.6}
\end{equation}
which follows from the fact that for any matrices $A,B,C \succeq {\bf 0}$,  the equation $A(B+C)={\bf 0}$ implies $AB={\bf 0}$ and $AC={\bf 0}$ and the fact is a direct consequence of Fact 10.14.5 in \cite{bernstein2018scalar}.
%In accordance with the diagonal blocks of the left matrix being zero matrices in (\ref{T33.5}), we have
In accordance with equation (\ref{T33.5}), we have
\begin{equation}
H^\top \left[ W_{22}^*-\left( W_{12}^*\right) ^\top W_{12}^* \right] H \Sigma_x^*={\bf 0} \label{T33.7} \end{equation}
and
\begin{equation}
\left[ W_{22}^*-\left( W_{12}^*\right) ^\top W_{12}^*\right] \Sigma_w^*={\bf 0}. \label{T33.8} \end{equation}
Thus, combining (\ref{T33.6}-\ref{T33.8}) with (\ref{KKT7}-\ref{KKT9}),
it follows that (\ref{KKT6}-\ref{KKT8}) hold.

Therefore, $(S^*,\Sigma_x^*,\Sigma_w^*,V^*_x,V^*_w,\hat{G}^*_S,\hat{G}^*_x,\hat{G}^*_w,\hat{G}^*_{vx},\hat{G}^*_{vw},\hat{T}^*_x,\hat{T}^*_w,\hat{\alpha}^*_x,\hat{\alpha}^*_w)$ 
stasifies the KKT system (\ref{KKTss}).
Then for given $A^*=-W_{12}^*$, $(S^*,\Sigma_{x}^*,\Sigma_w^*)$  is the maximizer of the inner maximization problem in (\ref{least SDPs}), which implies that $A^*=-W_{12}^*$ and $(S^*,\Sigma_{x}^*,\Sigma_w^*)$ constitute a saddle point solution of (\ref{least SDPss}).
\end{proof}

For given $A^*=-W_{12}^*$, $\Sigma_{x}^*$ and $\Sigma_{w}^*$, we can obtain a rank-one optimal solution 
$S^*=\begin{bmatrix} \tilde{\mu}_x^* \\ \tilde{\mu}_w^* \end{bmatrix}
\begin{bmatrix} (\tilde{\mu}_x^*)^\top & (\tilde{\mu}_w^*)^\top \end{bmatrix}$
to the inner maximization problem over $S$ in (\ref{least SDPs}) as outlined in the proof of Theorem 2.5 in \cite{ye2003new}.
Then $(A^*,\tilde{\mu}^*_x,\tilde{\mu}^*_w,\Sigma^*_x,\Sigma^*_w)$ is an optimal solution to (\ref{2minimax}), which further implies that 
$(A^*,-(A^*H-I_n)\hat{\mu}_x-A^* \hat{\mu}_w,\tilde{\mu}^*_x+\hat{\mu}_x,\tilde{\mu}^*_w+\hat{\mu}_w,\Sigma^*_x,\Sigma^*_w)$ is an optimal solution to (\ref{minmaxGG}).
That is, $f^*(y)=A^*y-(A^*H-I_n)\hat{\mu}_x-A^* \hat{\mu}_w$ and $P^*=\mathcal{N}(\tilde{\mu}^*_x+\hat{\mu}_x,\Sigma^*_x) \times \mathcal{N}(\tilde{\mu}^*_w+\hat{\mu}_w,\Sigma^*_w)$ constitute an optimal solution to (\ref{minmaxG}) and (\ref{infLsup}).

\section{Simulation}
In this section, we intend to verify the effectiveness of our theory  through numerical experiments.
All experiments are implemented in MATLAB R2024a on a PC with AMD Ryzen 7  9800X3D processors (4.7GHz) and 64 GB of RAM.
In all experiments, the SDP problems are numerically solved by SDPT3 solver through CVX interface \cite{CVX}, and all parameters are set to default values when solving optimization problems.

\subsection{The Nonexistence of the Saddle Point}
In the first experiment, we aim to illustrate that the saddle point may not exist in high-dimensional case when the parameter and noise dimensions are fixed to $n=m=d$.
We take the nominal mean vectors to be $\hat{\mu}_x=\hat{\mu}_w={\bf 0}$ and draw the elements of the observation matrix $H$ independently from the standard Gaussian distribution. 
The nominal covariance matrices $\hat{\Sigma}_x$ and $\hat{\Sigma}_w$ are constructed as follows: first, we sample the elements of matrices $Q_x$ and $Q_w$ independently from the standard Gaussian distribution and denote $R_x$ and $R_w$ the orthogonal matrices whose columns are the orthogonal eigenvectors of $Q_x+Q_x^\top$ and $Q_w+Q_w^\top$, respectively. 
Then we define $\hat{\Sigma}_x=R_x \Lambda_x R_x^\top$ and $\hat{\Sigma}_w=R_w \Lambda_w R_w^\top$, where $\Lambda_x$ and $\Lambda_w$ are diagonal with entries sampled uniformly from [1,5] and [1,2], respectively. 
Finally, we set the Wasserstein radiu of the parameter distribution to $\rho_x=3$ and vary the Wasserstein radiu of the noise distribution $\rho_w$ across the interval [1,10] with a stepsize of 0.5.

For a given $\rho_w$, we first calculate the optimal solution to (\ref{maxmina}), which is equal to that of the SDP problem (\ref{maxmin SDP}).
From the proof of Theorem \ref{snsdu}, the optimal solution to (\ref{maxmina}) denoted by $(f^*,P^*)$ is unique due to the positive definiteness of the nominal covariance matrices $\hat{\Sigma}_x$ and $\hat{\Sigma}_w$. 
Subsequently, the least favourable distribution corresponding to $f^*$ will be calculated by solving the problem
\begin{equation} \sup_P \mbox{mse}(f^*,P), \label{least} \end{equation}
where $f^*=A^* y+b^*$ and $b^*= (I_n-A^*H)\hat{\mu}_x - A^* \hat{\mu}_w={\bf 0}$.
Then (\ref{least}) can be parameterized as
\begin{equation} \label{least 2} 
\begin{aligned}
\sup_{\mu_x,\mu_w,\Sigma_x,\Sigma_w}
&\mbox{Tr}\left[\left( A^*H-I_n\right) \Sigma_x\left( A^*H-I_n\right) ^\top+A^*\Sigma_w \left( A^*\right) ^\top \right]+\left[ \left( A^*H-I_n\right) \mu_x+A^*\mu_w\right] ^\top 
\left[ \left( A^*H-I_n\right) \mu_x+A^*\mu_w \right]  \\
\mbox{s.t. } \
& \Sigma_x \succeq{\bf 0}, \quad \Sigma_w \succeq {\bf 0}, \\
&\left\| \mu_x \right\| ^2+\mbox{Tr} \left[ \Sigma_x+\hat{\Sigma}_x-
2 \left( \hat{\Sigma}_x^{\frac{1}{2}} \Sigma_x \hat{\Sigma}_x^{\frac{1}{2}}\right) ^{\frac{1}{2}}\right]  \leq \rho_x^2, \\
&\left\| \mu_w\right\| ^2+\mbox{Tr} \left[ \Sigma_w+\hat{\Sigma}_w-
2 \left( \hat{\Sigma}_w^{\frac{1}{2}} \Sigma_w \hat{\Sigma}_w^{\frac{1}{2}}\right) ^{\frac{1}{2}}\right] 
\leq \rho_w^2.
\end{aligned} \end{equation}
For given $\Sigma_x$ and $\Sigma_w$, the maximization problem over $(\mu_x,\mu_w)$ is a QCQP in which the two constraint functions and the objective function are all homogeneous quadratic.
Then the SDP relaxation of problem (\ref{least 2}) is
\begin{align}
\begin{array}{cl}
%\phi(A,b) \triangleq
\sup \limits_{S,\Sigma_x,\Sigma_w} 
&\mbox{Tr}\left[\left(A^*H-I_n\right)\Sigma_x\left(A^*H-I_n\right)^\top
+A^*\Sigma_w \left( A^*\right) ^\top\right]+
\mbox{Tr} \left(\begin{bmatrix} (A^*H-I_n)^\top \\  (A^*)^\top  \end{bmatrix}
\begin{bmatrix} A^*H-I_n & A^* \end{bmatrix} S \right) \\
\mbox{s.t.}
&\Sigma_x \succeq{\bf 0}, \quad \Sigma_w \succeq {\bf 0}, \\
&\mbox{Tr} \left(\begin{bmatrix}
I_n & \boldsymbol{0} \\ \boldsymbol{0} & \boldsymbol{0} 
\end{bmatrix} S \right)
+\mbox{Tr} \left[ \Sigma_x+\hat{\Sigma}_x-
2 \left( \hat{\Sigma}_x^{\frac{1}{2}} \Sigma_x \hat{\Sigma}_x^{\frac{1}{2}}\right) ^{\frac{1}{2}}\right]  \leq \rho_x^2, \\
&\mbox{Tr} \left(\begin{bmatrix}
\boldsymbol{0} & \boldsymbol{0} \\ \boldsymbol{0} &I_m
\end{bmatrix} S \right) 
+\mbox{Tr} \left[ \Sigma_w+\hat{\Sigma}_w-
2 \left( \hat{\Sigma}_w^{\frac{1}{2}} \Sigma_w \hat{\Sigma}_w^{\frac{1}{2}}\right) ^{\frac{1}{2}}\right] 
\leq \rho_w^2.
\end{array} \label{least SDP1}
\end{align}
According to Theorem 2.5 in \cite{ye2003new}, problem (\ref{least SDP1}) has a rank-one optimal solution, which implies that the SDP relaxation is tight.
Thus, combined with Proposition 2 in \cite{malago2018wasserstein}, introducing auxiliary variables $V_x \in \mathbb{S}_+^n$ and $V_w \in \mathbb{S}_+^m$, problem (\ref{least 2}) is equivalent to the following SDP problem 
\begin{align}\label{least SDP}
\begin{array}{cl}
%\phi(A,b) \triangleq
\sup \limits_{S,\Sigma_x,\Sigma_w,V_x,V_w} 
&\mbox{Tr}\left[\left(A^*H-I_n\right)\Sigma_x\left(A^*H-I_n\right)^\top
+A^*\Sigma_w \left( A^*\right) ^\top\right]+
\mbox{Tr} \left(\begin{bmatrix} (A^*H-I_n)^\top \\  (A^*)^\top  \end{bmatrix}
\begin{bmatrix} A^*H-I_n & A^* \end{bmatrix} S \right) \\
\mbox{s.t.}& S \succeq{\bf 0}, \ \Sigma_x \succeq{\bf 0}, \ \Sigma_w  \succeq{\bf 0}, \ V_x  \succeq{\bf 0}, \ V_w  \succeq{\bf 0},\\
&\left[\begin{array} {cc}
\hat{\Sigma}_x^{\frac{1}{2}} \Sigma_x \hat{\Sigma}_x^{\frac{1}{2}} &
V_x \\ V_x &I_n
\end{array}
\right]\succeq {\bf 0}, \quad
\left[\begin{array} {cc}
\hat{\Sigma}_w^{\frac{1}{2}} \Sigma_w \hat{\Sigma}_w^{\frac{1}{2}} &
V_w \\ V_w &I_m
\end{array}
\right]\succeq {\bf 0},\\
& \mbox{Tr} \left(\begin{bmatrix}
I_n & \boldsymbol{0} \\ \boldsymbol{0} & \boldsymbol{0} 
\end{bmatrix} S \right)
+\mbox{Tr}\left( \Sigma_x + \hat{\Sigma}_x - 2 V_x \right) 
\leq \rho_x^2 ,\\
&\mbox{Tr} \left(\begin{bmatrix}
\boldsymbol{0} & \boldsymbol{0} \\ \boldsymbol{0} &I_m
\end{bmatrix} S \right) 
+\mbox{Tr}\left(\Sigma_w + \hat{\Sigma}_w - 2 V_w\right)  \leq \rho_w^2.
\end{array}
\end{align}
We denote the least favorable distribution corresponding to $f^*$ obtained by (\ref{least SDP}) as $\tilde{P}$,
and then compare $\mbox{mse}(f^*,P^*)$ and $\mbox{mse}(f^*,\tilde{P})$ which are the optimal values of (\ref{maxmin SDP}) and (\ref{least SDP}), respectively.
If %$\mbox{mse}(f^*,P^*) \neq \mbox{mse}(f^*,\tilde{P})$, in fact, 
$\mbox{mse}(f^*,P^*) < \mbox{mse}(f^*,\tilde{P})$, it can be inferred that $P^*$ is not the least favorable distribution corresponding to $f^*$, which indicates that the saddle point solution of (\ref{minmax}) does not exist.

\begin{figure}[htbp]
	\centering
	\includegraphics[width=0.6\linewidth]{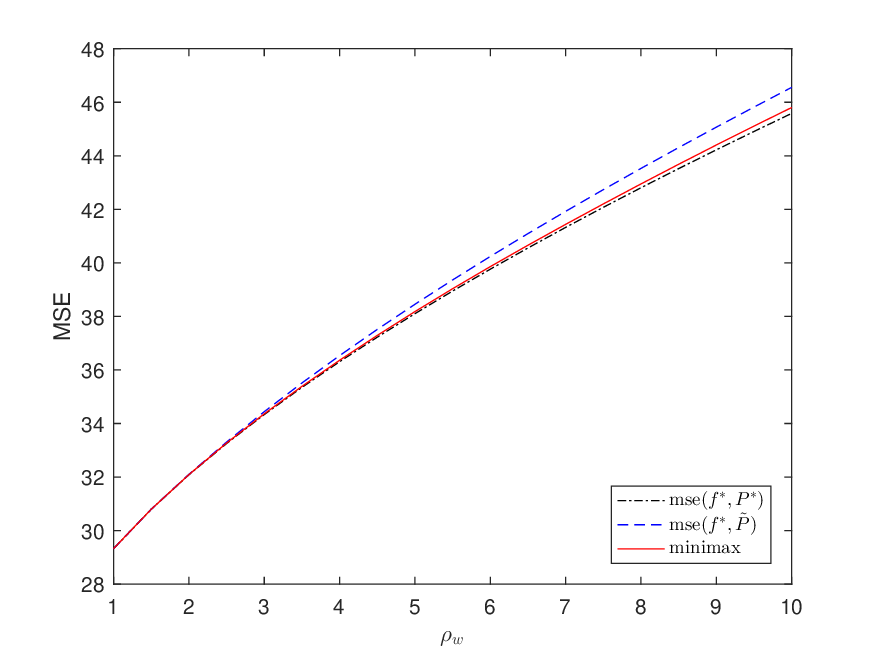}
	\caption{MSE vs $\rho_w$ when $\rho_x=3$ for $d=20$.}
\end{figure}

In Figure 5.1, $\mbox{mse}(f^*,P^*)$, $\mbox{mse}(f^*,\tilde{P})$ and the optimal value of (\ref{minmax SDP}), which is also the optimal value of the minimax problem (\ref{infLsup}), are plotted versus the quantile of the radiu $\rho_w$ for $d=20$.
It is evident that the saddle point exists when $\rho_w$ is sufficiently small. 
However, as $\rho_w$ gradually increases, $\mbox{mse}(f^*,P^*)$ is smaller than $\mbox{mse}(f^*,\tilde{P})$, which indicates that $P^*$ is no longer the least favorable distribution corresponding to $f^*$. 
Consequently, the saddle point solution of (\ref{minmax}) may not exist.
Moreover, irrespective of the existence of the saddle point, the optimal value of (\ref{infLsup}) is always greater than $\mbox{mse}(f^*,P^*)$ but less than $\mbox{mse}(f^*,\tilde{P})$, which is consistent with the theoretical result.
Furthermore, the optimal value of (\ref{infLsup}) and $\mbox{mse}(f^*,P^*)$ provide upper and lower bounds on the optimal value of the original problem (\ref{minmax}), respectively.

\subsection{The Validity of Sufficient Condition}
In the second experiment, we aim to verify the validity of the sufficient condition by comparing the lower bound on the existence of the saddle point determined by the sufficient condition in Theorem \ref{sd} with the actual bound determined by the necessary and sufficient condition in Theorem \ref{snsdu}.
The nominal mean vectors $\hat{\mu}_x$ and $\hat{\mu}_w$, nominal covariance matrices $\hat{\Sigma}_x$ and $\hat{\Sigma}_w$, and observation matrix $H$ are generated in the same way as in the first experiment. We vary the Wasserstein radiu of the parameter distribution $\rho_x$ across the interval [1,10] with a stepsize of 0.1 and calculate the lower bound determined by the sufficient condition in Theorem \ref{sd} through
$$ \rho_w^L=\frac{\lambda_{\min}^{\frac{1}{2}}\left( \hat{\Sigma}_x\right) 
	\lambda_{\min}^{\frac{1}{2}}\left(\hat{\Sigma}_w\right) }{\rho_x}.$$
In addition, for a given $\rho_x$, to determine the actual bound of the existence of the saddle point, we vary the Wasserstein radiu of the noise distribution $\rho_w$ across the interval [1,10] with a stepsize of 0.02.
For given $\rho_x$ and $\rho_w$, we slove (\ref{maxmin SDP}) and verify whether the matrix in Theorem \ref{snsdu} is negative semidefinite.
If the largest eigenvalue of the matrix is larger than 0, we assert that the saddle point solution of (\ref{minmax}) does not exist by Theorem \ref{snsdu} and record the smallest $\rho_w^*$ that satisfies this condition for each $\rho_x$.
Then the actual bound of the existence of the saddle point can be given by $\rho_w^A=\rho_w^*-0.02$.

\begin{figure}[htbp]
	\centering
	\includegraphics[width=0.6\linewidth]{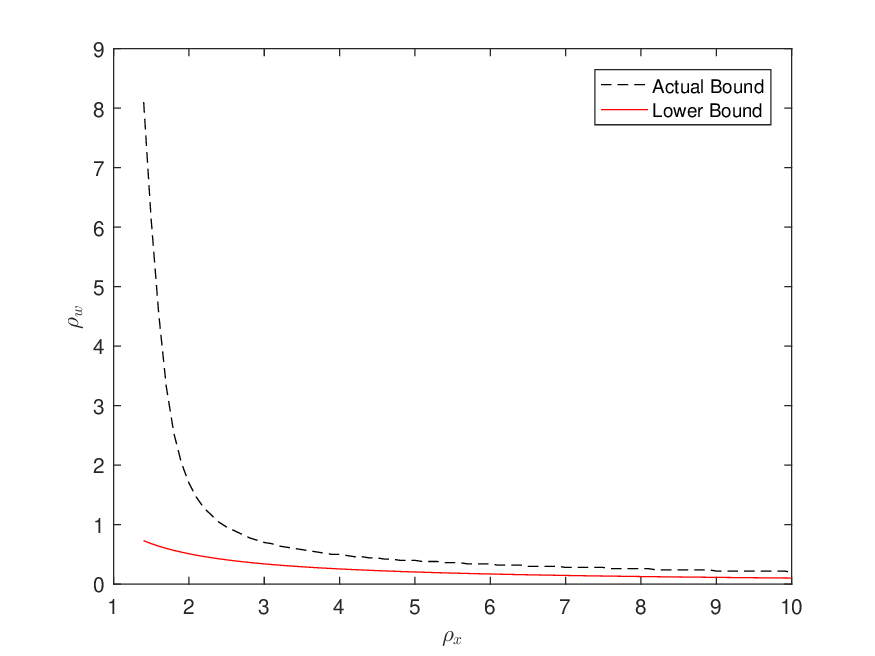}
	\caption{The actual bound and the lower bound for the existence of the saddle point.}
\end{figure}

In Figure 5.2, the lower bound calculated by the sufficient condition $\rho_w^L$ and the actual bound obtained by the traversal algorithm $\rho_w^A$ are plotted versus the quantile of the radiu $\rho_x$ for $d=20$. 
It is evident that the lower bound is consistently inferior to the actual bound. %and the gap narrows as $\rho_x$ increases.
As $\rho_x$ increases, the discrepancy between the two bounds decreases.

\subsection{The Robustness of the Robust Linear Estimator}
In the third experiment, we aim to verify the robustness of the linear estimator obtained from the upper bound problem (\ref{infLsup}) in Section 4.
In this experiment, we assume that the parameter and noise dimensions are equal, denoted by $n=m=d$. 
Furthermore, we take the observation matrix to be $H=I_d$.
Without loss of generality, the true mean vectors and the nominal mean vectors are all set to be zeros, i.e., $\mu_x=\mu_w=\hat{\mu}_x=\hat{\mu}_w={\bf 0}$.
The experiment comprises 3000 simulation runs. 
In each run, the true covariance matrices $\Sigma_x$ and $\Sigma_w$ is randomly generated in the same way as in the first experiment.
Then the nominal covariance matrices $\hat{\Sigma}_x$ and $\hat{\Sigma}_w$ are defined as the sample covariance matrices corresponding to 100 independent samples from $\mathcal{N}({\bf 0},\Sigma_x)$ and $\mathcal{N}({\bf 0},\Sigma_w)$.
%Finally, we set the Wasserstein radii of the uncertainty sets $\rho_x$ and $\rho_w$ to be the Wasserstein distances between the true distributions and the nominal distributions.
Finally, we design the Wasserstein radii of the uncertainty sets $\rho_x$ and $\rho_w$.
The simulation is repeated 1,000 times, with 100 samples drawn from the true distributions on each occasion. The Wasserstein distances between the sample covariance matrices and the true covariance matrices are then calculated and recorded. The Wasserstein distances obtained from the 1,000 simulations are sorted in ascending order, and the 0.95-quantile is taken as the Wasserstein radii $\rho_x$ and $\rho_w$ to ensure that the sample covariance matrices can be included in the uncertainty sets in most cases.

In this framework, two distributions are considered: the true distribution $P=\mathcal{N}({\bf 0},\Sigma_x) \times \mathcal{N}({\bf 0},\Sigma_w)$ and the nominal distribution $\hat{P}=\mathcal{N}({\bf 0},\hat{\Sigma}_x) \times \mathcal{N}({\bf 0},\hat{\Sigma}_w)$. 
This allows us to obtain three estimators: the optimal estimator of the nominal distribution $f^*(\hat{P})$, the optimal estimator of the true distribution $f^*(P)$, and the robust linear estimator $\tilde{f}^*$ which is proposed in Section 4. 
To verify the robustness of the estimator $\tilde{f}^*$, we can compare the relative mean square errors (RMSE) $\mbox{mse}(\tilde{f}^*,P)-\mbox{mse}(f^*(P),P)$ and 
$\mbox{mse}(f^*(\hat{P}),P)-\mbox{mse}(f^*(P),P)$. %which are the mean square errors of the robust estimator and the nominal optimal estimator under the true distribution minus the minimum mean square error of the true distribution.

Figure 5.3 presents the frequency histograms of the relative mean square errors of the robust linear estimator and the nominal optimal estimator in 3,000 repeated experiments for $d=10$ and $d=20$, respectively.
It is evident that the relative mean square error of the robust linear estimator is superior to that of the optimal estimator of the nominal distribution, and this advantage becomes increasingly apparent as the dimensions of the parameter and noise increase.

\begin{figure}[htbp]
	\centering
	\begin{minipage}{0.49\linewidth}
		\centering
		\includegraphics[scale=0.6]{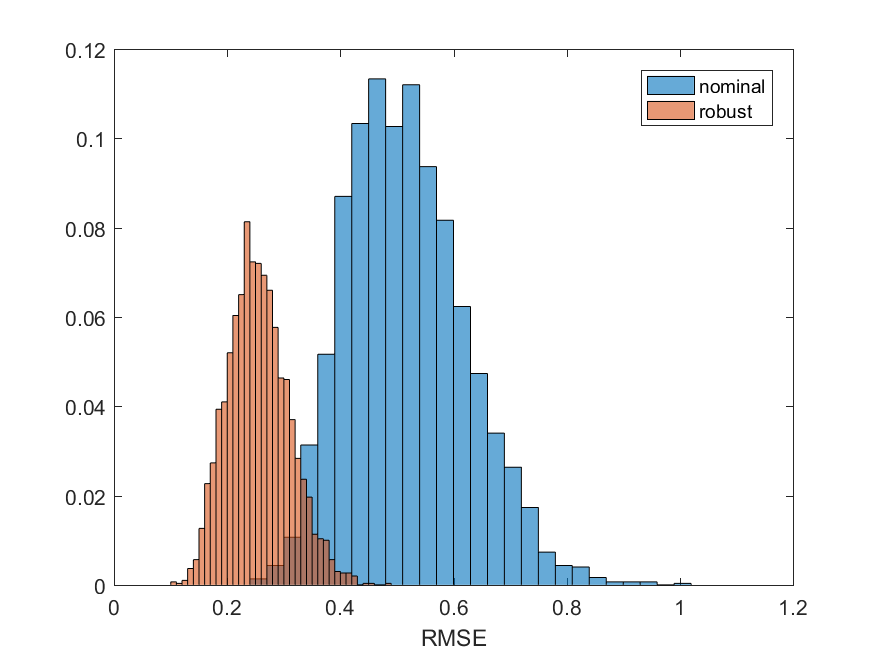}
		\centerline{(a) d=10}
	\end{minipage}
    \begin{minipage}{0.49\linewidth}
    	\centering
    	\includegraphics[scale=0.6]{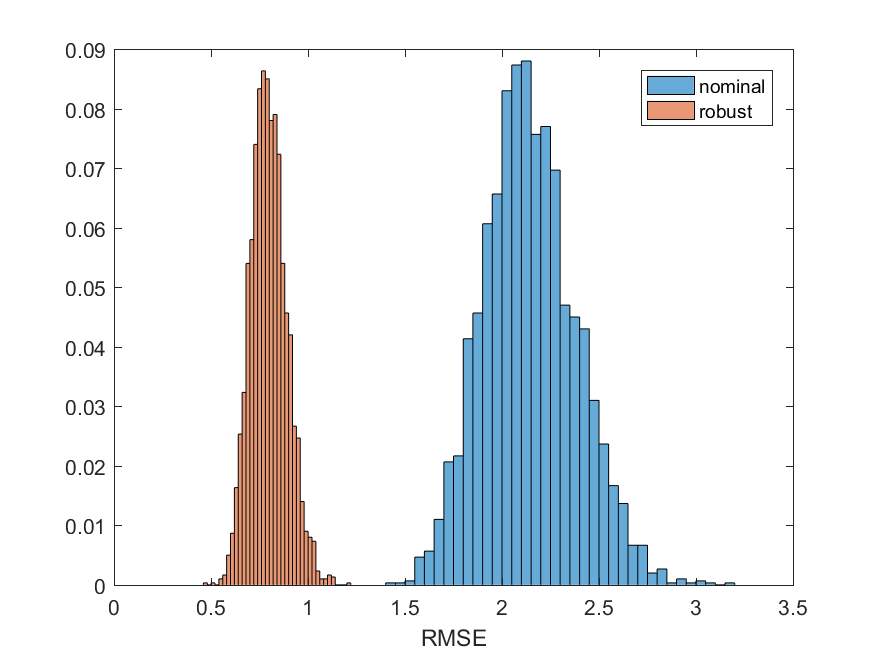}
    	\centerline{(b) d=20}
    \end{minipage}
    \caption{The frequency histograms of the RMSE.}
\end{figure}

\section{Conclusion}
In this paper, we consider a robust estimation problem in the linear measurement model with additive noise, where the parameter and noise are constrained by bounded Wasserstein-distance balls, respectively.
This robust estimation problem can be formulated as an infinite-dimensional nonconvex minimax problem whose saddle point may not exist.
By transforming the existence of its saddle point to that in a finite-dimensional minimax problem, we provide a verifiable necessary and sufficient condition and a simplified sufficient condition.
When a saddle point exists, the original infinite-dimensional minimax problem reduces to a SDP problem.
Conversely, when the saddle point is absent, the problem becomes intractable.
This fact motivates us to consider an upper-bound problem where the estimator is restricted to be linear.
By demonstrating the tightness of the SDP relaxation for the upper-bound problem, we prove that its optimal value coincides with that of a SDP problem. 
Furthermore, the optimal solution of this upper-bound problem is constructed and yields a robust linear estimator.

%%%%%%%%%%%%%%%%%%%%%%%%%%%%%%%%%%%%%%%%%%%%%%%%%%%%%%%%%%%%%
\section{Acknowledgements}
The authors would like to thank Prof. Zhi-Quan (Tom) Luo from The Chinese University of Hong Kong, Shenzhen, for the helpful discussions on this work.

\nocite{*}
\bibliographystyle{unsrt}
\bibliography{BibWDRE}
\end{document}